\let\newfloat\newfloat@ltx
\crefname{section}{Sec.}{Secs.}
\crefname{figure}{Fig.}{Figs.}
\crefname{appendix}{App.}{Apps.}
\crefname{equation}{Eq.}{Eqs.}
\crefname{definition}{Definition}{Defs.}
\crefname{theorem}{Theorem}{Thms.}
\crefname{corollary}{Corollary}{Cors.}
\crefname{remark}{Remark}{Remarks}
\crefname{proposition}{Prop.}{Props.}
\newcolumntype{L}[1]{>{\raggedright\let\newline\\\arraybackslash\hspace{0pt}}m{#1}}
\newcolumntype{C}[1]{>{\centering\let\newline\\\arraybackslash\hspace{0pt}}m{#1}}
\newcolumntype{R}[1]{>{\raggedleft\let\newline\\\arraybackslash\hspace{0pt}}m{#1}}
\let\scshape\relax % to avoid a warning
\DeclareRobustCommand\scshape{%
  \not@math@alphabet\scshape\relax
  \ifnum\pdf@strcmp{\f@family}{\familydefault}=\z@
    \fontfamily{qbk}%
  \fi
  \fontshape\scdefault\selectfont}
\DeclareMathAlphabet{\dutchcal}{U}{dutchcal}{m}{n}
\SetMathAlphabet{\dutchcal}{bold}{U}{dutchcal}{b}{n}
\DeclareMathAlphabet{\dutchbcal} {U}{dutchcal}{b}{n}
\newenvironment{sloppypar*}{\sloppy\ignorespaces}{\par}
\newcommand {\sbra} [1] {\langle #1 |}
\newcommand {\sket} [1] {| #1 \rangle}
\newcommand{\me}{\mathrm{e}}
\newcommand {\unit} {\mathds{1}}
\DeclareMathOperator*{\argmax}{argmax}
\newcommand{\n}{\mathsf{n}}
\newcommand{\nq}{\mathsf{n}_{\mathsf{q}}}
\newcommand{\loc}{\mathsf{k}}
\newcommand{\scale}{\mathsf{m}}
\newcommand{\gloc}{\mathsf{D}}
\newcommand{\NLat}{{\mathsf{N_\Lambda}}}
\newcommand{\NH}{\mathsf{N_H}}
\newcommand{\Nell}{\mathsf{N_\ell}}
\newcommand{\NP}{\mathsf{N_P}}
\newcommand{\NPJ}{\NP^{(J)}}
\newcommand{\Set}{S}
\newcommand{\ind}{J}
\newcommand{\Lat}{\Lambda}
\newcommand{\cP}{\mathsf{c_P}}
\newcommand{\OO}{\mathcal{O}}
\newcommand{\Nind}{\mathsf{N}_{\mathsf{ind}}}
\newcommand{\dims}{\mathsf{d}}
\newcommand{\eps}{\varepsilon}
\newcommand{\nterms}{\Gamma}
\newcommand{\stages}{\Upsilon}
\DeclareMathOperator{\polylog}{polylog}
\newcommand{\Hgate}{\mathrm{H}}
\newcommand{\Xgate}{\mathrm{X}}
\newcommand{\Zgate}{\mathrm{Z}}
\newcommand{\Nphi}{N_{\phi}}
\newtheorem{theorem}{Theorem}
\newtheorem{lemma}[theorem]{Lemma}
\theoremstyle{definition}
\newtheorem{definition}[theorem]{Definition}
\newtheorem{remark}[theorem]{Remark}
\newcommand{\PREPARE}{{\textsc{prep} }}
\newcommand{\SELECT}{{\textsc{sel} }}
\def\hlinewd#1{%
\noalign{\ifnum0=`}\fi\hrule \@height #1 \futurelet
\reserved@a\@xhline}
\newcommand{\hilb}{\dutchcal{h}}
\begin{document}
    
\title{Strategies for simulating time evolution of Hamiltonian lattice field theories}

\author{Siddharth Hariprakash$^*$}
\affiliation{Center for Theoretical Physics and Department of Physics, University of California, Berkeley, California 94720, USA}
\affiliation{Physics Division, Lawrence Berkeley National Laboratory, Berkeley, California 94720, USA}
\affiliation{National Energy Research Scientific Computing Center (NERSC), Lawrence Berkeley National Laboratory,
Berkeley, CA 94720, USA,
}

\author{Neel S. Modi$^*$}
\affiliation{Center for Theoretical Physics and Department of Physics, University of California, Berkeley, California 94720, USA}
\affiliation{Physics Division, Lawrence Berkeley National Laboratory, Berkeley, California 94720, USA}

\author{Michael Kreshchuk}
\affiliation{Physics Division, Lawrence Berkeley National Laboratory, Berkeley, California 94720, USA}

\author{Christopher F. Kane}
\affiliation{Department of Physics, University of Arizona, Tucson, Arizona 85719, USA}

\author{Christian W Bauer}
\affiliation{Physics Division, Lawrence Berkeley National Laboratory, Berkeley, California 94720, USA}

\begin{abstract}
Simulating the time evolution of quantum field theories given some Hamiltonian $H$ requires developing algorithms for implementing the  unitary operator $\me^{-iHt}$.
A variety of techniques exist that accomplish this task, with the most common technique used so far being Trotterization, which is a special case of the application of a product formula. 
However, other techniques exist that promise better asymptotic scaling in certain parameters of the theory being simulated, the most efficient of which are based on the concept of block encoding.

In this work we study the performance of such algorithms in simulating lattice field theories.
% involving bosons.
We derive and compare the asymptotic gate complexities of several commonly used simulation techniques in application to Hamiltonian Lattice Field Theories.
Using the scalar $\hat{\varphi}^4$ theory as a test, we also perform numerical studies and compare the gate costs required by Product Formulas and Signal Processing based techniques to simulate time evolution.
For the latter, we use the
the Linear Combination of Unitaries construction augmented with the Quantum Fourier Transform circuit to switch between the field and momentum eigenbases, which leads to immediate order-of-magnitude improvement in the cost of preparing the block encoding.

The paper also includes a pedagogical review of utilized techniques, in particular Product Formulas, LCU, Qubitization, QSP, as well as a technique we call HHKL based on its inventors' names.

\end{abstract}

\maketitle
\def\thefootnote{*}\footnotetext{These authors contributed equally to this work.}\def\thefootnote{\arabic{footnote}}

\tableofcontents

%%%%%%%%%%%%%%%%%%%%%%%%%%%%%%%%%%%

\section{Introduction\label{sec:introduction}}

Performing precise numerical simulations is an essential ingredient in the studies of Quantum Field Theories (QFTs), which describe the most fundamental interactions within the Standard Model of particle physics. 
Some of the difficulties arising in such studies include: a large, non-conserving number of degrees of freedom; different types of particles and their interactions; and non-perturbativity  of the strong interaction at large distances.
A common approach to numerical simulation of field theories is provided by Euclidean Lattice Field Theory, where space-time is discretized and observables are calculated using the imaginary-time path integral formalism. 
For a recent overview of the field, see~\cite{Davoudi:2022bnl}.
Such a conventional approach to lattice field theory relies on Monte-Carlo integration to perform the integral over the vast Hilbert space~\cite{Morningstar:2007zm}, which requires the weight of each field configuration to be real and positive. 
While this has been very successful, there are many observables, including those requiring dynamical information, for which Monte-Carlo integration is not possible due to the so-called sign problem~\cite{Loh:1990zz,Pan:2022fgf}.
The issue of the sign problem, however, can be avoided entirely by computing such observables within the framework of Hamiltonian Lattice Field Theories (HLFTs), in which space is discretized, while time is kept continuous~\cite{Kogut:1974ag,drell1979quantum,bardeen1976local,Bardeen:1979xx}.\footnote{\label{footnote:HLFT}Note that while HLFTs sometimes refer solely to quantum field theories in the equal-time formulation, discretized on a spatial lattice~\cite{Kogut:1974ag,drell1979quantum}, in this work we assume (unless explicitly stated otherwise) a more general scenario which applies to any type of spatial discretization (e.g., spatial/momentum lattice~\cite{Liu:2021otn} or any other single-particle basis set~\cite{Vary:2009gt}), any type of field discretization (first or second quantization~\cite{Klco:2018zqz,Liu:2021otn}), as well as to both equal-time and light-front quantization~\cite{bardeen1976local,Bardeen:1979xx,Pauli:1985pv,pauli1985discretized,Brodsky:1997de}.
}
While resources required for the simulation of HLFTs on classical computers scale exponentially with the number of lattice sites~\cite{Jordan:2017lea}, Jordan, Lee, and Preskill showed that quantum computers provide an exponential advantage over classical computers for this problem~\cite{Jordan:2012xnu}.
In fact, simulating QFT with the aid of quantum computers served as one of the motivations for the development of quantum computation~\cite{Feynman:1981tf}, and is still considered as one of its most promising applications. For a recent review of the applications of quantum computing to high-energy physics, see~\cite{Bauer:2022hpo}.

A major task in such a scenario is the ability to efficiently simulate on a quantum computer the time evolution of a given Hamiltonian.
In Ref.~\cite{Jordan:2011ci} Jordan, Lee, and Preskill  made a detailed proposal to simulate a scalar field theory on quantum computers.
Similarly to the earlier research~\cite{Lloyd:1996aai,zalka1998simulating,Wiesner:1996xg}, their work relies on the Suzuki-Trotter approximation~\cite{cohen1982eigenvalue,suzuki1976generalized,Suzuki:1991jtk,childs2019nearly,Childs:2019hts,Hatano:2005gh} to compute the exponential of the Hamiltonian, and hence time evolve the scalar field theory. 

This paper compares the asymptotic resource requirements of various techniques to implement the time evolution operator of quantum field theories, including Suzuki-Trotter expansions and nearly-optimal simulation strategies, such as Quantum Signal Processing (QSP)~\cite{Low:2016sck,Low:2016znh,Zeytinoglu:2022zda,Motlagh:2023oqc,Kikuchi:2023qbb} and the HHKL method named after the authors of Ref.~\cite{Haah:2018ekc}. For each method, we compute the asymptotic dependence with respect to various parameters characterizing the problem at hand, such as the error allowed in the approximation of the evolution operator, measures of Hamiltonian size / complexity, as well as the time for which the system is evolved. This comparison is carried out for two very general classes of lattice Hamiltonians, but we also provide more detailed results for a scalar quantum field theory as an explicit example.

Using nearly-optimal methods requires querying matrix elements of the Hamiltonian, typically in the form of the \emph{Block Encoding} (BE) subroutine.
In this work, we consider the BE for a bosonic degree of freedom, based on the Linear Combination of Unitaries (LCU) algorithm~\cite{Childs:2012gwh}, and achieve a significant reduction in the number of gates and ancillary qubits by taking advantage of the quantum Fourier transform for switching between the eigenbases of the field and conjugate momentum operators.
We do not consider BEs based on the sparse oracle access input model, for two reasons.
First, their construction is highly problem-specific~\cite{Toloui:2013xab,Babbush:2018ywg,Kreshchuk:2020dla,Kirby:2021ajp,rhodes2024exponential}. 
Second, for local lattice field theories, which we consider as one of the major applications of the present work, our studies~\cite{rhodes2024exponential} indicate that the LCU-based approach is more efficient than that based on the sparse oracle access.

\subsection{Paper Structure}

We start with setting up our notation in~\cref{sec:definitions} before summarizing the main results of this paper in~\cref{ssec:main_results}, with most of the information contained in~\cref{tab:methods,tab:site-local,tab:geometically-local}.
In~\cref{sec:review_alg} we present a review of the various simulation algorithms.
No new information is presented in this section, and the presentation is generic, not yet adapted for lattice field theory simulations.
In~\cref{sec:lattice_algorithms} we compute the asymptotic gate complexity of using these techniques to simulate the time evolution of 2 generic classes of lattice field theories. 
This section will provide derivations for the results presented in~\cref{ssec:main_results}. 
In~\cref{sec:quartic} we present concrete asymptotic gate complexities for a quartic scalar field theory.
We provide a discussion of our results in~\cref{sec:discussion} and conclude in~\cref{sec:conclusion}.

%%%%%%%%%%%%%%%%%%%%%%%%%%%%%%%%%%%%%%%%%%%%%
%%%%%%%%%%%%%%%%%%%%%%%%%%%%%%%%%%%%%%%%%%%%%
\subsection{Definitions and Notations used in this work}
\label{sec:definitions}
We begin by presenting the standard mathematical notation used to describe the asymptotic behavior of functions

\begin{definition}[Asymptotic Notations]
    \label{def:asymp_notation}
    {\it Let $f,g: \mathbb{R} \rightarrow \mathbb{R}$ be functions of real variables. Then we say that $f = \OO(g)$ if there exists a positive constant c such that $|f(x)| \leq c|g(x)|$ for all sufficiently large choices of $x$, i.e. $\forall x \geq x_0 \in \mathbb{R}$. We say that $f = \Omega(g)$ if $g = \OO(f)$ and that $f = \Theta(g)$ if $f = \OO(g)$ and $g = \OO(f)$.}      
\end{definition}

We see that $\OO(\cdot)$ and $\Omega(\cdot)$ denote asymptotic upper and lower bounds respectively, while $\Theta(\cdot)$ denotes an asymptotically \emph{tight} bound. We will make use of these notations throughout this work to denote the asymptotic scaling / behavior of the variables we present in this section. 

In HLFTs the degrees of freedom are typically the discretized quantum fields (in first quantization) or the eigenstates of the number operator (in second quantization) which live on the sites, links, or plaquettes of a lattice $\Lambda$.\footnote{In some cases, the lattice is naturally embedded into some metric space, e.g., when a lattice in real or momentum space is used.
More generally, it can be considered as the space which the index enumerating the degrees of freedom belongs to.
Furthermore, we use the term ``lattice site'' for any local ingredient of the lattice that can support a Hilbert space. While for fermions and scalar fields these are the actual lattice sites, for gauge fields this terminology also covers the links and plaquettes of the lattice.
}
For many Hamiltonians of interest, each term in the Hamiltonian couples fields belonging to a small subset of sites~\cite{Kogut:1974ag}.
Depending on which fields get coupled together, one finds different types of Hamiltonians. 

We first consider \emph{$\loc$-site local} Hamiltonians of the form
\begin{equation}
    \label{eq:local}
    % H = \sum_{j_1,\dots,j_\loc\in \Lat} H_{j_1\dots j_\loc}\,,
    \begin{gathered}
    H = \sum_{J\in \Set} H_{J}\,,
    \\
    J = \{j_1,\dots,j_\loc\}\in\Set
    \,,
    \quad j_i\in\Lat\,,
    \end{gathered}
\end{equation} 
where each $H_{J}$ acts nontrivially only on at most $\loc$ sites. The multi-index variable $J$, used to label each summand appearing in \cref{eq:local}, consists of a subset (containing $\loc$ elements) of the $\NLat$ total lattice sites.
Each lattice site is comprised of $\nq$ qubits, each term $H_J$ acts on $\loc \nq$ qubits, and the total system contains $\n = \NLat \nq$ qubits.\footnote{We note that our definition of $\loc$-site-local Hamiltonian reduces to the usual definition of a $\loc$-local Hamiltonian on $\n$-qubits~\cite{Childs:2019hts} if we set $\nq = 1$ and $\Lambda=\{1,2,\ldots,\n\}$, in which case $\NH \sim \n^\loc$ and $\NP \sim 4^\loc$.}
The cardinality of the set $\Set$, which contains sets of indices used to label individual terms $H_J$, is also the total number of individual terms (denoted by $\NH$) that form the full Hamiltonian $H$.
We assume that in general, for a $\loc$-site-local Hamiltonian, the following scaling relation holds:
\begin{equation}
\label{eq:setsize}
    \NH
    \equiv |\Set|
    = \mathcal{O}(\NLat^\scale)
    \,,
\end{equation}
where $\scale \left(\in \mathbb{R}_{\geq 0}\right) \leq \loc$ is a parameter fixed by the choice of a specific theory. 

We note that each summand $H_J$ can be expressed in the Pauli basis as
\begin{align}\label{eq:HJ}
    H_J = \sum_{i \in [\NP^{(J)}]}c_{i}^{(J)}P_{i}^{(J)}
    \,,
\end{align}
where each $P_{i}^{(J)} \in \mathbb{C}^{2^\n \times 2^\n}$ represents an $n$-qubit Pauli operator, each $c_{i}^{(J)} \in \mathbb{C}\setminus \{0\}$ is a complex coefficient, and $\NP^{(J)}$ enumerates the number of terms with non-zero coefficient in the Pauli decomposition of $H_J$. 
We introduce the quantity $\NP$ as the maximum number of Pauli strings appearing in the decomposition of any single term $H_J$. We define it as follows:
\begin{align}
    \label{def:NG}
    \NP \equiv \max_{J}\left( \NP^{(J)} \right)
    .
\end{align}
In this way, the total number of Pauli strings appearing in a Hamiltonian with $\NH$ summands can be upper bounded by $\NH \NP$. Note that for a $\loc$-site-local Hamiltonian $\NP = \OO(4^{\loc n_q})$ always holds as an upper bound.

An important quantity in asymptotic gate complexity estimates is the $1$-norm of a Hamiltonian $H$, denoted by $\|H\|_1$, which we define as
\begin{align}
    \label{def:1_norm}
    \|H\|_1 \equiv \sum_{J\in \Set} \|H_{J}\|
    \,,
\end{align}
where $\| \cdot \|$ represents the spectral / operator norm. 
The \emph{induced} $1$-norm of the Hamiltonian, denoted by $|||H|||_1$, can then be defined by maximizing the following sum over the qubit-index (i.e. the location within the multi-index variable $J \in \Set$) $\ell \in \{1,\dots,k\}$ and the site-index $j_{\ell} \in \Lambda$:
\begin{align}
    \label{def:induced_1_norm}
    \begin{gathered}
    |||H|||_1 \equiv \max_{\ell} \max_{j_\ell} \sum_{J_{\ell}^{(j_\ell)} \in S_{\ell}^{(j_\ell)}} \|H_J\|
    \,,
    \\
    J_{\ell}^{(j_\ell)} = \{j_1,\dots,j_{\ell-1},j_{\ell+1},\dots,j_\loc\} \in S_{\ell}^{(j_\ell)}
    \,,
    \\
    J = \{j_1, \dots, j_{\ell-1}, j_\ell, j_{\ell+1},\dots,j_\loc\}\in\Set
    \,.
    \quad j_i\in\Lat
    \,.
    \end{gathered}
    \raisetag{3\baselineskip}
\end{align}
In other words, the induced 1-norm performs the sum over all norms $\|H_J\|$ which have the lattice site $j_\ell$ at the $\ell$'th position, both of which are chosen to maximize the sum appearing in \cref{def:induced_1_norm}. 
This induced norm provides an upper bound on the \emph{strength} of the Hamiltonian at any single site due to interactions with the remaining (allowed by locality restrictions and the specific theory under consideration) sites in the lattice, such that the qubit-index of that site within the multi-index variable $J \in S$ is fixed. 
We denote by $\Nind$ the number of terms in the sum appearing in~\cref{def:induced_1_norm}:
 \begin{equation}
    \label{def:N_ind}
     \begin{gathered}
        \Nind \equiv | S_{\ell^{*}}^{(j_{\ell^{*}})} |
         \,,
         \\
         \quad \ell^{*},j_{\ell^{*}} = \argmax_{\ell}\argmax_{j_{\ell}} \sum_{J_{\ell}^{(j_\ell)} \in S_{\ell}^{(j_\ell)}} \|H_J\|\,.
     \end{gathered}
 \end{equation}

For a general $\loc$-site-local Hamiltonians,
% as defined in~\cref{def:loclattice} below,
we write
\begin{equation}
    \label{eq:Nindloc}
  \Nind = \OO\! \left( \NLat^{\scale_{\mathsf{ind}}} \right),
\end{equation}
where $\scale_{\mathsf{ind}}$ ($\in \mathbb{R}_{\geq 0}$) $\leq \loc - 1$ is again a parameter fixed by the choice of a specific theory.
We also define a quantity $\cP \in \mathbb{R}$ to be the maximum absolute value of the individual coefficients in the expansion of each $H_\ind$ in terms of Pauli operators. Writing each summand $H_J$ as in~\cref{eq:HJ}, one finds
\begin{align}\label{eq:cP}
    \cP \equiv \max_{J,i}\{|c_{i}^{(J)}|\}
    \,,
\end{align}
so that ${\|H\|_1\leq\cP\NP\NH}$ and ${|||H|||_1 \leq \cP \NP\Nind}$.

We define the \textit{exponentiation error} $\varepsilon$ as the difference between the exact time evolution operator $\me^{-iHt}$ and an approximate implementation $U(t)$
\begin{equation} 
\label{eq:eps}
\varepsilon \equiv \| U(t) - \me^{-iHt} \|
\,. 
\end{equation}

We summarize these results in a few definitions, that will be used throughout this paper
\begin{definition}[Site-local lattice Hamiltonian]
\label{def:loclattice}
{\it A $\loc$-site-local Hamiltonian on a lattice $\Lat$ of size $\NLat=|\Lambda|$ is one of the form:
\begin{equation}
    \begin{gathered}
    % \begin{gather}
    H = \sum_{J\in \Set} H_{J}\,,
    \\
    J = \{j_1,\dots,j_\loc\}\in\Set
    \,,
    \quad j_i\in\Lat\,,
    \nonumber
    % \end{gather}
    \end{gathered}
\end{equation} 
where the number of terms in the set $S$ are given by $\NH$, each each $H_{J}$ acts nontrivially only on at most $\loc$ sites $j_i$, labeled by $J = \{j_1,\dots,j_\loc\}$, and each lattice site is comprised of $\nq$ qubits.
Let each term $H_J$ take the form~\eqref{eq:HJ} and, as in~\eqref{def:NG}, let $\NP$ be an upper bound 
    on the number of distinct Pauli strings appearing in the decomposition of any single term $H_J$. 
    Let $\cP$ be the largest magnitude of any Pauli coefficient over all terms as in~\eqref{eq:cP}, and $\Nind$ be the number of terms in the induced 1-norm of the Hamiltonian as in~\cref{def:N_ind}. }
\end{definition}

Many HLFT Hamiltonians have the property that they only couple fields at neighboring lattice locations together, giving rise to a \textit{geometric} locality. 
Defining a distance metric to the different lattice sites, allows to define a $\loc$-geometrically-site-local Hamiltonian, which is a sum of operators, each involving only sites belonging to a connected set.

\begin{definition}[Geometrically-site-local lattice Hamiltonian]
\label{def:geomloclattice}
{\it A $\loc$-geometrically-site-local lattice Hamiltonian is a $\loc$-site-local lattice of Hamiltonian in which each individual term $H_J$ acts on a connected set of sites $J$.\footnote{
One typically assumes that the lattice is embedded into some metric space. In our work, we assume a hypercubic lattice with spacing $1$.
In principle, one could define locality without introducing a metric structure, e.g., by using a more general topology on $\Set$.}
} 
\end{definition}

The restriction of geometric locality implies that each term $H_J$ acts upon belong to a ball of diameter $\loc$, which imposes a tighter bound on $\Nind$:\footnote{The precise upper bound in \cref{eq:Nindgeomloc} is given for a hypercubic lattice, but statement on independence of $\Nind$ on $\NLat$ is general.}
\begin{equation}
    \label{eq:Nindgeomloc}
  \Nind \leq \binom{\loc^\dims}{\loc} = \OO(1)\,,
\end{equation}
i.e., $\Nind$ no longer scales with $\NLat$ (note that here we have assumed a constant value for the spatial dimension $\dims$).
Furthermore, since each set $J$ can be labeled by the center of the corresponding ball, the number of sets and hence terms in the Hamiltonian is given by 
\begin{align}
\label{eq:NHgeomloc}
    \NH = \OO(\NLat)
    \,.
\end{align}

Hamiltonians of relativistic quantum field theories typically depend on the values of fields and their derivatives at a given point and thus one can expect that lattice formulations of these theories exist such that they are geometrically local. 
This is the case for the scalar field theory considered later in this paper for example.
For gauge theories, gauge invariance needs to be included into the considerations, and the locality of the Hamiltonian now depends on what basis is chosen for the fields. 
In so-called electric bases, Gauss's law constraints are still local, such that the final Hamiltonian is geometrically local in this basis. 
On the contrary, magnetic bases, behaving better at small bare coupling (which is required in the continuum limit), require non-geometrically local interactions~\cite{Haase:2020kaj,Ji:2020kjk,Bauer:2021gek,Bauer:2023jvw,Grabowska:2022uos,Kane:2022ejm}.

As we will also consider Hamiltonians that act only on $\OO(1)$ number of lattice sites, we will also provide the definition for them:
\begin{definition}
\label{def:O1loclattice}[$\OO(1)$ lattice Hamiltonian]
    {\it Let $H$ be a $(\loc=\NLat=\OO(1))$-site-local Hamiltonian, with $\NP$ and $\cP$ defined as in~\cref{def:NG,eq:cP}. Note that $\NLat=\OO(1)$ also implies $\NH = \OO(1)$, and that there is no difference between a site-local and geometrically site-local Hamiltonian in this case}.
\end{definition}

\begin{table}[t!]
    \begin{ruledtabular}
    \begin{tabular}{lr}
        $\Lat$ & Lattice
        \\ \hline
        $\NLat=|\Lambda|$ & Lattice size
        \\ \hline
        $\ind$ & \makecell[r]{Multi-index enumerating\\Hamiltonian terms} 
        \\ \hline
        $H=\sum_{\ind \in \Set} H_{\ind}$ & Hamiltonian 
        \\ \hline
        $\loc$ & Hamiltonian site-locality
        \\ \hline
        $H_\ind$ & Term acting on $\loc$ sites
        \\ \hline
        $\Set$ & Set containing sets of indices $\ind$
        \\ \hline
        $\NH= |\Set|$ & Number of summands forming H
        \\ \hline        
        $\nq$ & Number of qubits per lattice site
        \\ \hline        
        $\n=\NLat\nq$ & Total number of qubits
        \\ \hline
        $\NP^{(\ind)}$ & \makecell[r]{Number of Pauli terms representing \\ a single term $H_\ind$}
        \\ \hline        
        $\NP$ & \makecell[r]{$\max_{J}( \NP^{(J)} )$} 
        \\ \hline
        $\cP$ & Maximum Pauli coefficient in $H$
        \\ \hline
        $\gloc$ & Hamiltonian geometric locality
        \\ \hline
        $\dims$ & Spatial dimension
        \\ \hline
        $d$ & Degree of Chebyshev polynomial
        \\ \hline        
        $\eps$ & Exponentiation error
        \\ \hline
        $\scale$ & \makecell[r]{Parameter describing the scaling of $\NH$ \\ with respect to $\NLat$, see~\cref{eq:setsize}
        % , i.e., $\NH = \mathcal O(\NLat^\scale)$
        }
        \\ \hline
        $\scale_{\mathsf{ind}}$ & \makecell[r]{Parameter describing the scaling of $\Nind$ \\ with respect to $\NLat$, see~\cref{eq:Nindloc}
        % , i.e., $\Nind = \mathcal O(\NLat^{\scale_{\mathsf{ind}}})$
        }
        \\ \hline
        $\chi$ & Total number of elementary gates
    \end{tabular}
    \end{ruledtabular}
    \label{tab:notations}
    \caption{Notations used throughout the paper.}
\end{table}

\subsection{Main results}
\label{ssec:main_results}
The main results of this paper are summarized in \cref{tab:methods,tab:site-local,tab:geometically-local}. 
In this section we summarize only the results, while the detailed derivations of these expressions will be given in~\cref{sec:lattice_algorithms}.

We first consider the results for a lattice containing ${\cal O}(1)$ sites, and compare the implementation of the time evolution operator after a \emph{Product Formula} (PF) based splitting to the implementation using Quantum Signal processing (QSP).
Both of those techniques will be introduced and defined in~\cref{sec:review_alg}. 
The important parameters that the time evolution operator depends on are the time $t$ the system is evolved for, the exponentiation error $\varepsilon$, as well as the maximum number of Pauli terms $\NP$ and the maximum coefficient $\cP$ appearing in the Pauli decomposition of any single term in the Hamiltonian.
The scaling with respect to these parameters is shown in~\cref{tab:methods}, where $p$ denotes the order of the PF used. 
While the scaling of the PF implementation on all parameters improves as $p$ is increased, the number of gates needed is typically exponential in $p$, so the order of the PF is typically chosen as $p = {\cal O}(1)$. 
One can therefore see that QSP has a better scaling than PFs in all parameters, with the most dramatic improvement being in the scaling in $\varepsilon$, which is exponentially better. 

Next, we include the asymptotic scaling with the size of the lattice $\NLat$ for a $\loc$-site local Hamiltonian.
The Hamiltonian is a sum of terms $H_J$, each of which acts on $\loc = {\cal O}(1)$ lattice sites and in general do not commute with one another. 
One approach is to assemble the terms $H_J$ using a PF, with the implementation of each $H_J$ performed either using a PF or QSP, as just discussed. 
These two choices are compared in the first column of~\cref{tab:site-local}. 
One can see that the scaling is more favorable in all parameters when the individual terms $H_J$ are themselves implemented using a PF.

A second approach is to use QSP for the entire Hamiltonian, with the corresponding scaling shown in the second column of~\cref{tab:site-local}.
One can see that this leads to worse scaling (in general) with the number of lattice sites than the PF, while giving a much better scaling with $\varepsilon$. 
Therefore, the usage of PF may be preferential in situations when $\NH \gg 1 / \varepsilon$, which is often the case in field theory.
\renewcommand{\arraystretch}{2.3}
\begin{table}[h!]
    \centering
    \begin{tabular}{c|c}
        \makecell{Local\\implementation} & \makecell{Complexity\\for $\mathcal{O}(1)$ sites}
        \\
        \hline
        Product formula & $\OO\!\left[  \left( \NP \right)^{2 + \frac{1}{p}} \left( \cP t \right)^{1 + \frac{1}{p}}\varepsilon^{-\frac{1}{p}} \right]$
        \\
        \hline
        QSP& $\OO\!\left[\NP \log \NP\left(\NP \cP t + \log(1/\eps)\right)\right]$
    \end{tabular}
    \caption{
    Scaling of a Hamiltonian defined on ${\cal O}(1)$ lattice sites. The scaling is given in terms of the number of Pauli strings $\NP$ of a single term in the Hamiltonian~\cref{def:NG}, the largest coefficient of each of these Pauli strings $\cP$~\cref{eq:cP}, and the exponentiation error $\varepsilon$, defined in~\cref{eq:eps}.
    }
    \label{tab:methods}
\end{table}
\renewcommand{\arraystretch}{1}

Finally, we consider the case of geometrically-local Hamiltonians. 
In this case one still has the same two possibilities for the implementation, shown in the first two columns of~\cref{tab:geometically-local}.
Both have an improved scaling in the number of lattice sites $\NLat$ due to the geometric locality, and as before the QSP approach wins in terms of dependence on $\eps$ while the PF-based wins in terms of scaling with $\NLat$.
The geometric locality of the Hamiltonian, however, allows for the use of the HHKL algorithm, introduced in~\cref{sec:hhkl}, to assemble the $H_J$ terms, with each local term again implemented via either PF or QSP. 
The scaling of this approach is shown in the third column of~\cref{tab:geometically-local}. 
We can see that this leads to a better scaling in $\NLat$, with a combination of HHKL with QSP leading to the best overall scaling.
The scaling of course does not give any information on the prefactor, that could be considerably larger for HHKL than for product based approaches~\cite{childs2019nearly}. Furthermore, the scaling given here is not tight, so it could be that for realistic scenarios product based formulas still outperform HHKL based methods.
Note that the scaling of PFs is identical to that of HHKL combined with PFs.

As a case study, we consider a scalar lattice field theory and express the obtained asymptotic gate complexities in terms of model parameters.
This requires specifying the BE used, and we consider two methods for constructing the BE subroutine $U_H$ in the scalar lattice field theory.
The first method uses the standard Linear Combination of Unitaries (LCU) algorithm~\cite{Childs:2012gwh} to prepare a BE of the Hamiltonian $H$ of the entire system; this approach is efficient for arbitrary local Hamiltonians.
% }
The second method improves upon the first by using the Fourier Transform to switch between the eigenbases of position and momentum operators, which exponentially improves the asymptotic cost of constructing each $U_{H_J}$ in terms of the number of qubits $\nq$ per site. The BE of the full Hamiltonian is then constructed using LCU to combine each $U_{H_J}$.
The comparisons of costs for these approaches is shown in~\cref{fig:BEs}.
Finally, we use the best available construction of BE in order to compare the gate counts between the PF and QSP approaches in application to a single site of the $\varphi^4$ theory, see~\cref{fig:cnot_3d}.

\begin{table*}[t]
    \centering
    \begin{tabular}{c!{\vrule width 2\arrayrulewidth}cc|cc}
        {\makecell{Local implementation}} &
        &{\makecell{Assembly using PF}} &
        &{\makecell{QSP for entire system}}
        \\ \hline
        Product formula &a)&
        $\vphantom{\biggl(}\OO\!\left( \Nind \NP \left( \NH\NP \cP t \right)^{1 + \frac{1}{p}}\varepsilon^{-\frac{1}{p}} \right)$ &&
        \\ \hline
        QSP &b)&
        $\vphantom{\biggl(}
        \begin{aligned}
    \vphantom{\biggl(}\OO&\Bigl[\Nind \left(\NH \NP \cP t \right)^{1 + \frac{1}{p}}\varepsilon^{-\frac{1}{p}}\NP 
        \\&\times
        \log \NP\log(\Nind \NH\NP \cP t/\eps)\Bigr]    
        \end{aligned}
        $
        &c)&
        $\OO\!\left[\NH \NP \log (\NH \NP)\left(\NH \NP \cP t + \log(1/\eps)\right)\right]$
    \end{tabular}
    \caption{Resource scaling for site-local Hamiltonians as in~\cref{def:loclattice}.
    The two rows label the local implementation for each term $\me^{-i H_J t}$ either  using a PF (top row) or using QSP (bottom row). The two columns show how the various terms $H_J$ are combined together, either using a PF (left column) or doing the whole evolution using QSP (bottom right).
    Cases a), b), and c) are studied in \cref{sssec:pfsl}, \cref{ssec:pfqsp}, and \cref{sssec:qspsl}, correspondingly.
    }
    \label{tab:site-local}
\end{table*}
\begin{table*}[t]
    \centering
    \begin{tabular}{c!{\vrule width 2\arrayrulewidth}cc|cc|cc}
        {\makecell{Local implementation}} &&
        {\makecell{Assembly using PF}} &&
        {\makecell{QSP for entire system}} &&
        {\makecell{Assembly using HHKL}}
        \\ \hline
        Product formula &d)&
        $\mathcal{O}\!\left( \NP \left( \NLat \NP \cP t \right)^{1 + \frac{1}{p}}\varepsilon^{-\frac{1}{p}} \right)$ &&

        &e)&
        $\begin{alignedat}{9}
        \vphantom{\biggl(}
        \mathcal O&\Bigl(\NP \left(\NLat \NP \cP t \right)^{1 + \frac{1}{p}}\varepsilon^{-\frac{1}{p}} \Bigr)
        \end{alignedat}$
        \\ \hline
        QSP &f)&
        $\begin{alignedat}{9}
        \mathcal{O}&\Bigl[ \left(\NLat \NP \cP t \right)^{1 + \frac{1}{p}}\varepsilon^{-\frac{1}{p}}\NP 
        \\&\times
        \log \NP\log( \NLat\NP \cP t/\eps)\Bigr]    
        \end{alignedat}$
        &g)&
        $\begin{alignedat}{9}
        \mathcal O&\Bigl[\NLat \NP \log (\NLat \NP)
        \\&\times
        \left(\NLat \NP \cP t + \log(1/\eps)\right)\Bigr]    
        \end{alignedat}$
        &h)&
        $\begin{alignedat}{9}
        \vphantom{\biggl(}
        \mathcal{O}\Bigl[\NLat& \NP^2 \cP t \left[\log(\NLat t/\eps)\right]^{\dims}
        \\&\times
         \log (\NP\log(\NLat t/\eps))\Bigr]
        \end{alignedat}$
    \end{tabular}
    \caption{Resource scaling for geometrically site-local Hamiltonians as in~\cref{def:loclattice}.
    The two rows label the local implementation for each term $\me^{-i H_J t}$ either  using a PF (top row) or using QSP (bottom row). The three columns show how the various terms $H_J$ are combined together, either using a PF (left column), using the HHKL algorithm (right column), or doing the whole evolution using QSP (bottom middle).
    Cases d), f), and g) are obtained from cases a), b), and c), respectively, in \cref{tab:site-local} upon setting $\NH=\OO(\NLat)$ and $\Nind=\OO(1)$.
    Cases e) and h) are studied in~\cref{ssec:hhkl}. 
    }
    \label{tab:geometically-local}
\end{table*}

\section{Review of Simulation Algorithms}
\label{sec:review_alg}

In this section, we review some algorithms commonly used in the literature for simulating the time evolution of generic quantum systems.
We will focus on algorithms based on Product Formulas (PFs)~\cite{childs2019nearly,Childs:2019hts} and Quantum Signal Processing (QSP)~\cite{Low:2016sck,Low:2016znh}, both of which can be applied to small and spatially extended systems.
We also review the HHKL algorithm~\cite{Haah:2018ekc}, which may serve as a ``nearly-optimal wrapper'' of local evolution operators in geometrically-local system. These algorithms will serve as the building blocks from which we express algorithms for Hamiltonian lattice field theories in later chapters.

The discussion in this section aims to be a standalone discussion that gives the reader a basic understanding of the different tools used to implement a time evolution operator.
While we aim to make this section accessible to readers without much background in the topic, we will still provide a precise set of statements and theorems that will be used later.
Note that this section is not detailed enough to re-derive all of these results, and for this we refer the reader to the papers cited in this section.

Note also that this discussion is kept very general, without restricting ourselves to the Hamiltonians considered in this work (except when stated explicitly otherwise).

\subsection{Time evolution by Product Formulas}
\label{PF}

We begin by considering a time-independent Hamiltonian 
\begin{align}
    \label{eq:time_ind_H}
    H = \sum_{\gamma=1}^{\Gamma}H_\gamma
\end{align}
comprised of $\Gamma$ summands, such that one can efficiently construct quantum circuits to implement the exponential $\me^{-iH_\gamma t }$ of each summand appearing on the RHS of \cref{eq:time_ind_H}. 
The unitary operator (called the time evolution operator for $H$) given by the exponential $\me^{-iHt}$ describes evolution under the full Hamiltonian $H$ for some arbitrary time $t$. 
PFs allow us to construct an approximation to the full time evolution operator by decomposing it into a product of exponentials, each involving a single summand. 
The general form of a PF (see Ref.~\cite{Childs:2019hts}) is given by 
\begin{equation}
    \label{def:PF}
    \me^{-iHt} \approx
    S(t) = \prod_{\upsilon=1}^{\Upsilon}\mathcal{P}_\upsilon\!\left(\prod_{\gamma = 1}^{\Gamma}\me^{-ita_\upsilon^{(\gamma)}H_{\gamma}}\right)
    ,
\end{equation}
such that each $a_\upsilon^{(\gamma)} \in \mathbb{R}$ and $\sum_{\upsilon=1}^{\Upsilon}a_\upsilon^{(\gamma)} = 1$ $\forall \gamma \in [\Gamma] \equiv \{1,2,\dots,\Gamma\}$. 
The latter condition ensures that each individual term $H_\gamma$ is evolved for the total time $t$. 
The parameter $\Upsilon$ determines the total number of stages in the PF, and for each stage $\upsilon \in [\Upsilon]$ there exists a corresponding permutation operator $\mathcal{P}_\upsilon$ that determines the ordering of the individual exponentials within that stage (the specific actions of the operators $\mathcal{P}_\upsilon$ depend on the chosen PF construction). 
PFs of the form \cref{def:PF} provide good approximations to the full time evolution operator when the evolution time t is small, and in particular one defines a $p^{\text{th}}$ order PF $S_p(t)$ as one that is correct up to $p^{\text{th}}$ order in $t$
\begin{align}
\label{eq:pth_order_PF_intro}
    S_p(t) = \me^{-iHt} + \mathcal{O}(t^{p+1})\,.
\end{align}
The simplest example of a PF arises from the use of the Baker-Campbell-Hausdorff formula 
\begin{align}
    \me^{A+B} = \me^A \me^B + \OO([A,B])
    \,,
\end{align}
and thus we can write
\begin{align}
    \me^{-iHt} \approx S_1(t) =  \prod_{\gamma=1}^{\Gamma} \me^{-iH_\gamma t }
    \,.
\end{align}
This approximation is known as the first order Lie-Trotter formula. The higher (even) order Suzuki-Trotter formulas are defined recursively as
\begin{align}
S_2(t) &\coloneqq \prod_{i=1}^{\nterms}\me^{\frac{-iH_it}{2}}\prod_{i=\nterms}^{1}\me^{\frac{-iH_it}{2}}\,,
    \nonumber\\
    S_{2k}(t) &\coloneqq S_{2k-2}(u_kt)^2S_{2k-2}((1-4u_k)t)\nonumber\\
    & \qquad \qquad \times S_{2k-2}(u_kt)^2\,,
\label{eq:pth_order}
\end{align}
where $u_k \equiv 1/\left(4 - 4^{1/(2k-1)}\right)$,  and the product given by $\prod_{i=\nterms}^{1}\left( \cdot \right)$ corresponds to reversing the order in which we choose the variable $i$ with respect to the first product given by $\prod_{i=1}^{\nterms}\left( \cdot \right)$. 
In theory, one could use such a construction to build arbitrarily large order PFs. 
However the number of stages $\Upsilon$, and hence the number of elementary gate operations required to construct a circuit implementing such a PF, grows exponentially with the order $p$ (see ~\cite{Childs:2019hts}). 
For this reason, we use relatively low order formulas for practical applications and our results assume that $p,\Upsilon(p) = \OO(1)$.
Note that~\cref{eq:pth_order} is only one possible form of~\cref{def:PF}, and investigating alternatives to it is an active area of research~\cite{Hastings:2014wyq,Poulin:2014yec,tranter2018comparison,Hu:2022nyd,tranter2019ordering,Schmitz:2021ruv,Ostmeyer:2022lxs}.

Since PFs have corrections that depend polynomially on the evolution time $t$, they give good approximations for small $t$. Given a PF valid for small $t$, however, we can obtain one for longer times.
The general procedure is to divide the full evolution time $t$ into $r$ (known as the \emph{Trotter number}) smaller time steps, and apply the chosen $p^{\text{th}}$ order PF to each step, describing the evolution for a smaller time $t/r$, before multiplying together the results from the individual steps
\begin{align}
    S_p(t) = (S_p(t/r))^r,
\end{align}
and the associated exponentation error $\varepsilon$ is given by
\begin{equation}
\label{eq:exp_error}
    \varepsilon = \|(S_p(t/r))^r - \me^{-iHt }\|
    \,.
\end{equation}
To work out the resource requirements to satisfy such an error bound, we ask for the required Trotter number $r$ for a $p^{\text{th}}$ order formula to satisfy \cref{eq:exp_error} given a fixed value of $\varepsilon$. 
In general the answer depends on the asymptotic scaling on the evolution time $t$, as well as the norms of the individual summands $H_\gamma$ and the commutators amongst them. 

While \cref{eq:pth_order_PF_intro} provides the correct asymptotic dependence of the error on $t$,~\cite{Childs:2019hts} and~\cite{Berry:2005yrf} present bounds that show explicit dependence on the 1-norm of the full Hamiltonian (given by $\sum_{\gamma=1}^{\nterms}\|H_\gamma\|$). 
Note however, that these bounds are not tight. 
For example, when all the summands $H_\gamma$ commute, the exponentiation error $\varepsilon$ becomes zero while these error bounds can become arbitrarily large. 
This was further addressed in \cite{Childs:2019hts}, where the authors present another (tighter) bound that results from exploiting the commutativity properties of the summands $H_\gamma$. 
This led to the following asymptotic scaling of the Trotter number
\begin{equation}\label{eq:trotter_number_1}
            r = \OO\!\left( \frac{\Tilde{\alpha}^{\frac{1}{p}}t^{1 + \frac{1}{p}}}{\varepsilon^{\frac{1}{p}}} \right)
            \,,
        \end{equation}
where $\Tilde{\alpha} \equiv \sum_{\gamma_1,\gamma_2,...,\gamma_{p+1} = 1}^{\Gamma} \lvert\lvert [ H_{\gamma_{p+1}},... [H_{\gamma_2},H_{\gamma_1}] ] \rvert\rvert$.

Note that not every Trotter number $r$ satisfying the asymptotic upper bound \eqref{eq:trotter_number_1} will yield the desired error bound. Rather, the above result merely provides an asymptotic upper bound on the \textit{minimum} Trotter number $r$ needed.

\subsection{Time evolution by Qubitization}
\label{QSP}

While most applications to date simulating relativistic quantum field theories use PFs as discussed in~\cref{PF}, there is an alternate approach for simulating the time evolution of quantum systems that is based on a very general technique called \emph{Quantum Signal Processing} (QSP)~\cite{Low:2016sck}.
As described in more detail in Sec.~\cref{sec:qsptimeevol}, QSP can be used to apply broad classes of matrix functions of the Hamiltonian to a quantum state.

We begin by clarifying the terminology as confusion often arises in the literature.
QSP is an algorithm which, given access to a circuit implementing a unitary operator $W$ with eigenvalues $\{\me^{i\theta_\lambda}\}$, constructs a circuit implementing a unitary $V$ with eigenvalues $\{\me^{ih(\theta_\lambda)}\}$, for a wide class of polynomials $h$.
In the context of quantum simulation the unitary $W$ is taken to be the so-called \emph{Szegedy quantum walk operator} $W_H$ associated with the Hamiltonian $H$.
The quantum walk operator $W_H$ is a particular type of Hamiltonian \emph{block encoding} operator, which can be constructed from more general block encodings $U_H$ by means of the \emph{Qubitization}~\cite{Low:2016znh} procedure.

Confusingly, the term \emph{Qubitization} is also often used for the entire process of simulating time evolution based on the usage of QSP and calls to $W_H$.
Moreover, sometimes \emph{Qubitization} serves as an umbrella term for all the simulation techniques which are not based on product formul\ae and/or rely on some form of block encoding. On top of that, \emph{Qubitization} has also been used in a completely unrelated context, for denoting the process of mapping physical degrees of freedom onto qubits~\cite{Huang:2021pwq,Alexandru:2022son}.
In this work, \emph{Qubitization} will be used in the first, restricted sense, referring to the process of constructing $W_H$.

There are two main motivations for studying QSP-based quantum simulation algorithms. First, they have a better asymptotic complexity than PFs~\cite{Low:2016sck,Low:2016znh}. Second, they are compatible with a wider class of Hamiltonians than those which can be efficiently mapped onto Pauli Hamiltonians, and for which the number of qubits is sublinear in the number of degrees of freedom in the system~\cite{Babbush:2019yrx,Kreshchuk:2020dla,Kirby:2021ajp}.

As with PFs, the aim of this section is a pedagogical overview of techniques based on QSP, however we do provide precise statements of the theorems needed in the rest of the paper. 
We will rely on results from the literature to provide proofs of these theorems. 

Simulating the time evolution operator of a given Hamiltonian using QSP relies on a a few key insights, which we will now review in turn.

\subsubsection{Idea of block encoding}
\label{sec:BEidea}

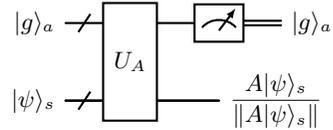
\begin{figure}[h]
\begin{quantikz}
\lstick{$\sket{g}_a$}& \gate[2]{U_A} \qwbundle{}& \meter{} & \setwiretype{c} \rstick{$\sket{g}_a$}\\
\lstick{$\sket{\psi}_s$} & \qwbundle{} \qw & \qw \rstick{$\dfrac{A\sket{\psi}_s}{\|A\sket{\psi}_s\|}$}
\end{quantikz}
\caption{Circuit implementing the block encoding $U_A$ of an operator $A$, as defined in~\cref{eq:BEdef}. If the ancillary register is measured in the state $\sket{g}_a$, the state of the system is ${A\sket{\psi}_s}/{\|A\sket{\psi}_s\|}$.}
\label{fig:BE}
\end{figure}

Block encoding (BE) is a technique that allows one to implement the action of a given operator $A$ on an arbitrary quantum state directly in a quantum circuit, despite this operator not being unitary.
This can be accomplished by rescaling $A$ and padding it with extra entries to fit it into a unitary matrix that can act simultaneously on the given quantum state together with an appropriate number of ancilla qubits.
Schematically, this means that the operator $A$, which acts on a system Hilbert space ${\cal H}_s$, is embedded into a unitary matrix acting on a larger Hilbert space ${\cal H}_a \otimes {\cal H}_s$, where ${\cal H}_a$ is the Hilbert space of the ancillary qubits. Schematically, $U_A$ is given by
\begin{align}
\label{eq:BEdef}
    U_A = 
    \begin{pmatrix}
        A / \alpha & * \\
        * & *
    \end{pmatrix}
    .
\end{align}
The rescaling by the parameter $\alpha>0$, known as \emph{scale factor}, is required, since the norm of any subblock of a unitary matrix has to be less than unity. 
Equation~\eqref{eq:BEdef} therefore implies that $U_A$ is, in fact, a BE for the entire equivalence class of matrices $A$ differing by a constant factor.

If we define the state $\sket{g}_a=G\sket{0}_a$ as the state that projects out the block of $U_A$ containing the operator $A/\alpha$, we can write 
\begin{align}
    \label{eq:beproj}
    A/\alpha &= (\sbra{g}_a\otimes \unit_s) U_A (\sket{g}_a\otimes \unit_s)\,, \\
    U_A \sket{g}_a\sket{\psi}_s &= 
        \sket{g}_a (A/\alpha \sket{\psi}_s) + \sket{\bot}_{as}
        \,,
\end{align}    
where $\sket{\bot}_{as}$ is a vector perpendicular to $\sket{g}_a$. 
This implies that $A$ is only applied to $\sket{\psi}_s$ iff the ancillary register is measured in $\sket{g}_a$. 
We shall term the probability of this to happen the ``success probability of the block encoding'', which is given by ${||A \ket{\psi}_s||/\alpha}$.
In the commonly considered case $\sket{g}_a=\sket{0}_a$ one can schematically write
\begin{align}
    \label{eq:beprojzero}
    \sket{0}_a\sket{\psi}_s &=
    \begin{pmatrix}
        \sket{\psi}_s \\ 0
    \end{pmatrix},\\
    U_A \sket{0}_a\sket{\psi}_s &= 
        \begin{pmatrix}
        (A/\alpha) \sket{\psi}_s \\ \ast 
    \end{pmatrix}.
\end{align}    

The success probability depends on how exactly $U_A$ is constructed.
In order for the algorithm to be considered efficient, the success probability has to scale inverse polynomially with problem parameters.
The scale factor plays its role in defining the complexity of the simulation algorithm, as will be discussed later.

In much of this work we will use this technique to block encode a Hamiltonian $H$, and we call the corresponding BE $U_H$.
The scale factor can then be absorbed by evolving a rescaled Hamiltonian
\begin{align}
    \tilde{H} = H / \alpha,
\end{align}
for the time
\begin{equation}
\tilde{t} = \alpha \, t\,,    
\end{equation}
so that
\begin{equation}
    Ht = \tilde{H} \tilde{t}\,.
\end{equation}

In the following discussion, we will choose a diagonal basis for the (rescaled) system Hamiltonian, such that basis vectors of the system Hilbert space are eigenstates of the Hamiltonian
\begin{align}
    \tilde{H} \sket{\lambda}_s = \lambda \sket{\lambda}_s 
    \,,
\end{align}
with $|\lambda| < 1$.
This basis is used only to simplify the discussion of the technique.
Being able to diagonalize the Hamiltonian is not required for its use.

As already discussed, the vector $\sket{g}_a$ of the ancillary Hilbert space ${\cal H}_a$ defines the block in which the Hamiltonian is encoded in the unitary matrix $U_H$. 
In other words, the BE acts on the tensor product of a state $\sket{\lambda}_s$ and $\sket{g}_a$ as
\begin{align}
\label{eq:UHaction}
    U_H \sket{g}_a\sket{\lambda}_s = \lambda\, \sket{g}_a\sket{\lambda}_s + \sqrt{1-\lambda^2} \sket{\bot^\lambda}_{as}
    \,,
\end{align}
where we have defined a normalized transverse state that is orthogonal to the vector $\sket{g}_a$, such that $(\sbra{g}_a\otimes \unit_s)\sket{\bot}_{as}=0$.
Thus, the action of $\tilde{H}$ can be extracted by acting with $U_H$ on an enlarged Hilbert space and post-selecting on the ancilla qubit being in the $\sket{g}_a$ state, \textit{i.e.}
\begin{align}
    \sbra{g}_a U_H \sket{g}_a\sket{\lambda}_s = \lambda\, \sket{\lambda}_s
    \,.
\end{align}
Note that if the ancillary Hilbert space ${\cal H}_a$ is 2-dimensional, one could choose $\sket{g}_a = \sket{0}_a$ and $\sket{\bot}_a = \sket{1}_a$. 

Many different techniques for BE have been discussed in the literature. 
Among many others, one can use general algorithms such as the FABLE algorithm~\cite{camps2022fable}, approaches that rely on a technique called linear combination of unitaries (LCU)~\cite{Childs:2012gwh,Lin:2022vrd}, a technique called QETU~\cite{Dong:2022mmq}, and techniques using access to the matrix elements of the sparse matrix and compact mappings $H$~\cite{Low:2016znh,Lin:2022vrd,Toloui:2013xab,Babbush:2017oum,Kirby:2021ajp,Kreshchuk:2020dla}.
The choice of BE is typically dictated by the number of ancillary qubits required, the value of the scale factor $\alpha$, the gate complexity required for its implementation, whether an approximate implementation of the Hamiltonian is sufficient, and whether it is applicable to general sparse Hamiltonians.

One issue with BEs discussed so far is that the action of the BE does not stay in the Hilbert space spanned by ${\rm span}\{\sket{g}_a \sket{\lambda}_s, \sket{\bot^\lambda}_{as}\}$. 
In other words, the action of $U_H$ on $\sket{\bot^\lambda}_{as}$ can produce a state with different value of $\lambda' \neq \lambda$, such that in general
\begin{align}
\begin{aligned}
    \sbra{g}_a\sbra{\lambda'}_s U_H \sket{g}_a\sket{\bot^\lambda}_{as} &\neq 0\\
    \sbra{\bot^{\lambda'}}_{as} U_H \sket{\bot^\lambda}_{as} &\neq 0
    \,.
\end{aligned}
\end{align}
Therefore, repeated actions of the unitary $U_H$ are not easily known.
This can be dealt with by constructing a more constrained BE $W_H$, for which one can ensure that multiple actions of this BE on a given eigenstate will never produce a state with a different eigenvalue. 
This constrained BE is called a \emph{qubitized block encoding} and is discussed in~\Cref{sssec:qubitization}.

\subsubsection{Block encoding by LCU}
\label{sec:BEtech}
\begin{figure*}[t]
\begin{quantikz}%[row sep = 0.5cm]
\lstick[4]{$\sket{0^{\otimes k}}_a$}
& \gate[4]{\PREPARE} & \octrl{1} \gategroup[5,steps=4,style={dashed,rounded corners, inner
sep=6pt}]{$\SELECT$} & \ctrl{1} & \qw \ \cdots \ & \ctrl{1} & \gate[4]{\PREPARE^\dagger} & \meter{} & \setwiretype{c} \rstick{$0$} \\
& & \octrl{1} & \octrl{1} & \ \cdots \ & \ctrl{1} && \meter{} & \setwiretype{c} \rstick{$0$}\\
\vdots \setwiretype{n}& & \ \vdots \  & \vdots & \ \ddots \  \setwiretype{n} & \ \vdots \ && \ \vdots \ & \ \vdots \ \\
&  & \octrl{1} \wire[u][1]{q} & \octrl{1} \wire[u][1]{q} & \ \cdots \ & \ctrl{1} \wire[u][1]{q} && \meter{} & \setwiretype{c} \rstick{$0$}\\
\lstick{$\sket{\psi}_s$} & \qw \qwbundle{} & \gate{U_0} & \gate{U_1} & \ \cdots \ & 
\gate{U_{M-1}} &&& \rstick{$T\sket{\psi}_s/\|T\sket{\psi}_s\|$}
\end{quantikz}
\caption{Linear Combination of Unitaries (LCU) implements the BE $U_T$ of an operator $T$ expressed as a sum of elementary operators with known circuit implementations, reproduced from~\cite{Childs:2012gwh}.
Each operator $U_i$ is controlled on the binary string representing its index $i$.
The size of the ancillary register is $k=\lceil\log_2M\rceil$.
For illustration purposes, $\log_2 (M)$ is assumed to be an integer $k$ such that the binary string of $M-1$ contains all 1's.
}
\label{fig:LCU}
\end{figure*}
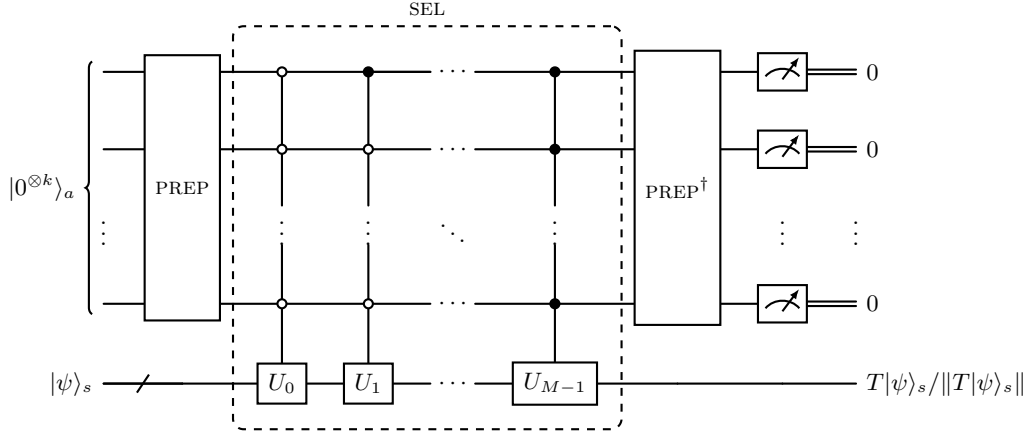
A generic subroutine, called Linear Combination of Unitaries (LCU), can be used as a general-purpose technique for constructing BEs of qubit Hamiltonians.
After introducing the general technique, we briefly discuss several modifications of LCU as well as alternative approaches to constructing BEs.

Consider the situation were one has a combination of $M$ unitary operators $U_0$, \ldots, $U_{M-1}$, each acting on a system Hilbert space ${\cal H}_s$ and the goal is to construct the operator
\begin{align}
    \label{eq:lcuT}
    T = \sum_{i=0}^{M-1} \beta_i \, U_i
    \,,
\end{align}
acting on the same system Hilbert space.\footnote{We assume $\beta_i>0$ since the complex phases of $\beta_i$ can be absorbed by $U_i$.} 
\footnote{This technique can also be used to add non-unitary operators by first block-encoding them and then adding the unitaries using LCU.}

This can be achieved in the following way. 
First, consider an ancilla register with Hilbert space ${\cal H}_a$ containing $k = \lceil \log M \rceil$ qubits.
Labeling the computational basis of these states through the integers corresponding to the binary string encoded in the qubits, one defines a so-called \emph{prepare oracle} ($\PREPARE$) to create a state $\sket{\beta}$:
\begin{align}
    \label{eq:prepstate}
   \sket{\beta}_a = \PREPARE_a \sket{0}_a = \frac{1}{\sqrt\beta} \sum_{i=0}^{M-1} \sqrt{\beta_i} \sket{i}_a
   \,,
\end{align}
where $\sket{0}_a$ corresponds to the state with all qubits in ${\cal H}_a$ in the $\sket{0}_a$ state and we have defined $\beta \equiv \|\vec{\beta}\|_1 = \sum_i |\beta_i|$.

Next, using the so-called \emph{select oracle} ($\SELECT$), one performs operations $U_i$ on states in ${\cal H}_s$, controlled on the values of qubits in ${\cal H}_a$
\begin{align}
    \label{eq:lcuselect}
    \SELECT = \sum_{i=0}^{M-1} \sket{i}_a \sbra{i}_a \otimes U_i 
    \,.
\end{align}
At the end one reverses the prepare oracle. 
The complete circuit on the Hilbert space ${\cal H}_{as} = {\cal H}_{a} \otimes {\cal H}_{s}$ can therefore be written as
\begin{align}
    \label{eq:lcube}
    U_T = (\PREPARE_a^\dagger \otimes \unit_s) \, \SELECT \, (\PREPARE_a \otimes \unit_s) \,.
\end{align}
Letting this operator act on a state $\sket{0}_a \sket{\psi}_s$ one finds
\begin{align}
    U_T\sket{0}_a \sket{\psi}_s = \frac{1}{\beta}\sket{0}_a T \sket{\psi}_s + \sket{\bot}_{as}
    \,,
\end{align}
where $\sket{\bot}$ is an unnormalized state transverse to $(\sket{0}_a\otimes \unit_s)$.
This implies that $U_T$ is a BE of $T$ with a scale factor $\beta$.

From the above discussion it is clear that the number of ancilla qubits to combine $M$ unitaries is given by ${\rm dim}({\cal H}_a) \sim \log M$, and the number of gates is given by $M$ gates that are controlled on $\log M$ qubits, which can be reduced to $\OO(M \log M)$ elementary gates.

While the presented approach to LCU relies on the binary encoding of terms on the RHS of~\cref{eq:lcuT} in the ancillary register, other encodings can be used as well.
In particular, a variation of the LCU algorithm exists, which takes advantage of the Hamiltonian locality~\cite{Kirby:2022ncy}.
For an $k$-local $\nq$-qubit Hamiltonian, it requires $2\nq$ ancillary qubits yet only uses $2k$-controlled gates.

The LCU algorithm can be applied to a Hamiltonian provided in a form of a ${2^{\nq}\times2^{\nq}}$ matrix, upon decomposing it into  Pauli strings as $H=\sum_{i=0}^{4^{\nq}-1} \beta_i P_i$, where $P_i$ are the Pauli strings and the coefficients $c_i$ are found as $c_i=\frac{1}{2^{\nq}}\operatorname{Tr}(H P_i)$.
An alternative approach to constructing BEs for Hamiltonians given in the matrix form is considered in~\cite{camps2022fable}.

The LCU approach is often combined with other methods.
In~\cref{sssec:lcuft} we shall consider its modification which takes into account the structure of particular lattice Hamiltonians.

Lastly, we briefly mention that simulation techniques based on BE, considered in the next section, are compatible with a wide class of \emph{sparse} Hamiltonians, including those which cannot be efficiently mapped onto Pauli strings.
A general approach to constructing BEs for such systems is based on the usage of \emph{sparse oracle} subroutines.
As was mentioned in~\cref{sec:introduction}, their construction is highly problem-specific~\cite{Toloui:2013xab,Babbush:2018ywg,Kreshchuk:2020dla,Kirby:2021ajp,rhodes2024exponential} and does not seem to provide advantage over LCU for local HLFTs ~\cite{rhodes2024exponential}, which are considered the main application of this work.

%%%%%%%%%%%%%%%%%%%%%%%%%%%%%%%%%%%%%%%%%%%%

\subsubsection{Qubitization}
\label{sssec:qubitization}

In approaches to quantum simulation based on Quantum Signal Processing (QSP), one approximates the time evolution operator  $\me^{-i H t}$ by constructing polynomials of the Hamiltonian operator.
In what follows we shall base our algorithms upon the Qubitization technique~\cite{Low:2016znh}, which utilizes a special \emph{qubitized} form of block encoding $W_H$, also known as \emph{Szegedy quantum walk operator} or the \emph{iterate}).\footnote{\label{footnote:QSVT}A different QSP-based approach to approximating $\me^{-i H t}$ amounts to using Quantum Singular Value Transformation~\cite{gilyen2019quantum,Lin:2022vrd}. This method is advantageous in the sense that it avoids controlled calls to the block encoding, and uses directly $U_H$ instead of $W_H$. However, as such an approach does not allow one to simultaneously implement both even and odd parts of $\me^{-i H t}$, it requires adding them using LCU at the final stage. This, in turn, leads to success probabilities which (even upon potential amplitude amplification) cannot be made exponentially close to 1, as required by the HHKL algorithm.}

The qubitized block encoding $W_H$ is designed so that its successive applications to states of the form $\sket{g}_a\sket{\lambda}_s$ produce a state belonging to the subspace $\operatorname{span}\{\sket{g}_a \sket{\lambda}_s, \sket{\bot^\lambda}_{as}\}$, for all eigenstates $\sket{\lambda}$.
This fact implies that repeated applications of $W_H$ produce block encodings of powers of $H$, which would not be true for $U_H$. 
Such a scenario is realized iff each state $\sket{\bot^\lambda}_{as}$ factorizes as
$\sket{\bot^\lambda}_{as} \equiv \sket{\bot^\lambda}_a \sket{\lambda}_s$:
\begin{equation}
\begin{aligned}
\label{eq:Whdef}
W_H \sket{g}_a\sket{\lambda}_s &= \lambda\, \sket{g}_a\sket{\lambda}_s - \sqrt{1-\lambda^2} \sket{\bot^\lambda}_{as} \\
    &= \left( \lambda\, \sket{g}_a - \sqrt{1-\lambda^2} \sket{\bot^\lambda}_a
 \right)\sket{\lambda}_s
    \,.
\end{aligned}    
\end{equation}
Note that the minus sign between the two terms is conventional.

For any dimension of ${\cal H}_a$, the orthogonal state $\sket{\bot^\lambda}_a$ can now be defined as
\begin{align}
\sket{\bot^\lambda}_a\sket{\lambda}_s = \frac{\lambda \unit_{as} - W_H  }{\sqrt{1-\lambda^2}}\sket{g}_a\sket{\lambda}_s
    \,,
\end{align}
which
implies that for each eigenvalue $\lambda$ of the Hamiltonian there is one specific transverse vector $\sket{\bot^\lambda}_a$ in ${\cal H}_a$. 

From~\cref{eq:Whdef} it follows that the operator $W_H$ is block diagonal (in the sense that it doesn't mix the Hamiltonian eigenstates), and can be written as
\begin{align}
    \label{eq:wwtilde}
    \left(\unit_a \otimes \sbra{\lambda'}_s\right) W_H \left(\unit_a \otimes \sket{\lambda}_s\right) = \widetilde{W}_H^{(\lambda)} \delta_{\lambda \lambda'}
    \,,
\end{align}
where $\widetilde{W}_H^{(\lambda)}$ is some operator acting on the register $a$.
Information relevant to block encoding of $H$ is contained in two-dimensional blocks of $\widetilde{W}_H^{(\lambda)}$ corresponding to subspaces
\begin{equation}
\hilb_{a}^{(\lambda)}={\rm span}\{\sket{g}_a, \sket{\bot^\lambda}_{a}\} \in {\cal H}_a \,,
\end{equation}
which we denote by $W_H^{(\lambda)}$:
\begin{align}
\label{eq:wwtildematrix}
\begin{aligned}
W_H^{(\lambda)} &= \begin{pmatrix} \sbra{g}_a \widetilde{W}_H^{(\lambda)} \sket{g}_a & \sbra{g}_a \widetilde{W}_H^{(\lambda)} \sket{\bot^\lambda}_a \\  \sbra{\bot^\lambda}_a  \widetilde{W}_H^{(\lambda)} \sket{g}_a & \sbra{\bot^\lambda}_a  \widetilde{W}_H^{(\lambda)} \sket{\bot^\lambda}_a \end{pmatrix}_{\hilb_{a}^{(\lambda)}}
    \\
    &= \begin{pmatrix} \lambda & - \sqrt{1-\lambda^2} \\ \sqrt{1-\lambda^2} & \lambda \end{pmatrix}_{\hilb_{a}^{(\lambda)}}
    \\
    &= \exp ( -i Y \theta_\lambda)
    \,,    
\end{aligned}
\end{align}
where 
\begin{align}
    \label{eq:thetalambda}
    \theta_\lambda = \arccos(\lambda) \,.
\end{align}

While the qubitized BE $W_H$ gives the same result as the original BE $U_H$ when acted on $\sket{g}_{a} \sket{\lambda}_s$, it has a much simpler structure when applied multiple times:
\begin{align}
\label{eq:oneWone}
    \left(\unit_{a} \otimes \sbra{\lambda'}_s \right) \left(W_H\right)^k \left(\unit_{a} \otimes \sket{\lambda}_s \right) = \left(\widetilde{W}_H^{(\lambda)}\right)^k\delta_{\lambda' \lambda}
    \,.
\end{align}
Using Eq.~\eqref{eq:wwtildematrix}, one can show that
\begin{align}
\label{eq:wwtildematrixpowerk}
\begin{aligned}
&\left(W_H^{(\lambda)}\right)^k 
    \\
    &= \begin{pmatrix} \sbra{g}_a \left(\widetilde{W}_H^{(\lambda)}\right)^k \sket{g}_a & \sbra{g}_a \left(\widetilde{W}_H^{(\lambda)}\right)^k \sket{\bot^\lambda}_a \\  \sbra{\bot^\lambda}_a  \left(\widetilde{W}_H^{(\lambda)}\right)^k \sket{g}_a & \sbra{\bot^\lambda}_a  \left(\widetilde{W}_H^{(\lambda)}\right)^k \sket{\bot^\lambda}_a \end{pmatrix}_{\hilb_a^{(\lambda)}}
    \\
    &=\begin{pmatrix} T_k(\lambda) & - \sqrt{1-\lambda^2} \, U_{k-1}(\lambda)  \\ \sqrt{1-\lambda^2} \, U_{k-1}(\lambda) & T_k(\lambda) \end{pmatrix}_{\hilb_a^{(\lambda)}}
    \\
    &= \exp \left( -i Y k \theta_\lambda \right),
\end{aligned}
\end{align}
where $T_k(\lambda)$ and $U_k(\lambda)$ are Chebyshev polynomials of the first and second kind, respectively.
Given that Chebyshev polynomials are an excellent 
polynomial basis for approximating $L^\infty$ functions on a finite interval, one can hope to construct a general function of the eigenvalues (and therefore of the Hamiltonian matrix itself) by combining powers of the $W_H$ operator. 
Note that while the matrix of $\widetilde{W}_H^{(\lambda)}$ may contain blocks of size larger than $2\times 2$ when the ancillary register contains more than a single qubit, only its $2\times2$ blocks $W_H^{(\lambda)}$ are relevant for the block encodings of $H$ and its powers.
In the next section we discuss how this technique can be used for approximating the time evolution operator.

While so far we have only defined the unitary $W_H$ through its action on a given state, a very important result is that it can be obtained from a BE of the Hamiltonian $U_H$ using a process known as \emph{qubitization}~\cite{Low:2016znh}, which we now describe.
In many cases of interest, it is possible to construct the iterate by assuming a form $W_H = S' U_H$, and searching for some $S'$ that acts only on the ancillary qubits. 
The conditions $S'$ must satisfy can be stated more succinctly by first letting $S'$ be a product of a reflector $R_g$ about the state $\sket{g}_a$ and another operator $S$ that acts only on the ancillary qubits, i.e.
\begin{equation}
\label{eq:Whdefas}
    W_H = \left(R_g \otimes \unit_s \right) \left( S \otimes \unit_s \right) U_H\,,
\end{equation}
where 
\begin{equation}
\label{eq:refl_a}
    R_g = \left(2 \sket{g}_a \sbra{g}_a - \unit_a \right).
\end{equation}
The circuits for $W_H$ and $R_g$ are shown in~\Cref{fig:WH,fig:Rg}.

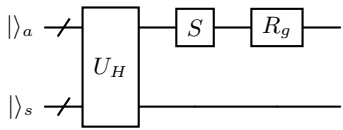
\begin{figure}[h]
\begin{quantikz}
    \lstick[1]{$\sket{}_a$} & \gate[2]{U_H} \qwbundle{} & \gate{S} & \gate[style={inner ysep=-1.1pt}]{R_g} & 
    \\
    \lstick[1]{$\sket{}_s$} & \qwbundle{}  & & & 
\end{quantikz}
\caption{
Circuit for the qubitized block encoding $W_H$ defined in~\cref{eq:Whdefas}.
}
\label{fig:WH}
\end{figure}

\begin{figure}[h]
\begin{quantikz}
\lstick[4]{$\sket{}_a$} & \gate[4]{G^\dagger} & & \octrl{1} & & \gate[4]{G} & \qw & \\
\vdots \ \setwiretype{n} &  & & \ \vdots \ & & & \vdots & \\
& & & \octrl{1} \wire[u][1]{q} & & & \qw & \\
& & \gate{\Xgate} & \gate{\Zgate} & \gate{\Xgate} & & \gate{-1} &
\end{quantikz}
\caption{
Circuit for the reflection operator $R_{a}$ defined in~\cref{eq:refl_a} and used in~\cref{eq:Whdefas}, see~\cite{Lin:2022vrd} for alternative implementations. $G$ is the oracle for preparing the state $\sket{g}_a$, i.e. ${G\sket{0}_a=\sket{g}_a}$.
The $\boxed{-1}$ gate adds an overall phase to the circuit (negative sign), and can be implemented, e.g., as $\Xgate\Zgate\Xgate\Zgate$.
}
\label{fig:Rg}
\end{figure}

Written in this way, it was shown~\cite{Low:2016znh} that one can construct $W_H$ from \Cref{eq:Whdefas} if $S$ satisfies
\begin{subequations}
\label{eq:S_cond}
\begin{align}
    \label{eq:S_cond1}
    &\left(\sbra{g}_a \otimes \unit_s \right) \left(S \otimes \unit_s\right) U_H \left(\sket{g}_a \otimes \unit_s \right) = H/\alpha, 
    \\
    \label{eq:S_cond2}
    &\left(\sbra{g}_a \otimes \unit_s \right) \left[ \left(S \otimes \unit_s\right) U_H \right]^2 \left(\sket{g}_a \otimes \unit_s \right) = \unit_a\,.
\end{align}
\end{subequations}

One particularly simple case is when $U_H$ is its own inverse, \emph{i.e.} $U_H^2 = \unit_{as}$.
In this case, one can simply set $S = \unit_{a}$.
This construction can be applied to the case when the block encoding $U_H$ is constructed via the LCU procedure~\cite{Babbush:2018ywg}. 
To see this, we must modify our perspective slightly.
In the original construction, we had $U_H = (\PREPARE^\dagger) \SELECT (\PREPARE)$ and $\sket{g}_a = \ket{0}_a$.
Alternatively, we can regroup terms and instead take $U_H \to U_H^\text{LCU} \equiv \SELECT$ and $\sket{g}_a \to \sket{g^\text{LCU}}_a \equiv \PREPARE \sket{0}_a = \sket{\beta}_a$.
Because $(U_H^\text{LCU})^2 = \SELECT^2 = \unit_{as}$, we can simply set $S = \unit_{a}$ to construct $W_H$:
\begin{equation}
\label{eq:WH_for_lcu}
\begin{aligned}
    &\left(\sbra{0}_a \otimes \unit_s\right)U_H\left(\sket{0}_a \otimes \unit_s\right)
    \\
    &=\left(\sbra{0}_a \otimes \unit_s\right)(\PREPARE^\dagger) \SELECT (\PREPARE)\left(\sket{0}_a \otimes \unit_s\right)
    \\
    &=\left(\sbra{\beta}_a \otimes \unit_s\right) U_H^\text{LCU} \left(\sket{\beta}_a \otimes \unit_s\right).
\end{aligned}
\end{equation}
It is important to note that the BE $U_H^\text{LCU}$ is defined with $\ket{g}_a =\PREPARE \sket{0}_a$.

The construction described above also applies to cases where the block encoding is based on Hamiltonian sparsity, since one can use sparse oracles to construct block encodings that are Hermitian and thus self-inverse~\cite{Lin:2022vrd}. In more general scenarios where this is not necessarily the case (e.g., as in Ref.~\cite{Novikau:2021yra}), the operator $S'$ can always be constructed at the expense of introducing a single ancillary qubit and performing two controlled calls to $U_H$~\cite{Low:2016znh}.
Note that while this procedure does not increase the asymptotic cost of preparing $W_H$, it does result in a significant increase of the overall prefactor. 
Importantly, for all methods used in this work, the BE's are also Hermitian, and the general procedure requiring an extra qubit was avoided.
For completeness, the procedure for constructing $W_H$ from an arbitrary $U_H$ is presented in App.~\ref{app:general_W}.

\subsubsection{
QSP using Qubitization\label{sec:qsptimeevol}}

\begin{figure*}[t]
\begin{quantikz}
\lstick{$\sket{0}_b$} & \gate{\Hgate} & \gate[3]{V_H(\phi_{1})} & \gate[3]{V_H(\phi_{2})} & \qw \ \cdots \ & \gate[3]{V_H(\phi_{\Nphi})} & \gate{\Hgate} & \meter{} & \setwiretype{c} \rstick{0} \\
\lstick{$\sket{0}_a$} & \qw \qwbundle{} & & \qw \cdots & \qw \ \cdots \ & & \qw & \meter{} & \setwiretype{c} \rstick{0}  \\
\lstick{$\sket{\psi}_s$} & \qw \qwbundle{} & & & \qw \ \cdots \ & & \qw & \qw & \qw \rstick{$\approx \me^{-iHt}\sket{\psi}$}
\end{quantikz}
\caption{Quantum Signal Processing (QSP) circuit used for simulating time evolution, reproduced from~\cite{Novikau:2021yra}.
The circuit prepares a block encoding for the approximation of the time evolution operator $\me^{-iHt}\sket{\psi}$.
Circuit for $V_H(\phi)$ is shown in~\cref{fig:VH_bas}.
When all the ancillary qubits are measured in the zero states, the desired operator is applied to the system register $\sket{\psi}_s$.
Without loss of generality, $\sket{g}_a$ is set to $\sket{g}_a=\sket{0}_a$.
}
\label{fig:QSP}
\end{figure*}
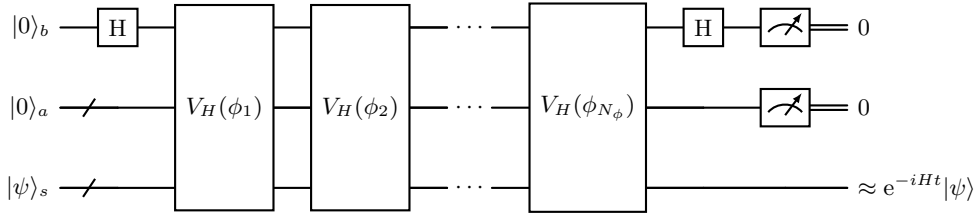

\begin{figure}[t]
\begin{quantikz}
\lstick{$\sket{}_b$} & \gate{R_z(-\frac{\phi}{2})} & \gate{\Hgate} & \ctrl{1} & \gate{\Hgate} & \gate{R_z(\frac{\phi}{2})} & \\
\lstick{$\sket{}_a$} & \phantomgate{R_z(-\frac{\phi}{2})}\qwbundle{} & & \gate[2]{W_H} & & & \\
\lstick{$\sket{}_s$} & \phantomgate{R_z(-\frac{\phi}{2})}\qwbundle{} & & & & &
\end{quantikz}
\caption{Circuit for $V_H(\phi)$, as defined is~\cref{eq:VHdef}.}
\label{fig:VH_bas}
\end{figure}
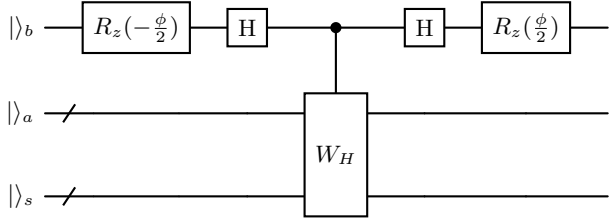
Having just demonstrated that repeated applications of $W_H$ construct block encodings of Chebyshev polynomials of the Hamiltonian, one natural strategy for constructing a polynomial approximation to $\me^{-i t H}$ is as follows. 
One would first construct circuits with different powers of $W_H$ and then add these circuits with the appropriate coefficients using an LCU procedure.  
However, this would result in a block encoding approximating $\me^{-i H t}$ with a scale factor that grows with the degree of the polynomial approximation and cannot be made arbitrarily close to one.
This is highly undesirable for the reasons previously discussed in~\Cref{footnote:QSVT}, and below we discuss how this limitation can be overcome with the aid of QSP.

In order to use the operator $W_H$ of the previous section to calculate the exponential of the Hamiltonian, we first note that within the subspaces $\operatorname{span}\{\sket{g}_a \sket{\lambda}_s, \sket{\bot^\lambda}_{as}\}$ the qubitized operator $W_H$ can be diagonalized by defining the states
\begin{align}
    \sket{\theta_\pm^\lambda}_{as} \equiv \frac{1}{\sqrt{2}} \left( \sket{g}_{a} \pm i \sket{\bot^\lambda}_{a} \right)\sket{\lambda}_s
    \,.
\end{align}
Using \cref{eq:Whdef} one can easily show that
\begin{align}
    W_H \sket{\theta_\pm^\lambda}_{as} = \me^{i \theta_\pm^\lambda}\sket{\theta_\pm^\lambda}_{as}
    \,,
\end{align}
with
\begin{align}
    \theta_\pm^\lambda
    = \mp \theta_\lambda
    \,,
\end{align}
where $\theta_\lambda$ is defined in~\cref{eq:thetalambda}.\footnote{The fact that $W_H$ encodes information about the spectrum of the system via $\operatorname{spec}W^{(\lambda)}_H=\{\me^{\pm i\arccos(\lambda)}\}$ can be used for implementing a highly efficient version of Quantum Phase Estimation algorithm based using controlled calls to $W_H$~\cite{Babbush:2018ywg}.}

Given this form, one can now define a new operator $V_H(\phi)$, parameterised by $\phi$, which can be constructed from the operator $W_H$. 
It requires one more ancillary qubit $\sket{b} \in {\cal H}_b$, such that the operator $V_H(\phi)$ acts on the Hilbert space ${\cal H}_{bas}  \equiv {\cal H}_b \otimes {\cal H}_a \otimes {\cal H}_s$. 
It is defined by (see~\cref{fig:VH_bas})
\begin{align}
\label{eq:VHdef}
    V_H(\phi) = (\me^{-i \phi Z_b/2}\otimes \unit_{as}) W'_H (\me^{i \phi Z_b/2}\otimes \unit_{as})
    \,,
\end{align}
where $Z_b$ is the usual Pauli $Z$ operator acting on the qubit $\sket{b}$, and $W'_H$ is the $W_H$ operator controlled on the qubit $\sket{b}$, which now acts on the Hilbert space ${\cal H}_{bas}$:
\begin{align}
    W_H' = \sket{+}_b \sbra{+}_b \otimes \unit_{as} + \sket{-}_b \sbra{-}_b \otimes W_H
    \,.
\end{align}

One can show through explicit computation that
\begin{align}
\begin{aligned}
\bigl(\unit_b \otimes \sbra{\theta^{\lambda'}_\pm}_{as}\bigr)    V_H(\phi)&\bigl(\unit_b \otimes \sket{\theta^{\lambda}_\pm}_{as}\bigr)\\
    & \quad = V_H^{(\theta^\lambda_\pm)} (\phi) \, \delta_{\lambda \lambda'}
    \,,    
\end{aligned}
\end{align}
with
\begin{align}
\label{eq:VHphi_bas}
\begin{aligned}
    V_H^{(\theta)}(\phi) &=     
    \me^{i \frac{\theta}{2}}\begin{pmatrix}
    \cos\frac{\theta}{2} & - i \sin\frac{\theta}{2}\me^{-i \phi} \\
    \vphantom{\Bigl(}
    - i \sin\frac{\theta}{2}\me^{i \phi} & \cos\frac{\theta}{2}
    \end{pmatrix}_{\!b} \\
    &= \me^{i \frac{\theta}{2}} \me^{-i \frac{\theta}{2}\left(X_b \cos\phi + Y_b \sin\phi \right)}
    \,,
\end{aligned}
\end{align}
where $X_b$ and $Y_b$ denote the Pauli-$X$ and $Y$ matrices in ${\cal H}_b$. 

What this construction has accomplished is to create a unitary operator $V_H(\phi)$ that is block diagonal in the original Hilbert space ${\cal H}_s$, with the entries of the $2 \times 2$ matrix in ${\cal H}_b$ given by functions of the eigenvalues of the Hamiltonian (through $\cos \theta_\pm^\lambda$), parameterized by some $\phi$. 
By multiplying several of these operators in sequence (each with a different value of~$\phi$), one obtains an operator
\begin{align}
\label{eq:vprod}
    V_H^{(\theta)}(\vec \phi) \equiv V_H^{(\theta)}(\phi_{\Nphi}) \ldots V_H^{(\theta)}(\phi_{1})\,,
\end{align}
that is still block diagonal in ${\cal H}_s$.
The resulting operator is a $2\times 2$ matrix in ${\cal H}_b$ which acquires the form of
\begin{align}
\label{eq:VHGeneralMatrix}
    V_H^{(\theta)}(\vec \phi) = \mathcal{A}(\theta) \unit_b + i \mathcal{B}(\theta) X_b + i \mathcal{C}(\theta)Y_b + i \mathcal{D}(\theta) Z_b
    \,,
\end{align}
where the functions $\mathcal{A}(\theta) \ldots \mathcal{D}(\theta)$ are all parameterized by the angles $\vec\phi = (\phi_1, \ldots, \phi_{\Nphi})$ and, similarly to \cref{eq:wwtildematrixpowerk}, can be expressed in terms of Chebyshev polynomials of $\lambda$ upon substituting $\theta = \arccos(\lambda)$.

The classes of functions $\mathcal{A}(\theta) \ldots \mathcal{D}(\theta)$ that can be realized by the above procedure are described by the theory of Quantum Signal Processing.
In particular, it was shown in~\cite{Low:2016jxt} that
%In Ref.~\cite{Low:2016jxt} it was shown that
one can construct the vector of angles $\vec\phi$ to obtain functions $\mathcal{A}(\theta)$ and $\mathcal{C(\theta)}$ of the form
\begin{align}
\begin{aligned}
    |\mathcal{A}^2(\theta)|&  + |\mathcal{C}^2(\theta)| \leq 1 \,, \qquad
    \mathcal{A}(0) = 1\,,\\
    \mathcal{A}(\theta) &= \sum_{k=0}^{{\Nphi}/2} a_k \cos (k\theta)\,,\\
    \mathcal{C}(\theta) &= \sum_{k=1}^{\Nphi/2} c_k \sin (k\theta)
\,.
\end{aligned}
\end{align}
The two other functions, $\mathcal{B}(\theta)$ and $\mathcal{D}(\theta)$ are not relevant to the calculation since projecting onto $\sket{+}_b$ leaves only $\mathcal{A}(\theta)$, $\mathcal{C}(\theta)$.
Reversing this logic, one can start from some desired functions $\mathcal{A}(\theta)$, $\mathcal{C}(\theta)$, and determine phases $\vec{\phi}$ such that $V_H^{(\theta)}(\vec \phi)$ reproduces these functions (approximately). 
This can be accomplished using an efficient classical procedure given in~\cite{Low:2016jxt}.\footnote{A recent version of QSP, \emph{generalized QSP}, significantly improves the cost of calculating phases~$\vec{\phi}$~\cite{Motlagh:2023oqc}.}

We omit now a few final details, which can  be found in~\cite{Low:2016znh}, and simply state the relevant results. 
As can be confirmed from~\cref{eq:VHGeneralMatrix}, letting the operator $V_H^{(\theta)}(\vec \phi)$ act on the state $\sket{+}_b$ and post-selecting on $\sbra{+}_b$ picks out the function 
\begin{equation}
\label{eq:Vfinal}
\begin{aligned}    
    \sbra{+}_b\sbra{g}_{a} \sbra{\lambda'}_s &V_H(\vec \phi)\sket{+}_b\sket{g}_{a} \sket{\lambda}_s
    \\
    & \qquad = \delta_{\lambda \lambda'} \left[ A(\lambda) + i C(\lambda) \right]
    .
\end{aligned}    
\end{equation}
The functions $A(\lambda)$ and $C(\lambda)$ are directly related to the functions $\mathcal{A}(\theta)$, $\mathcal{C}(\theta)$, with the detailed relation given in Ref.~\cite{Low:2016znh}.
The success probability for measuring the system in the state $\sket{+}_b\sket{g}_{a}$ for a given value $\lambda$ depends on the functions and is given by 
\begin{align}
\label{eq:QSP_p_def}
    p \geq | A^2(\lambda)| + | C^2(\lambda)|
    \,.
\end{align}
The final task is to determine which functions ${A}(\lambda)$ and ${C}(\lambda)$ (and therefore which phases $\vec{\phi}$) approximate the time evolution operator with uncertainty bounded by $\varepsilon$,
\begin{align}
\label{eq:qspapprox}
    \left| \me^{i \lambda \tilde{t}} - [ A(\lambda) + i C(\lambda)] \right| < \varepsilon
    \,.
\end{align}
Since ${A}(\lambda)$ and ${C}(\lambda)$ can be expressed in terms of Chebyshev polynomials, they can be used for approximating the real and imaginary part of $\me^{i \lambda \tilde{t}}$ through the rapidly converging Jacobi-Anger expansion~\cite{Low:2016sck}:
\begin{subequations}
\begin{align}
    & \begin{aligned}
    A(\lambda) = J_0(t) + 2\sum_{k\text{ even}>0}^{N_\phi/2} (-1)^{k/2} J_k(\tilde{t}) T_k(\lambda)\,, 
    \end{aligned}
    \\
    & \begin{aligned}
    C(\lambda) = 2 \sum_{k\text{ odd}>0}^{N_\phi/2} (-1)^{(k-1)/2} J_{k}(\tilde{t}) T_{k}(\lambda)\,,
    \end{aligned}
\end{align}
\end{subequations}
(note that $N_\phi$ is always chosen to be even).
In this case, the circuit in \cref{fig:QSP} would implement the approximate BE of the evolution operator $\me^{-iHt}$.
The number of phases needed for a given value of $\varepsilon$ is bounded by~\cite{Low:2016sck}
\begin{align}
    \label{eq:nphi}
    \Nphi = \OO\!\left(\tilde{t} + \log(1/\varepsilon)\right) = \OO\!\left(\alpha t + \log(1/\varepsilon)\right)
    .
\end{align}
Note also that~\cref{eq:QSP_p_def,eq:qspapprox} imply that 
\begin{align}
    p \geq 1 - \OO(\varepsilon)
    \,,
\end{align}
since the time evolution operator has unit magnitude.
Note that due to the $\log(1/\epsilon)$ scaling, one can require the success probability to be exponentially close to one for only a polynomial cost (see also~\Cref{footnote:QSVT}).

It is worth mentioning that the success rate of QSP is always $\OO(1)$, independent of the success rate $\OO(1/\alpha)$ of the BE $U_H$ used. 
This is due to the fact that we have used QSP to implement a unitary function ($\me^{-i H t}$) which has unit norm. 
This implies that $ A(\lambda)^2 + C(\lambda)^2 \approx 1$, which together with \cref{eq:QSP_p_def} gives a success probability that is close to one.
While this is a desirable feature, the cost of a poor success probability of the BE is now reflected in the number of phases $N_\phi$ required to approximate the time evolution operator to an error $\epsilon$, as seen by Eq.~\eqref{eq:nphi}.
For this reason, significant savings can be achieved by performing dedicated studies to minimize $\alpha$. 

A number of algorithms, based on the concept of QSP, can be used for quantum simulation.
Algorithms such as Quantum Singular Value Transformation (QSVT)~\cite{gilyen2019quantum,Lin:2022vrd} and (QETU)~\cite{Dong:2022mmq,Chan:2023lbm,Kane:2023jdo} can be utilized for near-optimal state preparation~\cite{Lin:2020zni,Dong:2022mmq,Kane:2023jdo} and construction of block encodings~\cite{bebyqsp}, while utilizing the recently developed Generalized QSP (GQSP) algorithm~\cite{Motlagh:2023oqc} can readily improve by a constant factor the gate counts obtained with the QSP procedure~\cite{bebyqsp}.

\subsection{HHKL}
\label{sec:hhkl}
The results presented so far (PFs and QSP) can be applied to general site-local Hamiltonians (see~\cref{def:loclattice}).
For geometrically-site-local Hamiltonians (see~\cref{def:geomloclattice}) one can use the fact that information from one area of the lattice can only spread to areas not directly connected to it through repeated actions of the time evolution operators, which implies that the speed with which this information spreads is bounded. 
This is formalized by the Lieb-Robinson bound~\cite{Lieb:1972wy}.
This idea was used in a paper by Haah, Hastings, Kothari and Low (HHKL)~\cite{Haah:2018ekc} to devise a scheme for simulating the time evolution of geometrically-site-local Hamiltonians. 

As before, this section is aimed at giving the reader a basic understanding of the technique, without going into too many of the details. 
In order to understand these details the reader is encouraged to read the excellent description of the algorithm in the original work, where all statements are also carefully proved. 

The main ingredient to the HHKL algorithm is that the time evolution of a large quantum system with geometrically local Hamiltonian can be obtained from the time evolution of two smaller systems that share some overlap, which has to be of at least the size given by the ball of size $\OO(1)$ used in~\cref{def:geomloclattice}. 
Thus, the system Hilbert space ${\cal H}_s$ is divided into three parts, ${\cal H}_{a,b,c}$ such that ${\cal H}_s = {\cal H}_a \otimes {\cal H}_b \otimes {\cal H}_c$.
In the discussion of~\cite{Haah:2018ekc} is assumed that the local Hamiltonian $H_J$ such that $||H_J|| \leq 1$. Thus, in the remainder we work with 
\begin{align}
    \tilde{H}_J = \frac{H_J}{||H_J||}\,, \qquad \tilde{t} = ||H_J|| t
    \,.
\end{align}
First, one uses that the evolution of a system for some long time $\tilde{t}$ can be split into the product of unitaries over shorter times $\Delta t=\OO(1)$:
\begin{align}
    U(\tilde{t}, 0) = U(\tilde{t}, \tilde{t}-\Delta t) \cdots U(\Delta t,0)
    \,.
\end{align}
The total number of such unitaries is $\OO(\tilde{t})$.

Lemma 6 of~\cite{Haah:2018ekc} states that the time evolution $U_s(\Delta t)$ of a Hamiltonian on the whole system over time $\Delta t$ can be obtained by multiplying together the forward time evolution in the system ${\cal H}_{a\cup  b} = {\cal H}_a \otimes {\cal H}_b$ and ${\cal H}_{b\cup c} = {\cal H}_b \otimes {\cal H}_c$ and the backwards evolution in the system ${\cal H}_b$ as
\begin{align}
    \label{eq:hhklmain}
    U_{a\cup b\cup c}(\Delta t) = U_{a\cup b}(\Delta t) U^\dagger_b(\Delta t) U_{b\cup c}(\Delta t)  + \OO(\delta)
    \,,
\end{align}
where the error due to this approximation is given by
%=
\begin{align}
\label{eq:HHKL_delta_def}
    \delta \equiv \delta(a,b,c) &= \mathcal{O}\left(\me^{-\mu \, {\rm d}(a,c)} \sum_{X\in {\rm BD}(ab,c)}||\tilde{H}_X||\right)
    \,.
\end{align}
For now we assume that each of the $U(\Delta t)$ on the RHS of~\cref{eq:hhklmain} is implemented exactly. 
In~\cref{eq:HHKL_delta_def} $\mu > 0$ is a constant, ${\rm d}(a,c)$ is the smallest distance between any point in $a$ and any point in $c$, and the norm $||\tilde{H}_J||$ is summed over all terms that are on the boundary of $a\cup b$ and $c$, indicated by ${\rm BD}(ab,c)$. 
The uncertainty is therefore exponentially suppressed in the distance between the regions $a$ and $c$.
This implies that this approximation works well for large enough distance $\ell \equiv {\rm d}(a,c)$.

This result can be applied recursively, splitting the evolution into smaller and smaller pieces, each of which just have to be larger than the ball of size $\sim \OO(1)$.
If the size of the original system is given by $L^\dims$ (where $\dims$ represents the dimension of the Euclidean space we consider with size $L$ in each dimension) and the final pieces are each of size  $\ell^\dims$, one ends up with 
\begin{align}
    m = \OO\!\left(\frac{L^\dims}{\ell^\dims}\right)
\end{align}
individual pieces.
Taking into account the $\mathcal{O}(\tilde  t)$ evolution operators we started from, the final error $\mathcal{O}(\varepsilon)$ can be reached by requiring 
\begin{align}
\label{eq:delta_def}
    \delta = \OO\!\left( \frac{\varepsilon}{\tilde{t} \, m} \right)
    .
\end{align}
We see that the sum of the norms of the boundary terms appearing in \cref{eq:HHKL_delta_def} is $\OO(L^{\dims-1})$ and thus from \cref{eq:HHKL_delta_def,eq:delta_def} one finds it suffices to choose
\begin{align}
    \ell = \Theta\!\left( \log \left( \frac{L^\dims \tilde{t}}{\varepsilon} \right)\right)
    .
\end{align}

One can then estimate the required resources using the fact that one needs to compute 
\begin{align}
    \OO(\tilde{t} m) = \OO\!\left( L^\dims \tilde{t} \log^{-\dims}\left(\frac{L^\dims \tilde{t}}{\eps} \right)\right)
\end{align}
time evolution operators on a system of size $\OO(\ell^\dims)$.\footnote{Note that decreasing $\eps$ is logarithmically decreasing the number of evolution operators. This number clearly has to be larger than unity, meaning $\eps$ cannot be exponentially small with system size.}
The cost and the associated additional error for implementing each of these ``local'' pieces depends on the algorithm used for their implementation.
In particular, if an
algorithm with cost polynomial in $\ell$ and polylogarithmic in $1/\eps$ is used (such as Qubitization), then the cost of the HHKL algorithm remains $\OO(L^\dims \tilde{t} \polylog(L^\dims\tilde{t}/\eps))$.
Implementing the time evolution of local blocks using PF would lead to an algorithm whose asymptotic cost dependence on $\eps$ is given by that of the employed PF method, i.e., no longer polylogarithmic in $\eps$.
This can make sense if, for a given value of $\eps$, the PF-based method leads to lower gate counts as compared to a nearly-optimal implementation~-- for a system of size $\ell$ but not of size~$L$~\cite{childs2019nearly}.

\subsection{Summary and precise statement of techniques used}
\label{sec:methods_summary}

\subsubsection{General Product Formulas}
\begin{definition}
    A $p^{\text{th}}$ order PF $S_p(t)$ is a PF of the form~\eqref{def:PF}, such that it approximates the true exponential $\me^{-iHt}$ to $p^{\text{th}}$ order in $t$, i.e.,
    \begin{equation}
        \label{eq:pth_order_PF}
        \me^{-iHt} = S_p(t) + \mathcal{O}(t^{p+1})\,.
    \end{equation}
\end{definition}

\begin{theorem}
[Trotter number with commutator scaling]
\label{them:childs_pf}
        Let $H = \sum_{\gamma=1}^{\Gamma}H_\gamma$ be an operator comprised of $\Gamma$ Hermitian summands. Let $S_p(t)$ be a $p^{\text{th}}$ order PF as defined by \eqref{eq:pth_order_PF}, with $t \geq 0$. Then, if we require an exponentiation error $\mathcal{O}(\varepsilon)$, as defined by \eqref{eq:exp_error}, it suffices to choose
        \begin{equation}\label{eq:trotter_number}
            r = \OO\!\left( \frac{\Tilde{\alpha}^{\frac{1}{p}}t^{1 + \frac{1}{p}}}{\varepsilon^{\frac{1}{p}}} \right)
            ,
        \end{equation}
        where $\Tilde{\alpha} \equiv \sum_{\gamma_1,\gamma_2,...,\gamma_{p+1} = 1}^{\Gamma} \lvert\lvert [ H_{\gamma_{p+1}},... [H_{\gamma_2},H_{\gamma_1}] ] \rvert\rvert$
\end{theorem}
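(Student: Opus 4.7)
The plan is to establish \cref{eq:trotter_number} by first proving a single-step error bound with the correct commutator structure, then summing $r$ such bounds via a telescoping argument, and finally solving the resulting inequality for $r$. More precisely, I would split the full evolution $[0,t]$ into $r$ intervals of length $\tau=t/r$, and reduce the theorem to showing that
\begin{equation}
    \|S_p(\tau) - \me^{-iH\tau}\| = \OO\!\left(\tilde{\alpha}\, \tau^{p+1}\right)
    \,,
    \nonumber
\end{equation}
since then sub-additivity of errors for unitary channels immediately gives $\|(S_p(\tau))^r - \me^{-iHt}\| = \OO(r\tilde{\alpha}\tau^{p+1}) = \OO(\tilde{\alpha} t^{p+1}/r^p)$. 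Demanding this be $\leq \varepsilon$ yields $r = \OO(\tilde{\alpha}^{1/p} t^{1+1/p}/\varepsilon^{1/p})$, which is exactly~\eqref{eq:trotter_number}.

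The non-trivial work lies entirely in the single-step bound, and this is where I would focus attention. Naively, Taylor-expanding each exponential in~\eqref{def:PF} and using $\|e^{-iH\tau}-S_p(\tau)\|=\OO(\tau^{p+1})$ from the $p$-th order property only yields a bound scaling with a power of $\sum_\gamma\|H_\gamma\|$, not with $\tilde{\alpha}$. To obtain the commutator scaling I would use the variation-of-parameters / interaction-picture trick from the Childs--Su--Tran--Wiebe--Zhu analysis: define $E(\tau) \equiv S_p(\tau)\me^{iH\tau}-\unit$, differentiate, and exploit that the $p$-th order condition forces all terms of degree $\leq p$ in the Taylor series of $E'(\tau)$ to cancel. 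What survives is a sum of products of $p+1$ operators $H_{\gamma_1}\cdots H_{\gamma_{p+1}}$, and symmetrization across the factorial re-orderings induced by the permutations $\mathcal{P}_\upsilon$ and the structure of $\me^{iH\tau}$ collapses each surviving monomial into a nested commutator $[H_{\gamma_{p+1}},\dots,[H_{\gamma_2},H_{\gamma_1}]]$. Integrating $E'(\tau)$ from $0$ to $\tau$ and taking norms then produces precisely $\tilde{\alpha}\,\tau^{p+1}$ up to a $p$-dependent constant.

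The main obstacle I anticipate is justifying this commutator collapse cleanly rather than by brute force bookkeeping: one must argue that \emph{every} degree-$(p+1)$ monomial, including those arising from the interaction-picture conjugation by $\me^{iH\tau}$, reorganizes into nested commutators. The conceptual lever is that $S_p(\tau)$ and $\me^{-iH\tau}$ agree to order $\tau^p$, so by Hermiticity and the group property any non-commutator contribution at order $\tau^{p+1}$ would lower the order of agreement. A clean way to formalize this is to write the error as a time-ordered integral (Duhamel's formula applied iteratively $p+1$ times) and observe that the absence of lower-order errors forces the innermost iterated bracket to be a commutator at each nesting level. Once this structural statement is in hand, taking operator norms under the triangle inequality distributes into the unrestricted sum defining $\tilde{\alpha}$, and the rest of the argument proceeds as outlined above. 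Since the theorem is stated as an asymptotic $\OO$-bound and we are allowed to absorb $p$-dependent constants, I would not attempt to track the factorials sharply; the qualitative commutator scaling is all that \eqref{eq:trotter_number} requires.
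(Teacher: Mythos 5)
Your proposal reconstructs the proof of the underlying result, whereas the paper itself does not prove this theorem at all: its ``proof'' is a one-line citation to Theorem 11 and Corollary 12 of Childs--Su--Tran--Wiebe--Zhu (Ref.~\cite{Childs:2019hts}). So the comparison here is really between your sketch and the proof in that reference. Your skeleton matches it: split $[0,t]$ into $r$ steps of length $\tau=t/r$, use unitarity of each $S_p(\tau)$ (which holds since the $a_\upsilon^{(\gamma)}$ are real and the $H_\gamma$ Hermitian) to telescope the error sub-additively, reduce to the one-step bound $\|S_p(\tau)-\me^{-iH\tau}\|=\OO(\tilde\alpha\,\tau^{p+1})$, and solve $\tilde\alpha\, t^{p+1}/r^p\lesssim\varepsilon$ for $r$. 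That is exactly how Corollary 12 of the reference follows from their Theorem 11, and the interaction-picture / variation-of-parameters device you invoke for the one-step bound is their Theorem 5 (the error representation in terms of iterated integrals of conjugated operators).

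The soft spot is the ``commutator collapse'' step, and the heuristic you offer to justify it does not actually do the work. You argue that ``any non-commutator contribution at order $\tau^{p+1}$ would lower the order of agreement,'' but agreement of $S_p(\tau)$ with $\me^{-iH\tau}$ to order $\tau^p$ constrains only the coefficients of $\tau^j$ for $j\le p$; it says nothing about the algebraic form of the $\tau^{p+1}$ coefficient. The correct conceptual lever is different: the product formula is \emph{exact} when all the $H_\gamma$ mutually commute, so the leading error term --- a multilinear expression of degree $p+1$ in the $H_\gamma$ --- must vanish on commuting inputs, and Ref.~\cite{Childs:2019hts} converts this into an explicit nested-commutator representation via their order conditions and a cancellation law for the iterated Duhamel integrals, not via a soft order-counting argument. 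This is genuinely the technical heart of the theorem (their Sections 3--5 and the appendices), and your sketch would need to import or reproduce that machinery to close. Given that the paper disposes of the whole issue by citation, your proposal is a reasonable blind reconstruction of the right argument, but as written the key structural claim is asserted rather than proved, and the stated justification for it is not valid.
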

\begin{proof}
    This theorem follows directly from Theorem 11 and Corollary 12 of~\cite{Childs:2019hts}.
\end{proof}

\subsubsection{QSP}
\label{sec:QSP}
In this section we summarize the results of the section on QSP in a concise and precise form which will be referred to later in the text.

\begin{definition}
[Block encoding]
\label{def:be}
Let $H$ be a Hamiltonian acting on the $n$-qubit Hilbert space $\mathcal H_s\cong \mathbb C^{2^n}$.
Let $k \ge  0$ be an integer and let $U_H$ be a unitary operator on the joint Hilbert space $\mathcal H_a \otimes \mathcal H_s$, where $\mathcal H_a \cong \mathbb C^{2^k}$ denotes the Hilbert space of $k$ ancillas.
Then, for $\alpha\in \mathbb R^+$, the unitary operator $U_A$ is called an $(\alpha,k)$\textit{-block-encoding} of $A$ if for every state $\sket{\psi} \in \mathcal H_s$:
\begin{equation}
\label{eq:lcusimple}
\left(\unit_{s}\otimes \sbra{g}_g\right)U_A \left(\unit_s\otimes \sket{g}_a\right) = A/\alpha\,,
\end{equation}
where $\sket{g}_a$ is some state in $\mathcal H_a$, $\unit_{s}$ is the identity operator on $\mathcal H_s$.
The value $\alpha$ chosen such that $\|A\|/\alpha \leq 1$ is called the \textit{scale factor} of the BE.
\end{definition}

BEs can be constructed using LCU as follows
\begin{lemma}
[LCU block encoding]
\label{LCU}
    Let $U_0, \ldots, U_{M-1}$ be $M$ unitary operators acting on a system Hilbert space ${\cal H}_s$ and define the operator $T = \sum_{i=0}^{M-1} \beta_i U_i$. The LCU method provides a $(\beta, \log M)$ BE of the operator $T$, where $\beta \equiv \sum_{i=0}^{M-1} \| \beta_i \|$.
    The gate complexity is given by $\OO(M \log M)$ elementary gates.
\end{lemma}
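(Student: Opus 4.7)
The plan is to verify that the circuit $U_T = (\PREPARE^\dagger \otimes \unit_s)\,\SELECT\,(\PREPARE \otimes \unit_s)$ constructed in~\cref{eq:lcube} satisfies~\cref{def:be} with scale factor $\alpha=\beta$ and $k=\lceil\log_2 M\rceil$, and then to account separately for the ancilla count and the elementary gate count. I would take $\sket{g}_a=\sket{0}_a$ without loss of generality and define $\PREPARE$ and $\SELECT$ as in~\cref{eq:prepstate,eq:lcuselect}, with the convention that the amplitudes $\beta_i$ are absorbed into the unitaries $U_i$ so that only the non-negative weights $|\beta_i|$ appear under the square root in $\PREPARE$.

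The central step is the direct calculation
\begin{equation}
\begin{aligned}
(\sbra{0}_a\otimes\unit_s)\,U_T\,(\sket{0}_a\otimes\unit_s)
&= (\sbra{\beta}_a\otimes\unit_s)\,\SELECT\,(\sket{\beta}_a\otimes\unit_s) \\
&= \sum_{i=0}^{M-1}\frac{|\beta_i|}{\beta}\,(\beta_i/|\beta_i|)\,U_i \\
&= T/\beta,
\end{aligned}
\end{equation}
where the first equality uses the definition of $\sket{\beta}_a$ from~\cref{eq:prepstate}, the second uses the controlled structure of $\SELECT$ from~\cref{eq:lcuselect} together with orthonormality of the computational basis in $\mathcal H_a$, and the last collects terms. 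Since $\|T\|\leq\sum_i|\beta_i|=\beta$, the normalization condition $\|T\|/\beta\leq 1$ from~\cref{def:be} is satisfied, establishing that $U_T$ is a $(\beta,\lceil\log_2 M\rceil)$-block-encoding of $T$.

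For the resource counts I would argue as follows. The ancilla register must index $M$ terms, which requires $k=\lceil\log_2 M\rceil$ qubits, matching the claim. For gate complexity, $\PREPARE$ prepares an arbitrary real-amplitude state on $k$ qubits, which by standard state-preparation circuits can be realized with $\OO(2^{k})=\OO(M)$ elementary gates; $\PREPARE^\dagger$ contributes the same cost. The $\SELECT$ oracle consists of $M$ operators $U_i$, each controlled on a distinct $k$-bit binary pattern; every such $k$-controlled gate decomposes into $\OO(k)=\OO(\log M)$ elementary gates (with a small number of workspace ancillas or using the standard Barenco-style decomposition), giving a $\SELECT$ cost of $\OO(M\log M)$. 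Summing yields the stated $\OO(M\log M)$ elementary gate count.

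The main obstacle is the gate-count bookkeeping rather than the algebraic verification: one needs to be careful that the elementary-gate decomposition of each multi-controlled $U_i$ and the cost accounting for arbitrary state preparation in $\PREPARE$ are both consistent with the $\OO(M\log M)$ claim, since naive decompositions of a $k$-controlled unitary without ancillas cost $\OO(k^2)$ and would give a worse bound. I would therefore invoke the standard construction that achieves $\OO(k)$ per multi-controlled gate (e.g.\ using a few clean ancillas), noting that this is asymptotically dominated by the $\OO(M\log M)$ total from $\SELECT$ and absorbs the $\OO(M)$ cost of $\PREPARE$ and $\PREPARE^\dagger$.
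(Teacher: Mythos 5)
Your proposal is correct and follows essentially the same route as the paper: the paper defers the algebraic verification of the block-encoding property to standard references and, like you, attributes the $\OO(M\log M)$ cost to the $\SELECT$ oracle ($M$ unitaries each controlled on $\lceil\log_2 M\rceil$ qubits, decomposed at linear cost in the number of controls using ancillas), with the $\OO(M)$ cost of $\PREPARE$ being subdominant. Writing out the sandwich calculation explicitly is a fine addition but does not change the argument.
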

\begin{proof}
    The proof of this Lemma can be found in many textbooks, see for example~\cite{Lin:2022vrd}.
    The asymptotic cost stems from the complexity of implementing the $\SELECT$ subroutine, since the cost of preparing an arbitrary state on $\log_2 M$ qubits in $\PREPARE$ is $\OO(M)$~\cite{Shende:2006onn,Plesch:2011vwn}, which is cheaper than the cost of $\SELECT$, as we explain next.
    The gate complexity of $\SELECT$ is $\OO(M \log M)$, which comes from sequentially implementing $M$ unitaries, each being controlled on $k=\log M$ ancillary qubits.
    Decomposing $k$-controlled gates into elementary gates can be achieved at a cost linear in $k$, with the constant prefactor being smallest if $k$ additional ancillary qubits are allowed to be used~\cite{Nielsen:2012yss,Babbush:2018ywg}.
    For applications considered in this work, this gives a minor overhead in the qubit cost since $k$ is logarithmic is the parameters of the system.
\end{proof}

Given a BE of the Hamiltonian, one can obtain an approximation to the time evolution operator using QSP.
\begin{theorem}
[QSP with Qubitization]
\label{thm:qspwq}
Let $U_H$ be a an $(\alpha,k)$-block-encoding of a Hamiltonian $H$ acting on a Hilbert space $\mathcal H_s$.
Furthermore, let $t$ and $\eps$ be positive real values. 
Then there is a quantum circuit that produces a $\left(1 \pm \mathcal O(\eps),k+2\right)$-block-encoding of the time evolution operator $\me^{-iHt}$ up to exponentiation error $\eps$, and possesses a complexity of \begin{equation} \OO\!\left(\alpha t + \log(1/\eps)\right), \end{equation} relative to the oracle $U_H$.
The probability for this BE to succeed is
\begin{align}
    p > 1 - \OO(\varepsilon)\,.
\end{align}
\end{theorem}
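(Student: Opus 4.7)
The plan is to assemble the circuit of \cref{fig:QSP} and then show that, with appropriately chosen QSP phases, it satisfies the block-encoding, error, and success-probability claims simultaneously. The pipeline has three stages: (i) pass from $U_H$ to a qubitized block encoding $W_H$; (ii) build $V_H(\phi)$ from $W_H$ by adding one more ancilla and controlling on it as in~\cref{eq:VHdef}; (iii) choose a sequence $\vec\phi=(\phi_1,\dots,\phi_{\Nphi})$ so that the resulting functions $A(\lambda),C(\lambda)$ approximate $\cos(\tilde t\lambda)$ and $-\sin(\tilde t\lambda)$ to accuracy $\eps$.

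For step (i), I would invoke the construction of~\cref{sssec:qubitization}: if $U_H$ can be taken self-inverse (as is automatic for the LCU block encodings considered here), then $W_H=(R_g\otimes\unit_s)U_H$ works with no extra ancilla, so the qubitized block encoding still uses $k$ ancillae; in the fully general case, the construction in \Cref{app:general_W} adds one ancilla, yielding $k+1$. For step (ii), the standard $V_H(\phi)$ gate of \cref{fig:VH_bas} uses one additional ancilla $b$, bringing the total to $k+2$, which matches the stated ancilla count. For step (iii), I would use the Jacobi--Anger expansion, truncated at order $\Nphi$, to produce polynomials $A(\lambda),C(\lambda)$ satisfying \cref{eq:qspapprox} after rescaling $t\mapsto \tilde t=\alpha t$; the classical phase-finding algorithm of Low--Chuang then produces $\vec\phi$ realizing these polynomials via \cref{eq:VHGeneralMatrix,eq:Vfinal}.

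The complexity bound $\OO(\alpha t+\log(1/\eps))$ follows from the standard tail estimate for Bessel coefficients $J_k(\tilde t)$: they decay super-exponentially once $k\gtrsim \tilde t$, so truncating the Jacobi--Anger series at
\begin{equation}
\Nphi=\OO(\tilde t+\log(1/\eps))=\OO(\alpha t+\log(1/\eps))
\end{equation}
produces an $\eps$-approximation in $L^\infty$ on $[-1,1]$, which is precisely \cref{eq:nphi}. Each $V_H(\phi)$ costs $\OO(1)$ queries to $W_H$, hence $\OO(1)$ queries to $U_H$, so the overall query complexity is $\Nphi$. For the success probability, since $|\me^{-i\tilde t\lambda}|=1$, the $\eps$-approximation of \cref{eq:qspapprox} forces $|A(\lambda)|^2+|C(\lambda)|^2\ge 1-\OO(\eps)$ uniformly in $\lambda$, which plugged into \cref{eq:QSP_p_def} yields $p\ge 1-\OO(\eps)$.

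The main technical obstacle is not any single step but the bookkeeping that the composition of (i)--(iii) actually produces a valid block encoding of $\me^{-iHt}$ with the advertised scale factor: one needs the identity $\sbra{+}_b\sbra{g}_a V_H(\vec\phi)\sket{+}_b\sket{g}_a=A(\tilde H/\alpha)+iC(\tilde H/\alpha)$ on $\mathcal H_s$, which in turn requires that $W_H$ is block-diagonal on the $\operatorname{span}\{\sket{g}_a\sket{\lambda}_s,\sket{\bot^\lambda}_{as}\}$ subspaces (\cref{eq:wwtilde}). I would verify this by using \cref{eq:Whdef,eq:wwtildematrix} to diagonalize $W_H$ eigenspace-by-eigenspace of $\tilde H$, propagate the eigenbasis through the construction \eqref{eq:vprod}, and read off \cref{eq:Vfinal}. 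Once that is in place, the triangle inequality combined with \cref{eq:qspapprox} gives the $(1\pm\OO(\eps),k+2)$ block-encoding claim, and the remaining pieces have already been assembled above. Full details are deferred to~\cite{Low:2016znh}.
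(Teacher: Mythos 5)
Your proposal is correct and follows essentially the same route as the paper, which likewise defers to the pedagogical pipeline of \cref{sssec:qubitization,sec:qsptimeevol} (qubitization to $W_H$, the $V_H(\phi)$ construction, Jacobi--Anger truncation at $\Nphi=\OO(\alpha t+\log(1/\eps))$) and to~\cite{Low:2016znh} for rigor, while justifying the $k+2$ ancilla count and the $1\pm\OO(\eps)$ scale factor exactly as you do. Your additional observation that the self-inverse case saves the qubitization ancilla is consistent with the paper's \cref{eq:Whdefas} and does not change the stated bound.
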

\begin{proof}
We will not prove this result rigorously, but point the reader to the pedagogical sections of this paper, with all statements made there having been rigorously proved in the literature~\cite{Low:2016znh}. 
The fact that the BE is $(\tilde \alpha,\tilde k) = (1 \pm \mathcal O(\eps),k+2)$ can be easily understood.
First, the BE of the time evolution operator requires two additional qubits, one ($\sket{q}$) for the qubitization and one ($\sket{b}$) to add the functional dependence on $\phi$, giving $\tilde k = k+2$.
The fact that $\tilde \alpha = 1 \pm \mathcal O(\varepsilon)$ follows from the fact that the operator we are block-encoding is a unitary operator, and our approximation is good up to error $\varepsilon$. Since the norm of a unitary operator is 1, this gives that both $\tilde \alpha$ and the success probability are unity (up to $\OO(\varepsilon)$). 
\end{proof}

\subsubsection{HHKL}

\begin{theorem}
[HHKL Algorithm]
\label{thm:hhkl_helper}
Let $\Lambda$ be a lattice of qubits embedded in $\dims$-dimensional Euclidean space of size $L$ in each dimension, and let $\hat H$ be a geometrically-local Hamiltonian, normalized such that each local term has a norm of order unity. 
Furthermore, let $\tilde{t}$ and $\eps$ be positive real values. 
Then there exists an algorithm that approximates $\me^{-i\tilde{H}\tilde{t}}$ to exponentiation error $\varepsilon$, which requires\footnote{We assume that the error $\eps$ large enough that the number of calls to the helper function is larger than one.}
\begin{align}
\label{eq:HHKL_calls}
    \OO\!\left( \frac{L^\dims \tilde{t}}{ \log^{\dims}\left(L^\dims \tilde{t}/\eps \right)}\right)
\end{align}
calls to a helper function that implements the time evolution for a system of size
\begin{align}
\label{eq:HHLK_ell_def}
    \ell = \Theta\!\left(\log\left(L^\dims \tilde{t} / \varepsilon\right)\right)
    ,
\end{align}
with error 
\begin{align}
\label{eq:HHLK_delta_def}
    \delta = \OO\!\left( \frac{\varepsilon \log^\dims(L^\dims \tilde{t} / \varepsilon)}{L^\dims \tilde{t}}\right)
\end{align}
\end{theorem}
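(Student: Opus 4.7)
The plan is to follow exactly the scheme already sketched in~\cref{sec:hhkl}, promoting the heuristic counting there to a genuine proof by invoking Lemma~6 of~\cite{Haah:2018ekc} as the atomic step and then applying it recursively. First I would slice the full evolution time into pieces of size $\Delta t = \OO(1)$, so that $U(\tilde t, 0)$ is a product of $\OO(\tilde t)$ short-time unitaries $U(\Delta t)$. Because each local term $\tilde H_J$ has unit norm, these short-time evolutions satisfy the hypotheses of Lemma~6 of~\cite{Haah:2018ekc} on any tripartition $\mathcal H_s = \mathcal H_a \otimes \mathcal H_b \otimes \mathcal H_c$, and hence admit the HHKL decomposition~\cref{eq:hhklmain} with error $\delta(a,b,c)$ given by~\cref{eq:HHKL_delta_def}.

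Next, I would apply this decomposition recursively to split the $L^\dims$ lattice into smaller blocks, stopping when each block has linear dimension $\ell$. A standard geometric bookkeeping (e.g.\ a dyadic partition in each of the $\dims$ coordinate directions) gives $m = \OO(L^\dims/\ell^\dims)$ blocks of size $\ell^\dims$, with $\OO(m)$ applications of the decomposition per short-time step. The boundary sum $\sum_{X \in \mathrm{BD}(ab,c)}\|\tilde H_X\|$ that appears in each invocation is $\OO(\ell^{\dims-1})$ for blocks of the recursion (bounded by the largest boundary encountered, which is polynomial in $\ell$ and $L$). Because the exponential decay factor $e^{-\mu\,\mathrm d(a,c)}$ with $\mathrm d(a,c) = \Theta(\ell)$ crushes any such polynomial prefactor, each application of~\cref{eq:hhklmain} contributes error at most $\delta \leq C\,\mathrm{poly}(L)\,e^{-\mu \ell}$.

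By the triangle/unitary-invariance bound for products of unitaries, the errors from the $\OO(\tilde t\,m)$ invocations add, so it suffices to demand $\delta = \OO(\varepsilon/(\tilde t\,m))$. Solving $C\,\mathrm{poly}(L)\,e^{-\mu \ell} = \OO(\varepsilon/(\tilde t\,m))$ for $\ell$ yields $\ell = \Theta(\log(L^\dims \tilde t/\varepsilon))$, which is exactly~\cref{eq:HHLK_ell_def}. Inverting this gives $m = \OO(L^\dims/\log^\dims(L^\dims\tilde t/\varepsilon))$, so the total number of short-time block evolutions is
\begin{equation}
\OO(\tilde t\,m) = \OO\!\left(\frac{L^\dims \tilde t}{\log^\dims(L^\dims \tilde t/\varepsilon)}\right),
\end{equation}
matching~\cref{eq:HHKL_calls}, and the per-call tolerance needed on the helper subroutine is~\cref{eq:HHLK_delta_def}.

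The main obstacle, and the place where I would lean most heavily on the original paper, is verifying rigorously that errors from the recursive HHKL splittings really do accumulate only additively across the $\OO(\tilde t\,m)$ invocations, and that the boundary-weight factor in~\cref{eq:HHKL_delta_def} stays at most $\mathrm{poly}(L,\ell)$ at every level of recursion, so that the exponential decay in $\ell$ dominates. This is essentially a careful Lieb--Robinson-type argument, carried out in detail in~\cite{Haah:2018ekc}; once one grants their Lemma~6 and the standard unitary-error composition $\|U_1 U_2 - V_1 V_2\| \leq \|U_1 - V_1\| + \|U_2 - V_2\|$, the rest is the bookkeeping sketched above.
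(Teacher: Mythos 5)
Your proposal is correct and follows essentially the same route as the paper: the paper's proof simply defers to the counting argument in \cref{sec:hhkl} (time slicing into $\OO(\tilde t)$ steps, recursive application of Lemma~6 of~\cite{Haah:2018ekc}, $m=\OO(L^\dims/\ell^\dims)$ blocks, and solving $\delta=\OO(\eps/(\tilde t\, m))$ for $\ell$), with the rigorous Lieb--Robinson details delegated to the original HHKL paper exactly as you do. Your write-up is a somewhat more explicit version of that same argument, so there is nothing further to flag.
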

\begin{proof}
    The results follow directly from our discussion in ~\cref{sec:hhkl}.
    For formal proofs of the results stated there we refer the reader to the original paper~\cite{Haah:2018ekc}.
\end{proof}
 \section{Simulating Time Evolution on the Lattice\label{sec:lattice_algorithms}}
In this section, we apply the algorithms and methods of Section \ref{sec:review_alg} to the case of Hamiltonians defined in~\cref{sec:definitions}. 
Similar to
\Cref{sec:review_alg}, this section is organized by the techniques used (PFs, QSP, and HHKL).
For each of these sections, we begin by considering the case of $\OO(1)$-site-local RHLFTs defined on $\OO(1)$-many sites, which behave very similarly to generic quantum systems, and will serve as a review of the relationship between our notation and other common notations.
We then proceed to the case of $\loc$-site-local and $\loc$-geometrically-site-local Hamiltonians, whose simulation can roughly be categorized as being composed of simulation methods for $\OO(1)$-site-local Hamiltonians that are ``glued'' via PFs, LCU, or HHKL, depending on the flavor of locality whose preservation is desired.

\subsection{Product Formulas}\label{PF_NH}
Before we investigate the scaling of PFs for different local Hamiltonians with different numbers of lattice sites, we give two results that apply to all Hamiltonians considered in this work. 

The general results on PFs depended on a parameter $\tilde \alpha$, which was related to the commutator of terms in the Hamiltonian. Our first Lemma relates this to the norm of the Hamiltonian
\begin{lemma}
    \label{lemma:comm}
    Let H be a $\loc$-site-local Hamiltonian, as defined in \cref{def:loclattice}. Then the nested commutator $\Tilde{\alpha} \equiv \sum_{J_1,J_2,\dots,J_{p+1} \in S} \lvert\lvert [ H_{J_{p+1}}, \dots [H_{J_2},J_{J_1}] ] \rvert\rvert$ can be asymptotically bounded as
    \begin{equation}
        \Tilde{\alpha} = \OO\!\left( |||H|||_{1}^{p} ||H||_1 \right) .
    \end{equation}
\end{lemma}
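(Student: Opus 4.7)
The plan is to combine the elementary commutator bound $\|[A,B]\|\le 2\|A\|\|B\|$ with the observation that locality forces most of the terms in the multi-sum defining $\tilde\alpha$ to vanish. Concretely, if $J_{k+1}$ is disjoint from $\bigcup_{i\le k} J_i$, then $H_{J_{k+1}}$ commutes with each $H_{J_i}$ for $i\le k$, hence with any polynomial in them, and in particular with the inner nested commutator $[H_{J_k},\dots,[H_{J_2},H_{J_1}]]$. A straightforward induction on $p$ then shows that the full nested commutator $[H_{J_{p+1}},\dots,[H_{J_2},H_{J_1}]]$ vanishes unless every $J_{k+1}$ shares at least one lattice site with $\bigcup_{i\le k} J_i$. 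This reduces the sum to what I will call \emph{connected chains} of indices.

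Given this, I would iterate the commutator bound to get $\|[H_{J_{p+1}},\dots,[H_{J_2},H_{J_1}]]\|\le 2^p\prod_{i=1}^{p+1}\|H_{J_i}\|$, and then peel the sums from outside in. The outermost sum $\sum_{J_1}\|H_{J_1}\|$ is just $\|H\|_1$. For each fixed tuple $J_1,\dots,J_k$, the restricted sum over $J_{k+1}$ is controlled by
\begin{equation*}
\sum_{\substack{J_{k+1}\in S\\ J_{k+1}\cap\bigcup_{i\le k}J_i\neq\emptyset}}\|H_{J_{k+1}}\|
\;\le\;\sum_{j\in\bigcup_{i\le k}J_i}\;\sum_{J\ni j}\|H_J\|,
\end{equation*}
and the inner sum $\sum_{J\ni j}\|H_J\|$ is at most $\loc\,|||H|||_1$, since a term that contains the site $j$ places it at one of $\loc$ possible positions, each of which contributes a sum of the form appearing in the definition~\eqref{def:induced_1_norm}. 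Combined with $|\bigcup_{i\le k}J_i|\le k\loc$, this gives $\sum_{J_{k+1}}\|H_{J_{k+1}}\|\le k\loc^2\,|||H|||_1$, so peeling off $J_2,\dots,J_{p+1}$ successively yields $\prod_{k=1}^p k\loc^2|||H|||_1=p!\,\loc^{2p}\,|||H|||_1^p$. Multiplying by $2^p$ and by the outer $\|H\|_1$ produces the claimed bound $\tilde\alpha=\OO(\|H\|_1\,|||H|||_1^p)$, where all factors of $2^p$, $p!$, and $\loc^{2p}$ are absorbed into the $\OO$ since $p,\loc=\OO(1)$.

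The main obstacle is the locality-based vanishing argument that justifies restricting to connected chains; once that is in place, the rest is a clean peeling argument. The subtlety is that it is not enough that some $J_k$ overlap its immediate predecessor $J_{k-1}$ — what one really needs is the weaker necessary condition that every $J_{k+1}$ overlap the \emph{union} $\bigcup_{i\le k} J_i$, which is exactly what the commutation-with-polynomials argument delivers. Care must also be taken that the bound $\sum_{J\ni j}\|H_J\|\le\loc\,|||H|||_1$ uses the $\max$ over both $\ell$ and $j_\ell$ in~\eqref{def:induced_1_norm}; this relies on the fact that fixing any position $\ell$ and summing over the remaining indices always produces at most $|||H|||_1$, so summing over $\loc$ choices of $\ell$ yields the stated factor of $\loc$.
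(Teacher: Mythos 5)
Your argument is correct. Note, however, that the paper does not actually supply a proof of this lemma: it simply defers to Appendix~G of Childs \emph{et al.}~\cite{Childs:2019hts}, and your write-up is essentially a self-contained reconstruction of that reference's argument. The two key steps — (i) the nested commutator $[H_{J_{p+1}},\dots,[H_{J_2},H_{J_1}]]$ vanishes unless each $J_{k+1}$ intersects the support $\bigcup_{i\le k}J_i$ of the inner commutator, and (ii) peeling the constrained sums from the outermost commutator inward, bounding each constrained sum by a multiple of $|||H|||_1$ and the final unconstrained sum over $J_1$ by $\|H\|_1$ — are exactly the mechanism behind the cited result. Your handling of the two subtleties you flag is also right: the necessary condition is overlap with the \emph{union} of all previous supports (not just the immediate predecessor), and the conversion $\sum_{J\ni j}\|H_J\|\le \loc\,|||H|||_1$ is needed because the paper's induced norm~\eqref{def:induced_1_norm} fixes a position $\ell$ within the multi-index rather than summing over all terms containing a given site; the resulting factors of $2^p$, $p!$, and $\loc^{2p}$ are harmlessly absorbed since $p,\loc=\OO(1)$. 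The only cosmetic remark is that since the paper treats $J$ as an (ordered) multi-index, the phrase ``one of $\loc$ possible positions'' is well defined; if $J$ were a genuinely unordered set one would need to fix a convention, but this does not affect the bound.
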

\begin{proof}
    This result has been proven in Appendix G of \cite{Childs:2019hts}.
\end{proof}

The next Lemma will give an expression that will allow to combine a PF with another approximation to the exponentials that make up the PF. 
The PF introduced in~\cref{def:PF} is used as an approximation to $\me^{-i H t}$, and another approximation $U_\upsilon^{(J)}(t/r)$ is used to approximate each term of the form $\me^{-ita_{\upsilon}^{(J)}H_{J}/r}$.
\begin{lemma}
    \label{lemma:error_delta}
    Let $H$ be a site-local Hamiltonian as defined by ~\ref{def:loclattice}. 
    Let $S_p(t)$ be a $p^{\text{th}}$ order PF for $\me^{-i H t}$ as given in~\cref{def:PF}, and let the unitary $U_\upsilon^{(J)}(t/r)$ be an approximation to $\me^{-ita_{\upsilon}^{(J)}H_{J}/r}$ for all choices of allowed pairs $(\upsilon,J)$, with $\upsilon \in \{1,\dots,\Upsilon (p)\}$ and $J \in S$. 
    Define the PF $\tilde S_p(t)$ as the PF for $\me^{-i H t}$ obtained by replacing $\me^{-ita_{\upsilon}^{(J)}H_{J}/r}$ in~\cref{def:PF} with $U_\upsilon^{(J)}(t/r)$
    \begin{align}
        \tilde S_p(t/r) \equiv \prod_{\upsilon=1}^{\Upsilon(p)}\mathcal{P}_\upsilon\left(\prod_{J \in S}U_\upsilon^{(J)}(t/r)\right)
        \,.
    \end{align}
    If 
    \begin{align}
        \left\|U_\upsilon^{(J)}(t/r) - \me^{-ita_{\upsilon}^{(J)}H_{J}/r}\right\| = \OO\!\left( \frac{\varepsilon}{\NH r \Upsilon(p)} \right)
        ,
    \end{align}
    it is true that
    \begin{equation}
        \left\|\tilde S_p(t) - \me^{-iHt}\right\| = \OO(\varepsilon)\,.
    \end{equation}
\end{lemma}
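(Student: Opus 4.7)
The plan is to split the total error via the triangle inequality,
$$\|\tilde S_p(t) - \me^{-iHt}\| \leq \|\tilde S_p(t) - S_p(t)\| + \|S_p(t) - \me^{-iHt}\|,$$
where $\tilde S_p(t)$ and $S_p(t)$ are understood as the $r$-fold iterations of the one-step formulas $\tilde S_p(t/r)$ and $S_p(t/r)$, respectively. The second term is already controlled: invoking~\cref{them:childs_pf} with the Trotter number $r$ chosen as in~\cref{eq:trotter_number} yields $\|S_p(t) - \me^{-iHt}\| = \OO(\varepsilon)$. The substantive content of the lemma is therefore in bounding the first term, which isolates the error introduced by replacing each exact exponential $\me^{-ita_{\upsilon}^{(J)}H_J/r}$ appearing in $S_p(t)$ with the approximation $U_\upsilon^{(J)}(t/r)$.

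For this I would invoke the standard subadditivity of error across products of unitaries: if $\{V_i\}_{i=1}^k$ and $\{W_i\}_{i=1}^k$ are unitary on a common Hilbert space, then $\|V_k \cdots V_1 - W_k \cdots W_1\| \leq \sum_{i=1}^{k} \|V_i - W_i\|$, which follows from a one-step telescoping argument and the fact that unitaries preserve the spectral norm. Writing $\tilde S_p(t)$ and $S_p(t)$ as ordered products of their constituent exponentials and applying this bound, the total number of factors being compared is exactly $k = r \cdot \Upsilon(p) \cdot \NH$, one factor for each triple (Trotter step, PF stage $\upsilon$, Hamiltonian summand $H_J$). By hypothesis each pairwise discrepancy is at most $\OO(\varepsilon / (\NH r \Upsilon(p)))$, so the sum is $\OO(\varepsilon)$, and the claim follows by adding the two contributions.

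The main obstacle is essentially bookkeeping rather than analytical. One must verify that the permutations $\mathcal{P}_\upsilon$ merely reorder factors without duplicating or omitting any, that the count $r\Upsilon(p)\NH$ correctly enumerates the exponentials in both $\tilde S_p(t)$ and $S_p(t)$ factor-by-factor so that the subadditivity lemma applies directly, and that the implicit identification $\tilde S_p(t) = [\tilde S_p(t/r)]^r$ is the intended reading of the statement. Once these points are checked, the proof reduces to the triangle inequality together with the subadditivity bound stated above, with no further non-trivial estimates required.
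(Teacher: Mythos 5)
Your proposal is correct and follows essentially the same route as the paper's proof: a triangle-inequality split into the intrinsic PF error $\|S_p(t)-\me^{-iHt}\|=\OO(\varepsilon)$ (assuming $r$ chosen per \cref{them:childs_pf}) plus the substitution error, with the latter bounded by telescoping subadditivity over the $r\,\NH\,\Upsilon(p)$ unitary factors. Your explicit attention to the bookkeeping (that $\mathcal{P}_\upsilon$ only reorders factors and that $\tilde S_p(t)=[\tilde S_p(t/r)]^r$) makes the argument slightly more careful than the paper's, which leaves these points implicit.
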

\begin{proof}

Demanding that the new PF still reproduces the time evolution operator with the same exponentiation error we can write 
\begin{align}
\begin{aligned}
    & \left\| \me^{-i H t} - \tilde S_p(t) \right\|\\
    &= \left\| \me^{-i H t} - S_p(t) + \left[S_p(t/r)\right]^r - \left[\tilde S_p(t/r)\right]^r \right\|\\
    &\leq \OO(\varepsilon) + \left\|\left[\tilde S_p(t/r)\right]^r - \left[S_p(t/r)\right]^r \right\|\\
    & = \OO(\varepsilon) + r \, \NH \Upsilon(p) \left\|U_\upsilon^{(J)}(t/r) - \me^{-it/ra_{\upsilon J}H_{J}} \right\|\\
    & = \OO(\varepsilon)
    \,.
\end{aligned}
\end{align}
\end{proof}

\subsubsection{General site-local Hamiltonian\label{sssec:pfsl}}
We can now give results that apply to both general site-local Hamiltonians (see~\cref{def:loclattice}).
\begin{theorem}[Product Formulas for site-local Hamiltonians]\label{thm:PF_NH}
    Let $H$ be a site-local Hamiltonian as defined in~\cref{def:loclattice} or~\cref{def:geomloclattice}. 
    To simulate the time evolution operator $\me^{-iHt}$ up to exponentiation error $\mathcal{O}(\varepsilon)$ it suffices to choose a Trotter number $r$ such that:
    \begin{equation}\label{eq:PF_NH_r}
        r = \OO\!\left( \Nind\NH^{1/p}\frac{\left( \NP \cP t \right)^{1 + \frac{1}{p}}}{\varepsilon^{\frac{1}{p}}} \right).
    \end{equation}  Furthermore, if the Trotter number $r$ is chosen with tight the asymptotic scaling
    \begin{equation}\label{eq:r_omega}
        r = \Theta\!\left( \Nind \NH^{1/p}\frac{\left( \NP \cP t \right)^{1 + \frac{1}{p}}}{\varepsilon^{\frac{1}{p}}} \right),
    \end{equation}
    then the associated elementary gate count $\chi$ can be asymptotically bounded as
    \begin{equation}\label{eq:PF_NH_gates}
        \chi = \OO\!\left( \Nind \NP  \frac{\left( \NH\NP \cP t \right)^{1 + \frac{1}{p}}}{\varepsilon^{\frac{1}{p}}} \right).
    \end{equation}
\end{theorem}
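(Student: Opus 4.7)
The plan is to combine three earlier results: \cref{them:childs_pf} for the Trotter number in terms of the nested-commutator sum $\tilde\alpha$, \cref{lemma:comm} to rewrite $\tilde\alpha$ via the induced and ordinary $1$-norms of $H$, and \cref{lemma:error_delta} to lift the idealized PF using exact $H_J$-exponentials to a concrete circuit in elementary gates.

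First I would apply \cref{them:childs_pf} to the decomposition $H=\sum_{J\in S}H_J$ and substitute the bound $\tilde\alpha=\OO(|||H|||_1^{\,p}\|H\|_1)$ from \cref{lemma:comm} to obtain
\begin{equation}
r = \OO\!\left(\frac{|||H|||_1\,\|H\|_1^{1/p}\,t^{\,1+1/p}}{\varepsilon^{1/p}}\right).
\end{equation}
The inequalities $\|H\|_1\le\cP\NP\NH$ and $|||H|||_1\le\cP\NP\Nind$, which follow at once from \cref{def:1_norm,def:induced_1_norm,eq:cP}, then convert this to \cref{eq:PF_NH_r}. For the gate count I would adopt the tight choice \cref{eq:r_omega}, note that each of the $r$ outer steps contains $\Upsilon(p)=\OO(1)$ sweeps through the $\NH$ summands, and implement each exponential $\me^{-ia_\upsilon^{(J)}H_J t/r}$ by expanding $H_J$ in its Pauli basis~\eqref{eq:HJ} and applying a single-substep order-$p$ inner PF over its at most $\NP$ Pauli terms. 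Each Pauli rotation acts on $\loc\nq$ qubits and therefore costs $\OO(1)$ elementary gates, giving $\OO(\NP)$ gates per $H_J$-exponential; invoking \cref{lemma:error_delta} with interior budget $\OO(\varepsilon/(\NH r\,\Upsilon(p)))$ keeps the overall error at $\OO(\varepsilon)$. Multiplying $r\cdot\Upsilon(p)\cdot\NH\cdot\OO(\NP)$ and substituting \cref{eq:r_omega} produces \cref{eq:PF_NH_gates}.

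The main obstacle is to verify that the inner PF really meets the stringent per-term error budget of \cref{lemma:error_delta} while using only $\OO(\NP)$ gates. Repeating the commutator analysis of \cref{lemma:comm} at the Pauli level gives a residual of $\OO((\cP\NP t/r)^{p+1})$ for a single order-$p$ sub-step; this must be shown dominated by $\OO(\varepsilon/(\NH r))$ under the tight outer choice \cref{eq:r_omega}. The $\Nind$ factor in the outer $r$ is absent from the inner requirement, so there is slack to spare, but confirming that all commutator prefactors absorb cleanly into the asymptotic $\OO$ is the place where the bookkeeping needs care.
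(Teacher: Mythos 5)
Your proposal is correct and follows essentially the same route as the paper: Theorem~\ref{them:childs_pf} combined with Lemma~\ref{lemma:comm} and the bounds $\|H\|_1\le\cP\NP\NH$, $|||H|||_1\le\cP\NP\Nind$ for the outer Trotter number, then an inner Pauli-level PF per $H_J$-exponential controlled via Lemma~\ref{lemma:error_delta}. The "obstacle" you flag is exactly the step the paper carries out — it computes the required inner Trotter number $\tilde r$ under the tight outer choice and finds $\tilde r=\OO(1)$ precisely because of the spare $\Nind$ factor you identify, so your bookkeeping closes in the same way.
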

\begin{proof}
Using~\cref{them:childs_pf,lemma:comm}one finds that the Trotter number $r$ is given by
\begin{align}
    r = \OO\!\left( \frac{|||H|||_{1} ||H||_1^{\frac{1}{p}}t^{1 + \frac{1}{p}}}{\varepsilon^{\frac{1}{p}}} \right)
            \,.
\end{align}
Using previously established notation, we have
\begin{equation}
\begin{aligned}
    ||H||_1 &= \mathcal{O}\left( \NH \sum_{J}||H_J|| \right) = \mathcal{O}( \NH \NP \cP )\,,\\
    |\|H\||_1 &= \OO(\Nind \NP \cP)
    \,,
\end{aligned}
\end{equation}
which immediately gives
\begin{equation}
\label{eq:r_2}
    r = \OO\!\left( \Nind \NH^{1/p}\frac{\left( \NP \cP t \right)^{1 + \frac{1}{p}}}{\varepsilon^{\frac{1}{p}}} \right)
    .
\end{equation}

    Assume now that we wish to compute the total gate count $\chi$ for a quantum circuit implementing an approximation of $\me^{-iHt}$ to total exponentiation error $\OO(\eps)$ based on the splitting $\left(S_p(t/r)\right)^r$, where the Trotter number $r$ in fact satisfies~\eqref{eq:r_omega} (i.e., the asymptotic scaling is tight) so that the spectral norm $\|\left( S_p(t/r) \right)^r - \me^{-iHt}\| = \OO(\eps)$. This splitting takes the following form (see~\cref{def:PF})
    \begin{equation}
        \Bigl[S_{p}(t/r)\Bigr]^{r}
        \equiv \left(\prod_{\upsilon=1}^{\Upsilon(p)}\mathcal{P}_\upsilon\!\left(\prod_{J \in \Set} \exp\left(\frac{-it a_{\upsilon}^{(J)}H_J}{r}\right)\!\right)\!\right)^{r}.
    \end{equation}
    Each $S_p(t/r)$ thus consists of $\mathcal{O}\left(\NH \Upsilon(p)\right)$-many exponentials of the form $\exp\left(-it a_{\upsilon}^{(J)}H_J/r\right)$. 
    Each of these exponentials can be approximated using a different PF $\tilde S_{\tilde p}^{(\upsilon, J)}(t/r\tilde{r})$ based on the Pauli decomposition of each term $H_J$ (see \cref{eq:HJ}). This splitting takes the form
    \begin{equation}
    \begin{alignedat}{9}
        \Bigl[&\tilde S_{\tilde p}^{(\upsilon, J)}(t/(r\tilde r))\Bigr]^{\tilde r}
        \\&\equiv \left(\prod_{\upsilon=1}^{\Upsilon(\tilde p)}\tilde{\mathcal{P}}_\upsilon\!\left(\prod_{i=1}^{\NPJ} \exp\left(\frac{-it \tilde a_{\upsilon}^{(i)} c_i^{(J)}P_i^{(J)}}{r \tilde r}\right)\!\right)\!\right)^{\tilde r}.
    \end{alignedat}
    \end{equation}
    From~\cref{lemma:error_delta} we know that to ensure that the total error remains $\OO(\varepsilon)$, the accuracy of this PF has to be 
    \begin{align}
    \begin{alignedat}{9}
        & \left\|\left(\tilde S_{\tilde p}^{(\upsilon, J)}(t/r\tilde r)\right)^{\tilde r} - \exp\left(\frac{-it a_{\upsilon}^{(J)}H_J}{r}\right)\right\| \\
        &= \OO\!\left(\frac{\varepsilon}{\NH r \Upsilon(p)}\right).
    \end{alignedat}
    \end{align}
    Substituting this into the result of ~\cref{them:childs_pf} we therefore find 
\begin{equation}
\begin{aligned}
    \tilde r &= \Omega\!\left(\left(\frac{\NP\cP t}{r}\right)^{1 + \frac 1{\tilde p}}\left(\frac{\varepsilon}{\NH \Upsilon(p) r}\right)^{-\frac{1}{\tilde p}}\right)\\
    & = \Omega\left(\frac{\Upsilon(p)}{\Nind} \left(\frac{\NP\cP t}{\NH \eps}\right)^{\frac{p}{\tilde p}}\right)
    .
\end{aligned}
\end{equation}
We choose $\tilde p = p$ and use that $\Upsilon(p) = \OO(1)$. We also have that $\tilde r, \Nind \in \mathbb{N}$ have to both be at least 1, and hence one finds $\tilde r = \Omega(1)$.

Using that $\Upsilon(\tilde p) = \OO(1)$, we have that the resource scaling is thus given by
\begin{equation}
\begin{aligned}
    \chi &= \OO(\NH \NP r)\\
    &= \mathcal{O}\!\left( \Nind \NP \frac{\left( \NH \NP \cP t \right)^{1 + \frac{1}{p}}}{\varepsilon^{\frac{1}{p}}} \right).
\end{aligned}
\end{equation}
\end{proof}

\subsubsection{\texorpdfstring{$\OO(1)$}{O(1)} site-local Hamiltonian}

Having obtained the results for a general site-local Hamiltonian, one can obtain the results for an $\OO(1)$ site-local Hamiltonian
\begin{theorem}[Product Formulas for $\NLat = \mathcal{O}(1)$ sites]
    \label{thm:PF_O_1}
    Let $H$ be an $\OO(1)$ Hamiltonian as defined in~\cref{def:O1loclattice}
    Let $S_p(t)$ be a $p^{\text{th}}$ order PF as given by~\cref{def:PF} and~\cref{eq:pth_order_PF}. 
    Then to simulate the time evolution operator $\me^{-iHt}$ up to exponentiation error $\mathcal{O}(\varepsilon)$ it suffices to choose a Trotter number $r$ such that:
    \begin{equation}
        r = \OO\!\left( \frac{\left( \NP \cP t \right)^{1 + \frac{1}{p}}}{\varepsilon^{\frac{1}{p}}} \right)
        .
    \end{equation}
    Furthermore, if the Trotter number $r$ is chosen with the tight asymptotic scaling
    \begin{equation}\label{eq:r_omega_O_1}
        r = \Theta\!\left( \frac{\left( \NP \cP t \right)^{1 + \frac{1}{p}}}{\varepsilon^{\frac{1}{p}}} \right)
        ,
    \end{equation}
    then the associated elementary gate count $\chi$ can be asymptotically bounded as
    \begin{equation}\label{eq:PF_O(1)_gates}
        \chi = \OO\!\left( \frac{ \NP \left(\NP \cP t \right)^{1 + \frac{1}{p}}}{\varepsilon^{\frac{1}{p}}} \right)
        .
    \end{equation}
\end{theorem}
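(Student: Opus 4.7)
The approach is to derive this theorem as a direct specialization of Theorem~\ref{thm:PF_NH}. First I would observe that Definition~\ref{def:O1loclattice} stipulates $\loc=\NLat=\OO(1)$, which, as noted in that definition, forces $\NH=\OO(1)$; it also forces $\Nind=\OO(1)$, since the induced sum in~\cref{def:induced_1_norm} has at most $\NH$ summands. These two simplifications are the only $\OO(1)$-specific facts needed.

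Next I would substitute $\Nind=\NH=\OO(1)$ into the Trotter-number bound from~\cref{eq:PF_NH_r},
\begin{equation*}
r = \OO\!\left(\Nind\,\NH^{1/p}\frac{(\NP\cP t)^{1+1/p}}{\varepsilon^{1/p}}\right),
\end{equation*}
which collapses to the claimed bound $r = \OO((\NP\cP t)^{1+1/p}/\varepsilon^{1/p})$. The analogous substitution into the gate-count bound~\cref{eq:PF_NH_gates} gives $\chi = \OO(\NP(\NP\cP t)^{1+1/p}/\varepsilon^{1/p})$, matching~\cref{eq:PF_O(1)_gates}. The implicit constant absorbs all lattice-independent factors, including $\Upsilon(p)$ and the $\NH^{1+1/p}$ prefactor. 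The tightness of the asymptotic scaling of $r$ passes through unchanged from the hypothesis of~\cref{thm:PF_NH}.

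As an independent cross-check one could instead decompose $H$ directly into its Pauli basis and apply~\cref{them:childs_pf} to the resulting sum of $\Gamma = \OO(\NH\NP)=\OO(\NP)$ Pauli terms, using~\cref{lemma:comm} to bound the nested-commutator quantity by $\OO(\|H\|_1\,|\|H\||_1^p) = \OO((\NP\cP)^{p+1})$, and then multiplying $r$ by the per-step elementary-gate count $\OO(\NP)$ coming from implementing $\OO(\NP)$ single-Pauli exponentials per Trotter step. This reproduces the same $r$ and $\chi$. Since the statement is ultimately a corollary of the main site-local theorem, there is no real obstacle beyond verifying that the parameter substitutions $\NH=\Nind=\OO(1)$ are legitimate, which follows immediately from~\cref{def:O1loclattice}.
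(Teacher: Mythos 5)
Your proposal is correct and matches the paper's proof exactly: the paper obtains this theorem by setting $\NH \sim \Nind = \OO(1)$ in \cref{thm:PF_NH}, which is precisely your main argument. The additional cross-check via direct Pauli decomposition is consistent but not needed.
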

\begin{proof}
    This result is obtained immediately by setting $\NH \sim \Nind = \OO(1)$ in~\cref{thm:PF_NH}. 
\end{proof}

\subsubsection{Geometrically-site-local Hamiltonian}

\begin{theorem}[Product Formulas for geometrically-site-local Hamiltonians]\label{thm:PF_NH_GEOM}
    Let $H$ be a geometrically site-local Hamiltonian as defined in~\cref{def:geomloclattice}. 
    To simulate the time evolution operator $\me^{-iHt}$ up to exponentiation error $\mathcal{O}(\varepsilon)$ it suffices to choose a Trotter number $r$ such that:
    \begin{equation}\label{eq:PF_NH_r_2}
        r = \OO\!\left( \NLat^{1/p}\frac{\left( \NP \cP t \right)^{1 + \frac{1}{p}}}{\varepsilon^{\frac{1}{p}}} \right).
    \end{equation}  Furthermore, if the Trotter number $r$ is chosen with tight the asymptotic scaling
    \begin{equation}\label{eq:r_omega_2}
        r = \Theta\!\left( \NLat^{1/p}\frac{\left( \NP \cP t \right)^{1 + \frac{1}{p}}}{\varepsilon^{\frac{1}{p}}} \right),
    \end{equation}
    then the associated elementary gate count $\chi$ can be asymptotically bounded as
    \begin{equation}\label{eq:PF_NH_gates_2}
        \chi = \OO\!\left( \NP  \frac{\left( \NLat\NP \cP t \right)^{1 + \frac{1}{p}}}{\varepsilon^{\frac{1}{p}}} \right).
    \end{equation}
\end{theorem}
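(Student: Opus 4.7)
The plan is to derive this statement as an immediate specialization of~\cref{thm:PF_NH}, exploiting the two structural simplifications that geometric locality imposes: $\NH = \OO(\NLat)$ from~\cref{eq:NHgeomloc} and $\Nind = \OO(1)$ from~\cref{eq:Nindgeomloc}.

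First I would note that every geometrically-site-local Hamiltonian (\cref{def:geomloclattice}) is by definition a site-local Hamiltonian (\cref{def:loclattice}), so~\cref{thm:PF_NH} applies verbatim and supplies
\[
r = \OO\!\left(\Nind\,\NH^{1/p}\frac{(\NP\cP t)^{1+1/p}}{\eps^{1/p}}\right), \quad \chi = \OO\!\left(\Nind\,\NP\frac{(\NH\NP\cP t)^{1+1/p}}{\eps^{1/p}}\right).
\]
Next I would invoke the two geometric-locality bounds: for~\cref{eq:NHgeomloc}, each term acts on a connected set that can be labeled by its centre and (for fixed spatial dimension $\dims$) there are only $\OO(\NLat)$ admissible centres, giving $\NH = \OO(\NLat)$; for~\cref{eq:Nindgeomloc}, any fixed lattice site belongs to only $\binom{\loc^\dims}{\loc} = \OO(1)$ connected sets of diameter at most $\loc$, giving $\Nind = \OO(1)$.

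Substituting these bounds directly into the two displays above would produce exactly~\cref{eq:PF_NH_r_2} and~\cref{eq:PF_NH_gates_2}. The only bookkeeping point to check, and the one I would treat as the nominal obstacle, is that the tightness hypothesis $r = \Theta(\cdots)$ invoked in the gate-count half of~\cref{thm:PF_NH} (where an inner PF is used to implement each $\me^{-ita_\upsilon^{(J)}H_J/r}$) is preserved under the substitution. This is automatic here because both replacements $\NH\to\OO(\NLat)$ and $\Nind\to\OO(1)$ are tight under the geometric-locality assumptions~-- each upper bound is saturated up to absolute constants on a hypercubic lattice~-- so no additional argument is required and the proof reduces to the substitution just performed.
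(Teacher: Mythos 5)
Your proposal is correct and matches the paper's proof, which likewise obtains the result by substituting $\NH = \OO(\NLat)$ and $\Nind = \OO(1)$ (from the geometric-locality bounds~\cref{eq:NHgeomloc,eq:Nindgeomloc}) into~\cref{thm:PF_NH}. Your extra remark about tightness is harmless but unnecessary, since in the restated theorem $r = \Theta(\cdots)$ is simply a hypothesis on the chosen Trotter number rather than something to be verified.
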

\begin{proof}
    This result follows directly from~\cref{thm:PF_NH} making the replacements $\Nind = \OO(1)$ and $\NH = \OO(\NLat)$ as given in~\cref{eq:Nindgeomloc,eq:NHgeomloc}.
\end{proof}
%%%%%%%%%%%%%%%%%%%%%%%%%%%%%%%%%%%%%%%%%%%%%%%
%%%%%%%%%%%%%%%%%%%%%%%%%%%%%%%%%%%%%%%%%%%%%%%
\subsection{Quantum Signal Processing\label{ssec:qsp}}

As discussed in detail in~\cref{sec:QSP}, QSP provides a technique to take the BE of a given Hamiltonian and obtain the approximation to the BE of its time evolution operator. 
We will begin by deriving the general result of using QSP to implement the time evolution operator of the full system Hamiltonian. 

We will also derive results that combine the techniques of PFs with those of QSP.
In this approach we will use a PF to split the full time evolution operator into products of exponentials of the form $\me^{-ita_{\upsilon}^{(J)}H_{J}/r} $, and then use QSP to to obtain expression for these exponentials requiring only $\OO(1)$ lattice sites. 

In both of these approaches we assume that the original BE is achieved using LCU, as explained in~\cref{sec:BEtech}, but we note that other techniques are possible as well.

%%%%%%%%%%%%%%%%%%%%%%%%%%%%%%%%%%%%%%%%%%%%%%%%
\subsubsection{General site-local Hamiltonian\label{sssec:qspsl}}

We first establish the scaling of the LCU that will be used for the BE
\begin{lemma}[LCU for site-local Hamiltonians]\label{thm:lcu2}
Let $H$ be a site-local Hamiltonian as as in~\cref{def:loclattice}. 
There exists a quantum circuit acting on $\n \equiv \NLat \nq$-many qubits together with $k = \OO\left(\log (\NH \NP)\right)$-many ancillas that produces an $\left(\OO\left(\NH \NP \cP\right),k\right)$-block-encoding of $H$ on the $\n$ qubits, and requires a total of $\OO\left(\NH \NP \log (\NH \NP)\right)$-many elementary quantum gates.
\end{lemma}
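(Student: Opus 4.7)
The plan is to invoke the general LCU block-encoding result (\cref{LCU}) applied to the full Pauli decomposition of $H$. First, I would combine the site-local decomposition $H = \sum_{J\in\Set} H_J$ with the Pauli decomposition of each summand $H_J = \sum_{i\in[\NP^{(J)}]} c_i^{(J)} P_i^{(J)}$ (see \cref{eq:HJ}) to write
\begin{equation}
    H = \sum_{J\in\Set} \sum_{i=1}^{\NP^{(J)}} c_i^{(J)} P_i^{(J)},
\end{equation}
which expresses $H$ as a linear combination of at most $M \equiv \NH \NP$ unitary $\n$-qubit Pauli strings, by the definition of $\NP$ as the maximum number of Pauli terms in any single $H_J$ (see~\cref{def:NG}).

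Next, I would identify the 1-norm of the coefficients, $\beta \equiv \sum_{J,i} |c_i^{(J)}|$, and upper-bound it using $|c_i^{(J)}| \leq \cP$ (see~\cref{eq:cP}), giving $\beta \leq \cP\,\NH\,\NP$. Applying \cref{LCU} with this set of $M = \OO(\NH\NP)$ unitaries and coefficients $c_i^{(J)}$ then produces a $\bigl(\OO(\NH\NP\cP),\,\lceil\log_2(\NH\NP)\rceil\bigr)$-block-encoding of $H$ on the system register. This immediately gives the claimed ancilla count $k = \OO(\log(\NH\NP))$ and scale factor $\OO(\NH\NP\cP)$.

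For the gate count, \cref{LCU} states that the elementary gate complexity of the LCU construction is $\OO(M\log M)$, with the dominant cost arising from the $\SELECT$ subroutine: $M$ Pauli string unitaries each controlled on the $\log_2 M$ ancillas. Substituting $M = \OO(\NH\NP)$ yields the claimed $\OO(\NH\NP\log(\NH\NP))$ scaling. The $\PREPARE$ subroutine, which prepares an arbitrary state on $\log_2 M$ qubits, has cost $\OO(M) = \OO(\NH\NP)$ and is therefore subleading.

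No step poses a serious obstacle: this is essentially a bookkeeping exercise in which the only content beyond \cref{LCU} is translating the quantities $(M,\beta)$ of generic LCU into the lattice-Hamiltonian quantities $(\NH\NP,\cP\NH\NP)$. The one place requiring mild care is to ensure that the Pauli strings $P_i^{(J)}$, although they act nontrivially only on at most $\loc\nq$ qubits each, are treated as $\n$-qubit unitaries (padded by identity on the remaining sites) so that the $\SELECT$ oracle is well-defined on the whole system register; this padding does not change the asymptotic gate count of any single controlled Pauli, since each such gate decomposes into $\OO(\loc\nq)=\OO(1)$ elementary single- and two-qubit gates plus the $\OO(\log M)$ overhead for the control on the ancillas.
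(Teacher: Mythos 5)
Your proof is correct and follows essentially the same route as the paper's: both apply \cref{LCU} to the full Pauli decomposition of $H$, observing that the total number of Pauli strings is $\OO(\NH\NP)$ and the coefficient 1-norm is bounded by $\cP\NH\NP$. Your version simply spells out the bookkeeping (and the padding of local Pauli strings to $\n$-qubit unitaries) in more detail than the paper's two-line argument.
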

\begin{proof}
    This follows directly from~\cref{LCU}, if we observe that the total number of Pauli strings is given by $\NH \NP$ and the sum of the coefficients of all Pauli strings is bounded by $\beta < \NH \NP$. 
\end{proof}

Given this result now allows us to state the central QSP result we use in this work:
\begin{theorem}[QSP for site-local Hamiltonians]\label{thm:qspwq2}
Let $H$ be a site-local Hamiltonian as defined by ~\ref{def:loclattice}, and let $t$ and $\eps$ be positive values. 
Then there exists a quantum circuit that serves as a $\left(1\pm \OO(\eps),\OO(\log (\NH \NP))\right)$-block-encoding for $\me^{-iHt}$ up to exponentiation error $\eps$, and requires an elementary gate count of 
\begin{equation}
\label{eq:qspwq2} 
\chi = \OO\!\left[\NH \NP \log (\NH \NP)\left(\NH \NP \cP t + \log(1/\eps)\right)\right]. \end{equation}
\end{theorem}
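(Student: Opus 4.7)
The plan is to compose the two building blocks that have already been established in the excerpt: the LCU block encoding for site-local Hamiltonians (\cref{thm:lcu2}) and QSP with Qubitization (\cref{thm:qspwq}). The statement of \cref{thm:qspwq2} is essentially the result of plugging the parameters of the former into the latter, so the proof is primarily an exercise in asymptotic bookkeeping.

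First, I would invoke \cref{thm:lcu2} to obtain an $(\alpha,k)$-block-encoding $U_H$ of $H$ with scale factor $\alpha = \OO(\NH \NP \cP)$ and $k = \OO(\log(\NH\NP))$ ancillas, using $G_{\mathrm{LCU}} = \OO(\NH \NP \log (\NH \NP))$ elementary gates. This $U_H$ is self-inverse (it is a $\SELECT$ operator sandwiched by $\PREPARE$ oracles, or one can work directly with the $\SELECT$ operator alongside a different choice of $\sket{g}_a$ as discussed around \cref{eq:WH_for_lcu}), so the qubitized walk operator $W_H$ can be built with no controlled calls overhead and only $\OO(1)$ additional ancillas.

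Next, I would feed $U_H$ into \cref{thm:qspwq}: this yields a $(1 \pm \OO(\eps),\,k+2)$-block-encoding of $\me^{-iHt}$ to exponentiation error $\eps$, using
\begin{equation}
N_\phi = \OO\!\left(\alpha t + \log(1/\eps)\right) = \OO\!\left(\NH \NP \cP t + \log(1/\eps)\right)
\end{equation}
calls to (controlled) $U_H$ and its inverse, plus $\OO(N_\phi)$ single-qubit rotations for the QSP phases and $\OO(N_\phi)$ reflections on the ancilla register (each of cost $\OO(k)$, which is subleading). Since $k + 2 = \OO(\log(\NH \NP))$, the ancilla count of the final block encoding matches the claimed value.

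Finally, I would multiply the per-call gate count $G_{\mathrm{LCU}}$ by the number of calls $N_\phi$ to obtain
\begin{equation}
\chi = \OO\!\left(\NH\NP \log(\NH\NP)\right)\cdot \OO\!\left(\NH \NP \cP t + \log(1/\eps)\right),
\end{equation}
which is exactly the asserted bound. There is no real obstacle here: the only subtle point is confirming that the self-inverse property of the LCU $\SELECT$ operator lets us bypass the generic one-ancilla qubitization overhead of \cref{app:general_W}, so that the ancilla count remains $\OO(\log(\NH\NP))$ rather than picking up an extra non-asymptotic qubit that would still be absorbed into the $\OO(\cdot)$ anyway. Everything else is a direct substitution.
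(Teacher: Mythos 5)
Your proposal is correct and follows exactly the paper's own route: instantiate the LCU block encoding of \cref{thm:lcu2} to get the scale factor $\OO(\NH\NP\cP)$, ancilla count $\OO(\log(\NH\NP))$, and per-call gate cost $\OO(\NH\NP\log(\NH\NP))$, then feed these into \cref{thm:qspwq} and multiply the per-call cost by the $\OO(\alpha t + \log(1/\eps))$ query complexity. The extra remark about the self-inverse $\SELECT$ operator avoiding the generic qubitization overhead is a correct (and welcome) elaboration of a point the paper handles in its qubitization discussion rather than in this proof.
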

\begin{proof}
    This result follows from~\cref{thm:lcu2,thm:qspwq}. From~\cref{thm:lcu2} one obtains that LCU give a $\left(\OO\left(\NH \NP \cP\right),\OO\left(\log (\NH \NP)\right)\right)$ BE of the Hamiltonian. Using this result in~\cref{thm:qspwq} one finds that QSP gives a $\left(1\pm \OO(\eps),\OO(\log (\NH \NP))\right)$-block-encoding for $\me^{-iHt}$. 
    Using the resource scaling from~\cref{thm:qspwq}, with the scale factor and resource scaling of LCU from~\cref{thm:lcu2}, gives the resource requirements stated in the theorem.
\end{proof}

\begin{remark}
Note that we have assumed that the block-encoding is formed from a sum over all Pauli strings appearing in the Hamiltonian. 
When more structure is known about a given field theory, the factor in the complexity due to the block-encoding may be reduced.
\end{remark}

\subsubsection{\texorpdfstring{$\OO(1)$}{O(1)} site-local Hamiltonian}
As for the PF, we can obtain the QSP requirements of an $\OO(1)$ site-local Hamiltonian:
\begin{theorem}[QSP for $\OO(1)$ sites]\label{thm:qspwq1}
Let H be an $\OO(1)$ Hamiltonian as defined in~\cref{def:O1loclattice}, and let $t$ and $\eps$ be positive values. Then there exists a quantum circuit that serves as a $\left(1\pm \OO(\eps),\OO(\log (\NP))\right)$-block-encoding for $\me^{-iHt}$ up to exponentiation error $\eps$, and requires an elementary gate count of \begin{equation}
\label{eq:qspwq1} 
\chi = \OO\!\left[\NP \log \NP\left(\NP \cP t + \log(1/\eps)\right)\right]. \end{equation}
\end{theorem}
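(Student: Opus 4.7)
The plan is to obtain this result as a direct corollary of \cref{thm:qspwq2}, exactly parallel to how \cref{thm:PF_O_1} is obtained from \cref{thm:PF_NH}. Since \cref{def:O1loclattice} fixes $\NLat=\OO(1)$ and $\loc=\OO(1)$, and since $\NH\leq \binom{\NLat}{\loc}$, we have $\NH=\OO(1)$. Substituting $\NH=\OO(1)$ into the block-encoding parameters of \cref{thm:qspwq2} collapses $\log(\NH\NP)$ to $\log\NP$ and reduces the gate count $\OO\!\left[\NH \NP \log (\NH \NP)\left(\NH \NP \cP t + \log(1/\eps)\right)\right]$ to $\OO\!\left[\NP \log \NP\left(\NP \cP t + \log(1/\eps)\right)\right]$, which is the claimed scaling.

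Alternatively, and with essentially the same effort, one can rederive the result from first principles: apply \cref{LCU} to the Pauli decomposition of $H$, which (with $\NH=\OO(1)$) contains at most $\OO(\NP)$ Pauli strings with coefficients bounded by $\cP$. This produces a $(\OO(\NP\cP),\OO(\log \NP))$-block-encoding $U_H$ with $\OO(\NP\log\NP)$ elementary gates. Feeding $U_H$ into \cref{thm:qspwq} yields a $\left(1\pm\OO(\eps),\OO(\log\NP)+2\right)$-block-encoding of $\me^{-iHt}$ using $\OO(\alpha t + \log(1/\eps))=\OO(\NP\cP t + \log(1/\eps))$ queries to $U_H$, giving the stated total gate count upon multiplication by the per-query cost $\OO(\NP\log\NP)$.

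There is essentially no obstacle here: the proof is a one-line specialization. The only ``content'' is bookkeeping to confirm that the two logarithmic factors and the linear scale factor all reduce in the same way under $\NH=\OO(1)$, and to note that the two ancillae added by qubitization (one for the iterate, one for the QSP signal qubit) do not affect the asymptotic count $\OO(\log\NP)$. I would present the proof in a single short paragraph, writing: ``\emph{Follows immediately from \cref{thm:qspwq2} upon setting $\NH=\OO(1)$, which holds by \cref{def:O1loclattice}.}''
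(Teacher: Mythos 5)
Your proposal is correct and matches the paper's own proof, which is exactly the one-line specialization of \cref{thm:qspwq2} to $\NH=\OO(1)$. The additional first-principles rederivation via \cref{LCU} and \cref{thm:qspwq} is consistent but redundant with what the paper does.
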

\begin{proof}
This follows directly from~\cref{thm:qspwq2} by taking the limit $\NH = \OO(1)$. 
\end{proof}

%%%%%%%%%%%%%%%%%%%%%%%%%%%%%%%%%%%%%%%%%%%%%%%%
\subsubsection{Geometrically-site-local Hamiltonian}
\begin{theorem}[QSP for geometrically-site-local Hamiltonians]\label{thm:qspwq3}
Let $H$ be a geometrically site-local Hamiltonian as defined by ~\ref{def:loclattice}, and let $t$ and $\eps$ be positive values. 
Then there exists a quantum circuit that serves as a $\left(1\pm \OO(\eps),\OO(\log (\NH \NP))\right)$-block-encoding for $\me^{-iHt}$ up to exponentiation error $\eps$, and requires an elementary gate count of \begin{equation}\label{eq:qspwq2_geom} 
\chi = \OO\!\left[\NLat \NP \log (\NLat \NP)\left(\NLat \NP \cP t + \log(1/\eps)\right)\right]. 
\end{equation}
\end{theorem}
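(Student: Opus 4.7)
The plan is to derive this result as a direct corollary of Theorem~\ref{thm:qspwq2}. Since any geometrically-site-local Hamiltonian (Definition~\ref{def:geomloclattice}) is in particular a site-local Hamiltonian (Definition~\ref{def:loclattice}), the full machinery of Theorem~\ref{thm:qspwq2} applies to $H$ without modification. My first step is therefore to invoke that theorem, which hands me a $(1\pm\OO(\eps),\OO(\log(\NH\NP)))$-block-encoding of $\me^{-iHt}$ at elementary gate cost
\[
\OO\!\left[\NH\NP\log(\NH\NP)\left(\NH\NP\cP t + \log(1/\eps)\right)\right].
\]

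Next I would import the structural input afforded by geometric locality, namely Eq.~\eqref{eq:NHgeomloc}, which gives $\NH=\OO(\NLat)$. This bound reflects the fact that on a hypercubic lattice each connected set $J$ of diameter $\loc=\OO(1)$ can be indexed by the center of a ball containing it, with $\OO(1)$ such balls per lattice site. The proof then amounts to substituting this scaling into both the ancilla count and the gate count inherited from Theorem~\ref{thm:qspwq2}: the ancilla register shrinks to $\OO(\log(\NLat\NP))$, and the gate count becomes exactly the claimed $\OO[\NLat\NP\log(\NLat\NP)(\NLat\NP\cP t + \log(1/\eps))]$.

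There is no genuine obstacle here, since this is a mechanical specialization of a more general result already in hand. The one point I would flag explicitly is that this bound does not exploit geometric locality beyond the reduction $\NH=\OO(\NLat)$: the underlying LCU block encoding of Theorem~\ref{thm:lcu2} still produces a scale factor $\alpha=\OO(\NLat\NP\cP)$ proportional to the full $1$-norm of $H$, rather than a quantity scaling with the induced $1$-norm $\OO(\Nind\NP\cP)=\OO(\NP\cP)$ that geometric locality would in principle permit. Algorithms that do leverage this locality more aggressively, such as the HHKL-based construction of Theorem~\ref{thm:hhkl_helper} (case h) of Table~\ref{tab:geometically-local}), yield the strictly sharper scaling reported elsewhere in the paper; they are out of scope for the present statement.
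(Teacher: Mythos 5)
Your proposal is correct and follows exactly the paper's own argument: Theorem~\ref{thm:qspwq3} is obtained from Theorem~\ref{thm:qspwq2} by the single substitution $\NH=\OO(\NLat)$ from Eq.~\eqref{eq:NHgeomloc}. Your additional remark that the bound does not exploit geometric locality beyond this replacement is accurate and consistent with the paper's discussion of the HHKL-based alternatives.
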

\begin{proof}
    This result follows directly from~\cref{thm:qspwq2} making the replacements $\NH = \OO(\NLat)$ as given in~\cref{eq:NHgeomloc}.
\end{proof}
%%%%%%%%%%%%%%%%%%%%%%%%%%%%%%%%%%%%%%%%%%%%%%%%
%%%%%%%%%%%%%%%%%%%%%%%%%%%%%%%%%%%%%%%%%%%%%%%%
\subsection{Combining Product Formulas with QSP\label{ssec:pfqsp}}
In this section we consider combining a PF with QSP. 
We will use~\cref{lemma:error_delta} to write a PF for $\me^{-i H t}$, but then use QSP for $U_\upsilon^{(J)}(t/r)$, which serves as an approximation to each term of the form $\exp\left(-ita_{\upsilon}^{(J)}H_J/r\right)$ appearing in the PF. 

\begin{theorem}[QSP applied to Product Formulas]\label{thm:qspwqpf}
Let $H$ be a site-local Hamiltonian as defined by ~\ref{def:loclattice}, and let $t$ and $\eps$ be positive values. 
By using QSP (see Theorem \ref{thm:qspwq1}) to simulate each exponential term $\me^{-ia_v^{(J)}H_J t/r}$ appearing in the PF $S_p(t/r)$ to error $\eps/(r \NH)$, and taking
\begin{equation}\label{eq:r_ref}
r = \Theta\!\left(\Nind \NH^{1/p}\frac{\left( \NP \cP t \right)^{1 + \frac{1}{p}}}{\varepsilon^{\frac{1}{p}}} \right),
\end{equation}
we obtain a quantum circuit that serves as a $(1\pm\OO(\eps),\log(\NH\NP))$-block-encoding for the full time evolution operator $\me^{-iHt}$ to exponentiation error $\OO(\eps)$.
The gate count for this circuit is given by
\begin{equation}\label{eq:qspwqpf}
\begin{alignedat}{9}
\chi = \OO\!\Bigl[\Nind &\left(\NH \NP \cP t \right)^{1 + \frac{1}{p}}\varepsilon^{-\frac{1}{p}}\NP\log \NP
\\&\times
\log(\Nind \NH\NP \cP t/\eps)\Bigr].
\end{alignedat}
\end{equation}
\end{theorem}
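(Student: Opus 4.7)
The plan is to combine three already-established ingredients: the Trotter number bound from \cref{thm:PF_NH}, the error-composition rule \cref{lemma:error_delta}, and the QSP gate-cost result for $\OO(1)$-site-local Hamiltonians from \cref{thm:qspwq1}. The overall structure is that the PF splits $\me^{-iHt}$ into $\OO(r\NH)$ factors, each of which is a local exponential $\me^{-ita_{v}^{(J)}H_J/r}$ that will be simulated by QSP, exploiting the fact that each $H_J$ is itself an $\OO(1)$-site-local Hamiltonian with at most $\NP$ Pauli strings and coefficients bounded by $\cP$.

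First, I will use \cref{lemma:error_delta} to set the accuracy target per local factor: each $\me^{-ita_{v}^{(J)}H_J/r}$ must be approximated to error $\OO(\eps/(r\NH\Upsilon(p))) = \OO(\eps/(r\NH))$, using $\Upsilon(p) = \OO(1)$. The Trotter number $r$ is then taken as in \cref{eq:r_ref}, inherited directly from \cref{thm:PF_NH}, so that exact implementation of every factor would already give PF error $\OO(\eps)$; the QSP approximation then contributes at most another $\OO(\eps)$ by the lemma.

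Second, I will invoke \cref{thm:qspwq1} on each $H_J$ with effective evolution time $|a_{v}^{(J)}|t/r$ and error budget $\eps/(r\NH)$. Each call yields a $(1\pm\OO(\eps/(r\NH)),\OO(\log \NP))$-block-encoding at gate cost
\begin{equation}
\OO\!\left[\NP\log\NP\bigl(\NP \cP t/r + \log(r\NH/\eps)\bigr)\right].
\end{equation}
Multiplying by the $\OO(r\NH)$ factors in the PF gives $\OO\!\left[\NH\NP\log\NP(\NP\cP t + r\log(r\NH/\eps))\right]$. Substituting $r$ from \cref{eq:r_ref}, the $\NP\cP t$ piece contributes a subleading $\OO(\NH\NP^2\cP t\log\NP)$ term, while the $r\log(r\NH/\eps)$ piece produces the stated scaling, after noting that $r\NH/\eps$ is polynomial in $\Nind,\NH,\NP,\cP,t,1/\eps$, so that $\log(r\NH/\eps) = \OO(\log(\Nind\NH\NP\cP t/\eps))$.

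The one non-routine point will be justifying that a sequential composition of $\OO(r\NH)$ QSP block-encodings is itself a block-encoding of the product of the approximate unitaries. Because each QSP factor encodes a near-unitary with scale $1\pm\OO(\eps/(r\NH))$, composing them in a shared ancillary register (which can be reset between calls via the marker state $\sket{g}_a$ standard to QSP) yields a BE of the product with total scale $1\pm\OO(\eps)$, and the ancilla count $\OO(\log(\NH\NP))$ covers the QSP-internal register of each block together with the modest overhead needed to schedule the factors across the PF. This composition step is the technically delicate one, but it is analogous to the BE-composition argument already used implicitly in \cref{thm:qspwq2} and follows from standard properties of qubitized block encodings.
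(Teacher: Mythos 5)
Your proposal is correct and follows essentially the same route as the paper's own proof: \cref{lemma:error_delta} to fix the per-factor error budget $\eps/(r\NH)$, \cref{thm:qspwq1} for the per-factor QSP cost, multiplication by the $\OO(r\NH)$ factors with $r$ from \cref{eq:r_ref}, and the observation that $\log(r\NH/\eps)=\OO(\log(\Nind\NH\NP\cP t/\eps))$. The only cosmetic difference is that the paper discards the $\NP\cP t/r$ contribution at the per-factor level (noting it is $\OO(1)$ under the assumed scaling of $r$) whereas you discard it after summing, and both treatments of the final block-encoding composition are equally brief.
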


\begin{proof}
From~\cref{lemma:error_delta} we know that the simulation of a single exponential of the form $\me^{-ia_v^{(J)}H_J t/r}$ acting on $\OO(1)$ sites is needed to accuracy $\eps/(\NH r)$. Performing this approximation using QSP (see Theorem \ref{thm:qspwq1}) requires a gate count
\begin{equation}
\begin{aligned}
\chi_J & = \OO\!\left(\NP \log \NP\left(\NP\cP \frac tr + \log(\NH r/\eps)\right)\right)\\
& = \OO\left(\NP\log \NP\log(\NH r/\eps)\right)
\,,
\end{aligned}
\end{equation}
where the first term can be discarded because the asymptotic assumption on $r$ given by \eqref{eq:r_ref} indicates that $\NP \cP t/r = \OO(\eps/(\NP \cP t)^{1/p})$, which is asymptotically bounded from above by a constant.

The total gate count from the $(\NH r)$-many applications of this QSP subroutine is then given by
\begin{equation}
\begin{aligned}
\chi &= \NH \, r \, \chi_J \\
&= \mathcal{O}\Bigl[\Nind \left(\NH \NP \cP t \right)^{1 + \frac{1}{p}}\varepsilon^{-\frac{1}{p}}\\
&\qquad \times
\NP\log \NP\log(\Nind \NH\NP \cP t/\eps)\Bigr]
\,.
\end{aligned}
\end{equation}
Since the block-encoding produced by QSP has a scale factor $1 \pm \OO(\eps/(r\NH))$, the final block-encoding produced by multiplying $(\stages(p)\NH r)$-many such block-encodings will again be a block-encoding, this time with scale-factor $1\pm \OO(\eps)$. This completes the proof.
\end{proof}

\begin{remark}
    The results above were for a general site-local Hamiltonian. To obtain the results for a geometrically site-local Hamiltonian one makes the replacement $\Nind = \OO(1)$ and $\NH = \OO(\NLat)$ from ~\cref{eq:Nindgeomloc,eq:NHgeomloc}.
\end{remark}

\subsection{HHKL\label{ssec:hhkl}}
As discussed in~\cref{sec:hhkl} the HHKL technique is a method of ``gluing'' together implementations of a time evolution operator on a ``local'' lattice with size $\ell^\dims$. In this section we consider HHKL with two approaches for the implementation of the local evolution operator, first using a PF and then QSP.

\begin{theorem}[HHKL with PF as helper]\label{thm:hhklpf}
Let $H$ be a geometrically site-local Hamiltonian as defined by ~\ref{def:geomloclattice}, and let $t$ and $\eps$ be positive values. Using the HHKL algorithm with a PF for the helper function gives rise to an algorithm that approximates the time evolution operator $\me^{-i H t}$ with exponentiation error requiring a gate complexity of 
\begin{align}\label{eq:hhklpf}
\chi = \NP \left( \NLat \NP \cP t \right)^{1+\frac{1}{p}} \varepsilon^{-\frac{1}{p}}
\,.
\end{align}
\end{theorem}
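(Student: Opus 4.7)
The plan is to directly combine \cref{thm:hhkl_helper} (HHKL) with \cref{thm:PF_NH_GEOM} (PF for geometrically site-local Hamiltonians), treating the latter as the helper function for the former, and then to carefully track how the $\log$-factors introduced by HHKL cancel against those appearing in the PF cost.

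First, I would invoke \cref{thm:hhkl_helper} with $L^\dims = \NLat$ and the normalized time $\tilde{t} = \|H_J\| t$, using the loose bound $\|H_J\| = \OO(\NP \cP)$ so that $\tilde{t} = \OO(\NP \cP t)$. This yields $\OO\!\left(\NLat \NP \cP t /\log^{\dims}(\NLat \NP \cP t/\eps)\right)$ calls to a helper that time-evolves a geometrically site-local Hamiltonian living on $\ell^{\dims} = \Theta(\log^{\dims}(\NLat \NP \cP t/\eps))$ sites, for normalized local time $\Delta t = \OO(1)$, to exponentiation error $\delta = \OO(\eps \log^{\dims}(\NLat \NP \cP t/\eps)/(\NLat \NP \cP t))$.

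Next, I would apply \cref{thm:PF_NH_GEOM} to implement each such helper: since the sub-lattice of size $\ell^{\dims}$ inherits the original geometric locality, the gate count per helper call is
\begin{equation}
\chi_{\text{helper}} = \OO\!\left(\NP \, \bigl(\ell^{\dims}\bigr)^{1+\frac{1}{p}} \delta^{-\frac{1}{p}}\right),
\end{equation}
where the factor $\NP \cP \Delta t$ that would normally appear inside the parenthesis is $\OO(1)$ after normalization. Substituting the expressions for $\ell^{\dims}$ and $\delta$, the $\log^{\dims(1+1/p)}$ factor combines with the $\log^{-\dims/p}$ pulled out of $\delta^{-1/p}$ to leave a single $\log^{\dims}(\NLat \NP \cP t/\eps)$ in $\chi_{\text{helper}}$, together with a factor of $(\NLat \NP \cP t)^{1/p}/\eps^{1/p}$.

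Finally, multiplying $\chi_{\text{helper}}$ by the number of HHKL calls produces the critical cancellation: the $1/\log^{\dims}$ from the call count kills the remaining $\log^{\dims}$, and the prefactor $\NLat \NP \cP t$ from the call count combines with the $(\NLat \NP \cP t)^{1/p}$ from $\chi_{\text{helper}}$ to yield $(\NLat \NP \cP t)^{1+\frac{1}{p}}$, leaving the claimed bound $\chi = \OO\!\left(\NP (\NLat \NP \cP t)^{1+\frac{1}{p}} \eps^{-\frac{1}{p}}\right)$. The main point requiring care is the normalization accounting: the choice $\tilde{t} = \OO(\NP \cP t)$ is what ensures $\NP \cP \Delta t = \OO(1)$ in the helper PF and is also what supplies the overall $\NP \cP$ factor in the final scaling. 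Once this bookkeeping is fixed the rest is routine substitution, so I expect the only real obstacle to be verifying that this normalization choice (as opposed to a tighter bound on $\|H_J\|$) is what produces precisely the stated exponents.
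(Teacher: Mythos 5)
Your proposal is correct and follows essentially the same route as the paper's proof: invoke \cref{thm:hhkl_helper} with $\tilde{t}=\|H_J\|t=\OO(\NP\cP t)$, use \cref{thm:PF_NH_GEOM} on the sub-lattice of $\ell^{\dims}$ sites as the helper, and observe that the $\log^{\dims}$ factors cancel between the per-call cost and the call count. The only difference is bookkeeping — you consistently push the $\NP\cP$ normalization into $\tilde{t}$ so the per-call PF cost is $\OO(\NP(\ell^{\dims})^{1+1/p}\delta^{-1/p})$, whereas the paper keeps the local Hamiltonian unnormalized and absorbs $(\NP\cP)^{1+1/p}$ into the per-call cost instead; the product of (call count) $\times$ (per-call cost) is the same either way and yields the stated bound.
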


\begin{proof}
From~\cref{thm:hhkl_helper} we need the helper function to implement a time evolution operator for time $\tilde{t} = \|H_J\| t = \OO(\NP\cP t)$ on a system of size $N_\ell = \OO(\ell^d)$ with exponentiation error $\delta$, where $\ell$ and $\delta$ are given in~\cref{eq:HHLK_ell_def,eq:HHLK_delta_def}.

From~\cref{thm:PF_NH} we find that a PF to implement the time evolution on a system with $\Nell$ lattice sites and exponentiation error $\delta$ requires resources (taking $t = \OO(1)$ in~\cref{eq:qspwq2})
\begin{equation}
\begin{aligned}
    \chi_{\rm PF} &= \OO\!\left(\NP \left(\Nell \NP \cP\right)^{1+\frac{1}{p}} \delta^{-\frac{1}{p}}\right)\\
    &= \OO\!\Bigg(\NP \left(\NP \cP\right)^{1+\frac{1}{p}} \left[\log \left(\NLat t/\varepsilon\right)\right]^{\dims\left(1+\frac{1}{p}\right)}\\
    & \qquad \times\left(\frac{\NLat t}{\varepsilon \log^d\left(\NLat t/ \varepsilon\right)}\right)^{\frac{1}{p}}\Bigg)\\
    &= \OO\!\left(\NP \left(\NP \cP\right)^{1+\frac{1}{p}}\log^\dims\left(\frac{\NLat t}{\varepsilon}\right) \left( \frac{\NLat t}{\varepsilon}\right)^{\frac{1}{p}}\right).
\end{aligned}
\end{equation}
Furthermore, we know from~\cref{thm:hhkl_helper} that the number of calls to this helper function is  given by~\cref{eq:HHKL_calls}, such that the final gate complexity\ is
\begin{equation}
\begin{aligned}
    \chi & = \OO\!\left( \frac{\NLat t}{ \log^{\dims}\left(\NLat t/\eps \right)}\chi_{\rm PF}\right)\\
    &= \OO\!\left(\NP \left(\NLat \NP \cP t\right)^{1+\frac{1}{p}} \varepsilon^{-\frac{1}{p}}\right)
    .
\end{aligned}
\end{equation}
as desired. 
\end{proof}

\begin{theorem}[HHKL with QSP as helper]\label{thm:hhklqsp}
Let $H$ be a geometrically site-local Hamiltonian as given by ~\Cref{def:geomloclattice}, and let $t$ and $\eps$ be positive values. Using the HHKL algorithm with QSP for the helper function gives rise to an algorithm that approximates the time evolution operator $\me^{-i H t}$ with exponentiation error requiring a gate complexity of 
\begin{align}\label{eq:hhklqsp}
\chi = \OO\!\left(\NP^2 \NLat \cP t \log^\dims\left(\NLat t / \varepsilon\right) \log\left( \NP \log\left(\NLat t / \varepsilon\right)\right)\right)
\,.
\end{align}
\end{theorem}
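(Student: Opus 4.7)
The approach parallels the proof of~\cref{thm:hhklpf}, replacing the product-formula helper with QSP. First I would invoke~\cref{thm:hhkl_helper} with $L^{\dims} \to \NLat$ to reduce the full simulation of $\me^{-iHt}$ to $\OO(\NLat t / \log^{\dims}(\NLat t/\eps))$ calls of a helper function that implements time evolution on a local region of $\Nell = \OO(\log^{\dims}(\NLat t/\eps))$ sites for time $\OO(1)$, with per-call exponentiation error
\begin{equation}
\delta = \OO\!\left(\frac{\eps \log^{\dims}(\NLat t/\eps)}{\NLat t}\right).
\end{equation}
Since the restriction of $H$ to any such local region inherits geometric site-locality, each helper call can be realized by QSP via~\cref{thm:qspwq3}.

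Next I would bound the cost of a single helper call by applying~\cref{thm:qspwq3} with the substitutions $\NLat \to \Nell$, $t \to 1$, and $\eps \to \delta$, yielding
\begin{equation}
\chi_{\text{helper}} = \OO\!\left(\Nell \NP \log(\Nell \NP)\bigl(\Nell \NP \cP + \log(1/\delta)\bigr)\right).
\end{equation}
A short polylog calculation simplifies this: since $\dims \ge 1$ and $\NP\cP \ge 1$, the term $\Nell \NP \cP = \OO(\NP\cP \log^{\dims}(\NLat t/\eps))$ dominates $\log(1/\delta) = \OO(\log(\NLat t/\eps))$, and $\log(\Nell \NP)$ collapses to $\OO(\log(\NP \log(\NLat t/\eps)))$. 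The helper cost then reduces to $\OO(\NP^2 \cP \log^{2\dims}(\NLat t/\eps)\log(\NP \log(\NLat t/\eps)))$. Multiplying by the number of helper calls yields the claimed scaling, since one factor of $\log^{\dims}(\NLat t/\eps)$ cancels between the $\log^{2\dims}$ in the helper cost and the $\log^{-\dims}$ in the number of calls.

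The main obstacle is the polylog bookkeeping: one must confirm that $\Nell \NP \cP$ genuinely dominates $\log(1/\delta)$ across the relevant asymptotic regime, simplify the nested logarithm $\log(\Nell \NP)$ cleanly to $\log(\NP \log(\NLat t/\eps))$ without loss of tightness, and verify that the $\log^{\dims}$ factors from the helper cost, the number of calls, and the denominator inside $\delta$ cancel correctly to leave exactly one factor of $\log^{\dims}$ in the final bound. The calculation itself is routine, but a moment of care is needed to avoid losing or double-counting logarithmic factors.
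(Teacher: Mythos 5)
Your proposal is correct and follows essentially the same route as the paper's own proof: invoke \cref{thm:hhkl_helper} to get the number of calls, region size $\Nell$, and per-call error $\delta$; bound each helper call via the QSP result with $\NLat\to\Nell$, $t\to\OO(1)$, $\eps\to\delta$; note that $\Nell\NP\cP$ dominates $\log(1/\delta)$; and multiply, with one factor of $\log^{\dims}(\NLat t/\eps)$ cancelling. (The paper cites \cref{thm:qspwq2} rather than \cref{thm:qspwq3}, but since the latter is just the former with $\NH=\OO(\NLat)$, this is an immaterial difference.)
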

\begin{proof}
    \Cref{thm:hhkl_helper} states that we need the helper function to implement a time evolution operator for time $\tilde{t} = \|H_J\| t = \OO(\NP\cP t)$ on a system of size $N_\ell = \OO(\ell^d)$ with exponentiation error $\delta$, where $\ell$ and $\delta$ are given in~\cref{eq:HHLK_ell_def,eq:HHLK_delta_def}.
    From~\cref{thm:qspwq2} we find that implementing the time evolution on a system with $\Nell$ lattice sites and exponentiation error $\delta$ with QSP requires resources (taking $t = \OO(1)$ in~\cref{eq:qspwq2}) 
    \begin{equation}
    \begin{aligned}
        \chi_{\rm QSP} &= \OO\! \left( \Nell \NP \log\left(\Nell\NP\right) \left( \Nell \NP \cP + \log(1/\delta)\right)\right]\\
        &= \OO \Bigg( \log^\dims\left(\NLat t / \varepsilon\right) \NP \log\left[ \NP \log\left(\NLat t / \varepsilon\right)\right] \\
        &\qquad \times \Bigg[ \NP \cP \log^\dims\left(\NLat t / \varepsilon\right) \\
        & \qquad \qquad +\log \left( \frac{\NLat t}{\varepsilon \log^\dims\left(\NLat t / \varepsilon\right)}\right) \Bigg]\Bigg)
        .
    \end{aligned}
    \end{equation}
    In the limit of large system size, the second term in the last square bracket (the one originating from the $\log(1/\delta)$ term) is subdominant to the first, which allows us to write
    \begin{align}
        \chi_{\rm QSP} &= \OO\! \left( \NP \cP \log^{2\dims}\left(\NLat t / \varepsilon\right) \log\left[ \NP \log\left(\NLat t / \varepsilon\right)\right]\right).
    \end{align}
    Using the number of calls to this helper function from~\cref{eq:HHKL_calls}, the final gate complexity is
    \begin{equation}
    \begin{aligned}
        \chi & = \OO\!\left( \frac{\NLat t}{ \log^{\dims}\left(\NLat t/\eps \right)}\chi_{\rm QSP}\right) \\
        &= \OO\!\left( \NP^2 \NLat \cP t \log^{\dims}\left(\NLat t / \varepsilon\right)\log\left[ \NP \log\left(\NLat t / \varepsilon\right)\right]\right)
        ,
    \end{aligned}
    \end{equation}
    as desired.
\end{proof}

\section{Lattice Quartic Theory\label{sec:quartic}}
The results we have discussed so far are theoretical worst-case asymptotic complexities, applicable to any lattice field theory with prescribed locality properties. 
In this section, we specialize to a particular model of interacting quantum field theories: the lattice scalar field theory.
From the computational perspective, this theory is similar to lattice gauge theories and so intuition can be drawn from its consideration. 
Furthermore, it is often used as a proverbial example in the studies of quantum simulation algorithms~\cite{Jordan:2011ci,Jordan:2017lea,Liu:2021otn}, and so detailed comparisons with past work can be performed.

In this section, we review the Hamiltonian of the theory and our procedure for digitizating the bosonic field.
Throughout the discussion, we will keep track of the parameters of the theory required for quoting the asymptotic gate complexities derived in Sec.~\ref{sec:lattice_algorithms}, e.g., $\NP$, $\cP$, $\NH$, etc.
We then determine the asymptotic gate complexities for implementing time evolution using PFs, QSP with two different BE methods, and the HHKL algorithm.
To determine the coefficients in these asymptotic gate complexities, we explicitly construct circuits that implement time evolution for a single site. We first compare the gate cost of constructing controlled calls to the walk operator $W_H$ obtained using the na\"ive and improved LCU procedures.
Then we compare the gate costs for time evolution simulated via 4th order PF and QSP.

\subsection{Introduction}

The model we consider is defined on a hypercubic lattice $\Lambda$ of spatial dimension $\dims$ with lattice spacing $a$. 
Assuming the simplest finite-difference approximation to the spatial derivative $(\nabla_\mu \hat\varphi)^2$, the Hamiltonian is given by\footnote{\label{foot:gen}The following discussion can be generalized to the case of higher-order finite-difference approximations to the derivative operator. 
One common approach is to first perform integration by parts to get the term $\hat{\varphi} \nabla^2 \hat{\varphi}$ and then use a symmetric approximation for the second derivative.}
\begin{subequations}
\label{eq:phi4}
\begin{alignat}{9}
    &\hat{H}_\text{quartic} = \hat{H}_\varphi + \hat{H}_\pi\,,
    \\
    \label{eq:Hphib}
    &\hat{H}_\varphi = a^\dims \left[ \sum_{i=1}^{\NLat} 
    \dfrac{m}{2} \hat\varphi_i^2 + \dfrac{\lambda}{4!} \hat\varphi_i^4 
    +
    \sum_{\langle i j \rangle} \dfrac{(\hat\varphi_j - \hat\varphi_i)^2}{2a^2}\right],
    \\
    &\hat{H}_\pi = a^\dims \sum_{i=1}^{\NLat} \dfrac{1}{2}\hat\pi^2_i\,,
\end{alignat}
\end{subequations}
where the indices $i$ and $j$ label lattice sites, and the notation $\langle ij\rangle$ runs over all distinct links joining nearest neighbors in the lattice.
In the remainder of this work, we will denote operators with a hat to differentiate between their digitized eigenvalues.

The Hamiltonian in~\cref{eq:phi4} is $2$-site-geometrically-local, which implies that $\Nind = \OO(1)$. 
We see that the Hamiltonian is a sum of $\NH = \OO(\NLat)$ terms.

The operators $\hat{\pi}_i$ and $\hat{\varphi}_i$ are conjugate operators satisfying $[\hat{\varphi}_j, \hat{\pi}_i] = i a^{-\dims} \delta_{ij}$.
This relation implies that the so-called field basis, in which the $\varphi_i$ operators are diagonal, is related to the so-called momentum basis, where the $\pi_i$ operators are diagonal, by the usual Fourier transform.
The parameters $m$ and $\lambda$ are bare quantities, and are in general complicated non-perturbative functions of the lattice spacing, \textit{i.e.} $m = m(a)$ and $\lambda = \lambda(a)$. 
In principle, this functional dependence could be included to determine the asymptotic complexity of a realistic physical calculation in terms of the lattice spacing. 
In order to keep our expressions as general as possible, however, we choose to leave the bare parameters $m$ and $\lambda$ as free variables in our quoted asymptotic complexities; this allows our results to be adapted in a straightforward way to any specific physical use case.
For the remainder of this work, we use units where the lattice spacing $a=1$.

The above Hamiltonian, although discretized on the lattice, is still an unbounded operator acting on an infinite dimensional Hilbert space. In order to obtain concrete time-evolution operators whose action can actually be simulated on a system of qubits, we must first truncate this Hamiltonian to a finite-dimensional operator.
This is achieved by a procedure known as \textit{digitization}, in which the Hilbert space at each lattice site is represented using $\nq$ qubits; the details of our digitization procedure are given in the next section.

\subsubsection{Digitization} \label{Digitization}

The basic idea is that we wish to construct a sequence of matrix operators $H_{\nq}$ that can act on $\nq$ qubits at a time, and whose spectrum converges to any prescribed single-site Hamiltonian $H$ as $\nq\to\infty$. 
There are numerous ways to proceed with such a construction~\cite{Macridin:2018gdw,Macridin:2018oli,Macridin:2021uwn}, but the one we follow is outlined in~\cite{Klco:2018zqz,Farrelly:2020ckc}.

The strategy is to \textit{digitize} the field on a given lattice site into $N = 2^{\nq}$ allowed field values
\begin{equation}
    \varphi \in \{-\varphi_{\mathrm{max}},-\varphi_{\mathrm{max}} + \delta \varphi, \dots, \varphi_{\mathrm{max}}-\delta\varphi, \varphi_{\mathrm{max}}\}\,, 
\end{equation}
where the spacing $\delta \varphi$ is given by
\begin{equation}
    \delta \varphi = \frac{2\varphi_{\mathrm{max}}}{N-1}\,.
\end{equation}

This restriction of the field values is to be understood as a restriction of the Hilbert space at the given lattice site, identifying the eigenvalues of the restricted field operator precisely with this set of allowed field values. 
In other words, the digitized field operator in the field basis at site $i$ becomes the diagonal matrix
\begin{equation}
\label{eq:phidig}
\hat{\varphi}_i = \mathrm{diag}\left(\varphi_{\mathrm{min}},\varphi_{\mathrm{min}} + \delta \varphi, \dots, \varphi_{\mathrm{max}}\right)_i\,.
\end{equation}
This single-site field operator acts nontrivially only on the $\nq$-qubit Hilbert space attributed to the site labeled by $i$.
Note that all lattice sites use the same digitized $\varphi$ values.

Because $\hat{\pi}_i$ and $\hat{\varphi}_i$ are conjugate operators, the momentum operator in the field basis can be written as
\begin{equation}
\label{eq:pidig}
\begin{aligned}
    \hat{\pi}_i &= \mathrm{FT} \cdot \hat{\pi}^{(p)}_i \cdot \mathrm{FT}^{-1}\,,  
    \\
    \hat{\pi}^{(p)}_i &= \mathrm{diag}\left(-\pi_{\mathrm{max}}, -\pi_{\mathrm{max}} + \delta \pi, \dots, \pi_{\mathrm{max}}\right)_i\,,
\end{aligned}
\end{equation}
where the superscript $(p)$ indicates the operator is represented in the momentum basis, $\mathrm{FT}$ denotes the \textit{symmetric Fourier transform}~\cite{Klco:2018zqz} and
\begin{equation}
    \pi_\text{max} = \frac{\pi}{\delta \varphi}, \quad \delta \pi = \frac{2\pi_\text{max}}{N-1}\,.
\end{equation}
Note that this choice implies anti-periodic boundary conditions in field space. 
Using this definition, the momentum part of the Hamiltonian can be written as
\begin{equation}
    \label{eq:Hpi}
    \hat{H}_\pi = \sum_{i=1}^{\NLat} \dfrac{1}{2}
    \mathrm{FT}_i
    \cdot
    \bigl(\hat{\pi}^{(p)}_i\bigr)^2
    \cdot
    \mathrm{FT}^{-1}_i\,.
\end{equation}

The final component of the digitization strategy is the choice of $\varphi_\text{max}$.
It was shown in~\cite{Klco:2018zqz} that, for a given value of $\nq$ and $\lambda$, there is an optimal choice of $\varphi_\text{max}$ that minimizes the digitization error of the low lying spectrum of the digitized Hamiltonian.
The procedure for determining $\varphi_\text{max}$ involved scanning over many values to minimize the digitization error.
While this procedure is in general possible, an explicit formula for the optimal value of $\varphi_\text{max}$ for the $\lambda=0$ case was given in~\cite{Bauer:2021gup,Macridin:2018gdw,Macridin:2021uwn}.
As argued in~\cite{Klco:2018zqz}, because $\lambda \neq 0$ results in a smooth deformation of the wavefunction, this choice of $\varphi_\text{max}$ is expected to work well even in the $\lambda \neq 0$ case. 
Therefore, we use the value 
\begin{equation}
    \varphi_{\mathrm{max}} = 2^{\nq}\sqrt{\frac{2\pi}{2^{\nq + 1} + 1}}\,,
\end{equation}
as in \cite{Bauer:2021gek}.

Altogether, these definitions allow for the construction of $2^{\nq}\times 2^{\nq}$ Hermitian matrices that digitize each of the single-site terms in Hamiltonian \eqref{eq:phi4} into operators that can act on $\nq$ qubits at a time.
An appropriate full digitization of Hamiltonian \eqref{eq:phi4} is then obtained by directly substituting the definitions of the digitized single-site operators $\hat{\varphi}_i$ and $\hat{\pi}_i$ into \eqref{eq:phi4}, together with the link terms $\hat{\varphi}_i \hat{\varphi}_j = \hat{\varphi}_i \otimes \hat{\varphi}_j.$

We are now in a position to discuss the parameters of this theory necessary for determining the cost of the various methods described in Sec.\ref{sec:methods_summary}.
Recall that we found $\NH = \OO(\NLat)$ and $\Nind = \OO(1)$. 
The final parameters needed are $\NP$ and $\cP$, which we now discuss.

To determine the maximum number of Pauli strings $\NP$ appearing in any term in $\hat{H}$, we start by writing the various operators in terms of Pauli operators. 
Note that the $\nq$ qubit operators $\hat{\varphi}_i$ and $\hat{\pi}_i^{(p)}$ are diagonal operators with evenly spaced eigenvalues.
It was shown in~\cite{Klco:2018zqz} that operators of this form can be decomposed into $\nq$ single Pauli-Z gates.
For example, one has $\hat{\varphi} = \frac{\varphi_\text{max}}{2^{\nq}-1} \sum_{j=0}^{\nq-1} 2^j Z_j$, where $Z_j$ is a single Pauli-Z gate acting on qubit $j$. 
Using this relation, we see that $\NP$ for the various terms in $H_\varphi$ are
\begin{subequations}
\begin{align}
    \NP^{\left(\varphi_i^2\right)} &= \OO( \nq^2 )\,,
    \\
    \NP^{\left(\varphi_i^4\right)} &= \OO(\nq^4)\,,
    \\
    \NP^{\left(\varphi_i \varphi_j\right)} &= \OO( \nq^2 )\,.
\end{align}
\end{subequations}

Turning now to $\hat{H}_\pi$, while the operator $(\hat{\pi}_i^{(p)})^2$ is a sum of $\OO(\nq^2)$ gates, the same is not true for $\hat{\pi}_i^2 = \text{FT}\cdot (\hat{\pi}_i^{(p)})\cdot \text{FT}^\dagger$; 
after using the Fourier transform to change to the field basis, the operator $\hat{\pi}_i^2$ is dense and requires $\OO(4^{\nq})$ Pauli strings. 
Since the quantum Fourier transform requires $\OO(\nq^2)$ gates, it is more cost effective to first decompose $(\hat{\pi}_i^{(p)})^2$ into $\OO(\nq^2)$ Pauli strings, and then change to the field basis using the Fourier transform. 
Using this method results in
\begin{equation}
    \NP^{(\pi^2)} = \OO(\nq^2)\,.
\end{equation}
From this we see that $\NP$ for the entire Hamiltonian is dominated by the $\hat{\varphi}^4$ term, and is given by
\begin{equation}
    \NP = \OO(\nq^4) \,.
    \label{eq:np_sft}
\end{equation}
This result for $\NP$ is significantly more favorable than the worst case scaling for an arbitrary 2-site-local Hamiltonian of $\OO(4^{2\nq})$.

The final piece is to determine the maximum coefficient $\cP$ of any Pauli string in $\hat{H}$. 
Using the fact that our choice of digitization leads to $\pi_\text{max} = \OO(2^{\nq/2})$ and $\varphi_\text{max} = \OO(2^{\nq/2})$, we see
\begin{align}
    \cP^{(\pi^2)} &= \OO(\pi_\text{max}^2) = \OO(2^{\nq}),
    \\
    \cP^{(\varphi^2)} &= \OO(m^2 \varphi_\text{max}^2) = \OO(m^2 2^{\nq}),
    \\
    \cP^{(\varphi^4)} &= \OO(\lambda \varphi_\text{max}^4) = \OO(\lambda 4^{\nq})\,,
    \\
    \cP^{(\varphi_i \varphi_j)} &= \OO(\varphi_\text{max}^2) = \OO(2^{\nq})\,.
\end{align}
As was the case with $\NP$, the overall value of $\cP$ is dominated by the $\hat{\varphi}^4$ term, and is given by
\begin{equation}
    \cP = \OO(\lambda 4^{\nq})\,.
    \label{eq:cp_sft}
\end{equation}

These values for $\NP$, $\cP$ and $\NH$ will be used in the following sections to determine the asymptotic gate complexity of the various time-evolution algorithms discussed in Sec.~\ref{sec:methods_summary} for geometrically site-local Hamiltonians.

\subsection{Product Formulas}

In this section, we consider using a $p^\text{th}$ order PF (introduced in Sec. \ref{PF}) to time evolve the scalar field theory Hamiltonian (given by~\cref{eq:phi4}) that has been digitized using the formalism presented in Section \ref{Digitization}. 

Specifically, we construct concrete error bounds and asymptotic gate complexity estimates as a special case of the results obtained in Sec.~\ref{PF_NH} for general geometrically site-local Hamiltonians.
We use the result of Theorem~\ref{thm:PF_NH_GEOM} for the asymptotic bound on the elementary gate count $\chi$ (see \cref{eq:PF_NH_gates_2}) and make the following substitutions based on the discussion provided in Section \ref{Digitization}:
\begin{align}
     \NH &= \OO(\NLat), \\
     \NP &= \OO(\nq^4), \\
     \cP &= \OO(\lambda 4^{\nq}).
\end{align}
These follow from \cref{eq:NHgeomloc}, \cref{eq:np_sft}, and \eqref{eq:cp_sft} respectively. Putting this all together, the asymptotic gate complexity is thus given by

\begin{align}
    \chi = \OO\!\left( \nq^{4\left(2 + \frac{1}{p}\right)} \frac{\left( 4^{\nq} \NLat \lambda t \right)^{1 + \frac{1}{p}}}{\varepsilon^{\frac{1}{p}}} \right).
    \label{eq:chi_sft_trotter}
\end{align}

\subsection{Quantum Signal Processing}
\label{ssec:phi4_qsp}

Obtaining gate counts for the case of pure QSP requires two logical steps.
The first is to determine the number of calls to the local block encoding, and the second is to estimate the complexity of implementing this block encoding.
By Theorem \ref{thm:qspwq}, the query complexity for simulating with evolution time $t$ to exponentiation error $\eps$ is given by
\begin{equation}
\label{eq:asymptotic_qsp}
\OO(\alpha t + \log(1/\eps))\,,
\end{equation}
where $\alpha$ is the scale factor of the BE used.

We now consider two approaches to implement the local block encoding.
The first approach is based on the fully general LCU procedure, and does not take into account any features of the model.
The second approach is a modification of the former which takes advantage from splitting the Hamiltonian into the field and momentum parts, in analogy with how it was done in the product-formula-based approach.
In this section, we consider only asymptotic gate complexities; detailed gate count comparisons of these techniques will be given in~\cref{sec:numerics}.

\subsubsection{Block encoding via LCU}
\label{sssec:lcu_vanilla}

In this section, we determine the cost of simulating time-evolution via QSP when the BE is implemented using a fully general LCU procedure.
This involves first digitizing the Hamiltonian $\hat{H}$ and then decomposing it into Pauli operators.
In such a scenario, the dominant cost arises from the $\hat \pi_i^2$ operator, which is represented by $\OO(3^{\nq})$ Pauli strings (excluding $Z$ gates in all combinations).
The value of $\cP$ in this case is the same as in \cref{eq:cp_sft}.
Lastly, we have $\NH = \OO(\NLat)$.
Plugging these results into \cref{thm:qspwq2} leads to gate complexity
\begin{equation}
    \chi = \OO\!\left[\NLat 3^{\nq} \log\left(\NLat 3^{\nq}\right) \left(\NLat 3^{\nq} (\lambda 4^{\nq})t + \log(1/\eps)\right)\right].
\end{equation}
We see that the naive decomposition of the $\hat{\pi}_i^2$ operators into Pauli gates results in a poor scaling with~$\nq$.
\begin{figure*}[t]
\begin{quantikz}
    \lstick{$\sket{}_b$} & \ctrl{1} & 
    \\
    \lstick{$\sket{}_a$} & \gate[2]{W_H} & 
    \\
    \lstick{$\sket{}_s$} & &
\end{quantikz}
=
\begin{quantikz}
    \lstick{$\sket{}_b$} & \ctrl{1} & \ctrl{1}&
    \\
    \lstick{$\sket{}_a$} & \gate[2]{\SELECT} & \gate{R_\PREPARE} &
    \\
    \lstick{$\sket{}_s$} & & &
\end{quantikz}
=
\begin{quantikz}[row sep={0.8cm,between origins}, column sep=0.25cm]
    \lstick[1]{$\sket{}_b$} & \ctrl{1} & & & \ctrl{1} & & & \ctrl{4} &
    \\
    \lstick[4]{$\sket{}_a$} & \gate[5]{\SELECT} & \gate[4]{\PREPARE^\dagger} & & \octrl{1} & & \gate[4]{\PREPARE} & \qw & 
    \\
    \vdots \ \setwiretype{n} & & & & \ \vdots \ & & & & & 
    \\
    & & & & \octrl{1} \wire[u][1]{q} & & & \qw & 
    \\
    & & & \gate{\Xgate} & \gate{\Zgate} & \gate{\Xgate} & & \gate{-1} &
    \\
    \lstick[1]{$\sket{}_s$} & & & & & & & &
\end{quantikz}
\caption{Circuit for controlled call to $W_H$ when the BE is prepared using LCU. Note that $\sket{g}_a = \PREPARE \sket{0}_a$ and so the reflector $R_\PREPARE$ (see Fig.~\ref{fig:Rg}) uses $G = \PREPARE$.}
\label{fig:ctrl_WH_lcu}
\end{figure*}
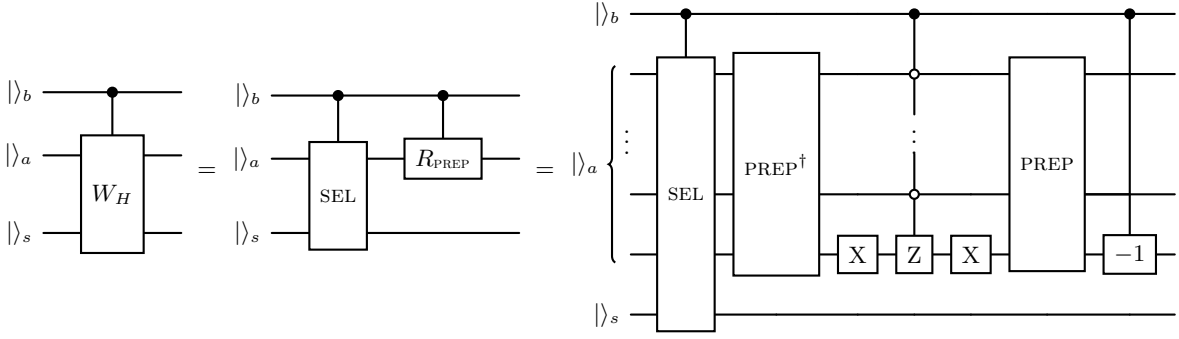
By exploiting the fact that $\SELECT^2 = \unit$, the walk operator $W_H$ can then be constructed without ancillary qubits using Eq.~\eqref{eq:WH_for_lcu}, as was discussed in~\Cref{sssec:qubitization}.

The final step is to construct a controlled call to $W_H$.
Rather than naively controlling every gate in $W_H$, one can achieve this by only controlling the $\SELECT$ oracle and the multi-controlled $Z$ gate in the reflector $R$; the $\PREPARE$ oracle does not need to be controlled~\cite{Babbush:2018ywg}. 
This can be understood by noting that if one wanted to implement a controlled call to some operator $U = A^\dagger B A$ where $A^\dagger A = \unit$, one only has to control $B$. 
The circuit for the controlled implementation of $W_H$ is shown in Fig.~\ref{fig:ctrl_WH_lcu}.

\subsubsection{Block encoding via LCU and FT}
\label{sssec:lcuft}
\begin{figure*}[t]
\begin{quantikz}%[row sep = 0.5cm]
\lstick[4]{$\sket{0^{\otimes k}}_a$}
& \octrl{1} 
& \ \cdots \ & \octrl{1}  
& & \ctrl{1} & \ \cdots \ &\ctrl{1}
& \qw & \qw
\\
& \octrl{1} & \ \cdots \ & \ctrl{1}
% & \ \cdots \
& & \octrl{1} & \ \cdots \ &\ctrl{1}
& \qw & \qw
\\
\vdots\setwiretype{n}& \ \vdots \ & \ \ddots \  &  \ \vdots \ 
& \phantom{ \ \ddots \ }  \setwiretype{n} & \ \vdots \ & \ddots & \ \vdots \ &
& \vdots &
\\
& \octrl{1} \wire[u][1]{q} & \ \cdots \ & \ctrl{1} \wire[u][1]{q} 
& & \octrl{1} \wire[u][1]{q} & \ \cdots \ & \ctrl{1} \wire[u][1]{q}
& \qw & \qw
\\
\lstick{$\sket{\psi}_s$} 
& \gate{U_0^{(\varphi)}} \qwbundle{} & \ \cdots \ & \gate{U^{(\varphi)}_{M_\varphi-1}}
% & \ \cdots \
& \gate{\vphantom{U_1^(}\text{FT}} & \gate{U_{0}^{(\pi)}} & \ \cdots \ & 
\gate{U_{M_\pi-1}^{(\pi)}}
& \gate{\vphantom{U_1^(}\text{FT}^\dagger} & \qw
\end{quantikz}
\caption{Circuit for \SELECT oracle in~\cref{eq:selectlcuft}. As compared to $\OO(3^{\nq})$ and $\OO(\nq)$ in the usual LCU implementation from~\cref{sssec:lcu_vanilla}, the usage of Fourier Transform described in~\cref{sssec:lcuft} allows one to reduce the circuit depth and the number of ancillary qubits to $\OO(\nq^4)$ and $\OO(\log \nq)$, correspondingly.
Note that while in the usual LCU procedure the gate complexity $\OO(3^{\nq})$ stems from the number of terms in $\hat{H}_\pi$, upon adding the Fourier Transform to the circuit, the $\OO(\nq^4)$ complexity stems from the $\hat{\varphi}^4$ term in $\hat{H}_\varphi$.
}
\label{fig:BEFT}
\end{figure*}

In order to take advantage of the decomposition~\eqref{eq:Hpi} within the LCU construction, one needs to slightly modify the \PREPARE and \SELECT oracles.
To write $\hat{H}$ in the form required for LCU, we first write the momentum component of $\hat{H}$ in the momentum basis as a sum of $M_\pi$ unitary operators
\begin{equation}
    \hat{H}_\pi^{(p)} = \sum_{j=0}^{M_\pi} \beta_j^{(\pi^{(p)})} \hat{U}_j^{(\pi^{(p)})}\,.
\end{equation}
Next, we write the full Hamiltonian as
\begin{equation}
    \hat{H} = \sum_{i=0}^{M_\varphi-1} \beta_i^{(\varphi)} \hat{U}^{(\varphi)}_j + \text{FT} \left(\sum_{j=0}^{M_\pi-1} \beta_j^{(\pi^{(p)})} \hat{U}_j^{(\pi^{(p)})}\right) \text{FT}^\dagger\,,
\end{equation}
where it is understood that the Fourier transform is performed locally at each site. 
The Hamiltonian is now in a form suitable for the modified LCU procedure. 

First, the \PREPARE oracle is used to encode the real positive coefficients $\beta_i^{(\varphi)}$ and $\beta_j^{(\pi^{(p)})}$.
The new \SELECT  oracle is shown in~\cref{fig:BEFT} and is given by
\begin{align}
\label{eq:selectlcuft}
\begin{aligned}
&\SELECT = (\unit_a \otimes \text{FT})
\\
& \otimes\left( \sum_{j=0}^{M_\pi-1} \sket{j}_a\sbra{j}_a\otimes \hat{U}_j^{(\pi^{(p)})}\right)(\unit_a \otimes \text{FT}^\dagger)
    \\
    &+\sum_{i=0}^{M_\varphi-1} \sket{i+M_\pi}_a\sbra{i+M_\pi}_a \otimes \hat{U}_i^{(\varphi)}\,,
\end{aligned}
\end{align}
where, again, it is understood that the Fourier transform is performed locally at each site.
It is important to note that the modified $\SELECT$ operator still satisfies $\SELECT^2=\unit$, which enables the use of the efficient construction of $W_H$.

Note that the \SELECT oracle prepared using this method encodes the same operators as the \SELECT oracle prepared using the naive LCU procedure in Sec.\ref{sssec:lcu_vanilla}; the only difference is how it is implemented.
This different implementation, however, results in a more favorable scaling for $\NP$.
This can be understood by noting that the $\NP$ associated with the momentum Hamiltonian is reduced to $\NP^{(\pi^{(p)})} = \OO(\nq^2)$, implying that the value of $\NP$ is dominated by the $\hat{\varphi}^4$ term. 
Substituting $\NP = \OO(\nq^4)$, $\cP = \OO(\lambda 4^{\nq})$ and $\NH = \OO(\NLat)$ into the result in ~\cref{thm:qspwq2} gives the gate complexity
\begin{equation}
\label{eq:lcu_FT_be}
    \chi = \OO\!\left[\NLat \nq^4 \log(\NLat \nq^4) \left(\NLat \nq^4 (\lambda 4^{\nq}) t + \log(1/\eps) \right) \right],
\end{equation}
which has a more favorable scaling with $\nq$ than the naive LCU implementation in Sec.~\ref{sssec:lcu_vanilla}.
Since $\SELECT^2=\unit$, the controlled call to $W_H$ can be constructed in the same way as the naive LCU case in Sec.~\ref{sssec:lcu_vanilla}.

\subsection{HHKL}

In order to apply the HHKL results formulated in our conventions in~\cref{thm:hhklpf} and~\cref{thm:hhklqsp}, note that we must scale down the Hamiltonian \eqref{eq:phi4} by the appropriate factor that brings the local terms to at most unit spectral norm, and simultaneously rescale evolution time up by the same factor.
This factor is precisely the maximum spectral norm of any local term $\tilde{H}_J$,
\begin{align}
\begin{aligned}
    &\max_{J\in S} \| \tilde{H}_J \|
    \\
    &= \left\|\frac 12 \hat{\pi}_i^2 + \frac 12 m^2\hat{\varphi}_i^2 + \frac{\lambda}{4!} \hat{\varphi}_i^4 + \sum_{j\in\langle ij\rangle} \frac{(\hat{\varphi}_j-\hat{\varphi}_i)^2}{2}\right\|
    \\
    & = \mathcal O(\NP \cP) = \OO(\nq^4 (\lambda 4^{\nq}))\,,
\end{aligned}
\end{align}
where we have used $\NP = \OO(\nq^4)$ and $\cP = \OO(\lambda 4^{\nq})$, and where any lattice site with index $i$ in the interior of the lattice (or on the boundary if there are no sites in the interior) can be chosen to compute the spectral norm, and the sum runs over all lattice sites $j$ neighboring $i$.
We therefore make the rescalings
\begin{align}
\tilde{H} &= \frac{H}{\max_{J\in S} \| H_J \|}\\
\tilde{t} &= t \max_{J\in S} \| H_J \| = \OO(4^{\nq} \nq^4 \lambda t)
\,.
\end{align}

For HHKL built out of local Trotter operations, Theorem~\ref{thm:hhklpf} implies a gate complexity
\begin{equation}
    \chi = \OO\!\left[ \nq^4 (\NLat \nq^4 (\lambda 4^{\nq}) t)^{1+\frac{1}{p}} \eps^{-\frac{1}{p}} \right]\,.
\end{equation}
where we substituted $\cP = \OO(\lambda 4^{\nq})$ and $\NP = \OO(\nq^4)$.

Similarly, for HHKL built out of local QSP operations, Theorem~\ref{thm:hhklqsp} then implies a gate complexity
\begin{equation}
\begin{split}
    \chi &= \OO \Big[(\nq^4)^2 \NLat (\lambda 4^{\nq}) t \log^\dims\left(\NLat t / \varepsilon\right)
    \\
    &\hspace{0.8in} \times \log\left( \nq^4 \log\left(\NLat t / \varepsilon\right)\right) \Big].
\end{split}
\end{equation}

To conclude our discussion on HHKL scalings, we again stress an important point.
As we have previously remarked, while the purely asymptotic scaling of the Trotter-based HHKL gate count is worse than that for QSP-based HHKL, the difference is not striking when the Trotter order $p$ is taken to be large. 
This allows for the possibility of lower constant pre-factors due to the smaller overhead involved in implementing Trotter operations compared to QSP operations, and may even be a prudent choice for a more practical implementation on near-term devices, if the exact gate counts follow suit before asymptotics take over.

\subsection{Numerical results.}
\label{sec:numerics}

In the previous section we worked out the asymptotic gate count scaling for several time-evolution algorithms.
While these expressions are helpful for identifying which methods have the best scaling in a particular parameter, the practical question of the size of the prefactors requires a dedicated numerical study, which we perform in this section.
In particular, we first compare the cost of the two BE methods described in the previous section.
For the comparison, we choose to compare the cost of constructing a single controlled call to the $W_H$ operator. This choice allows a more direct comparison of the cost of each method as in some cases significant circuit simplifications occur compared to naively controlling every gate in the $W_H$ operator.
From there, we turn to time evolution and compare the cost of using QSP to that of a 4th order PF, focusing on the scaling with time $t$ and error $\eps$.

We choose to focus on a single site Hamiltonian given by
\begin{equation}
    \label{eq:single_site_phi4}
    \hat{H} = \frac{1}{2} \hat{\pi}^2 + \frac{1}{2} \hat{\varphi}^2 + \frac{\lambda}{4!} \hat{\varphi}^4\,,
\end{equation}
where we set $\lambda = 32$ (similarly to \cite{Klco:2018zqz}) as an instance of the strongly coupled regime where perturbation theory breaks down.
While this will not provide an explicit scaling with the number of lattice sites, analyzing the single site case in detail will provide intuition for what values of $t$ and $\eps$ QSP outperforms PF methods for $\NLat > 1$.

Before comparing gate counts of the various methods, we describe the details of constructing the circuits.
All gate counts were found using QISKIT's transpile function with an optimization level of 1; higher optimization levels result in the dropping of gates with small arguments and were therefore not used. 
We use the universal gate-set containing $R_x$, $R_z$, and CNOT gates.
When implementing multi-controlled gates, we compared the cost using two different methods. 
The first uses QISKIT's built-in functionality to decompose the multi-controlled gates into $\OO(M^2)$ CNOT gates, where $M$ is the number of qubits being controlled on.
The second method uses $\OO(M)$ additional work qubits to decompose the multi-controlled gates into $\OO(M)$ gates~\cite{Nielsen:2012yss}.
For each value of $\nq$, we compare the two and choose the method that results in the smallest number of gates.
Lastly, it is important to note that, when implementing multi-controlled gates, global phases must also be controlled. 
Because using LCU requires the coefficients in the $\PREPARE$ oracle to be real and positive, any overall phase must necessarily be absorbed into the individual operators in the $\SELECT$ oracle.
Controlling these phases requires an additional
controlled single-qubit rotation.
Alternatively, one could use the generalized LCU method to allow the coefficients in the $\PREPARE$ oracle to be complex and set the phase of the operators in the $\SELECT$ oracle to have zero overall phase (see e.g.~\cite{Lin:2022vrd}).
While this avoids having to control the phases, it leads to a more complicated $\PREPARE$ oracle.
Additionally, because the generalized LCU method uses circuits of the form $\widetilde{\PREPARE} \times \SELECT \times \PREPARE$ with $\widetilde{\PREPARE} \times \PREPARE \neq \unit$, one will have to also control both $\PREPARE$ and $\widetilde{\PREPARE}$ in the controlled $W_H$ circuits in Fig.~\ref{fig:ctrl_WH_lcu}. 
For these reasons, we use the former method to control the overall phases.

We now compare the cost of implementing the controlled $W_H$ operator using the two different BE methods. 
The top and middle plots in Fig.~\ref{fig:BEs} show the CNOT gate count and rotation gate count, respectively, as a function of $\nq$. 
The blue points show the gate counts using the LCU method where one naively decomposes the full digitized Hamiltonian into Pauli strings as described in \cref{sssec:lcu_vanilla}.
As expected, this method scales least favorably with $\nq$. 
As previously discussed in \cref{sssec:lcuft}, this poor gate scaling with $\nq$ can be improved by utilizing the Fourier Transform to implement the momentum component of the Hamiltonian.
This improved scaling can be seen by looking at the orange squares in the top and middle plots in Fig.~\ref{fig:BEs}.

The bottom plot in Fig.~\ref{fig:BEs} shows the number of ancillary qubits required to implement a controlled call to $W_H$ for each method.
As expected from the $\OO(\nq)$ scaling, using the naive LCU method requires the most ancillary qubits.
Using LCU combined with the FT reduces the scaling to $\OO(\log \nq)$.
Note that, for both methods, the overall coefficient in the number of ancillary qubits is large due to needing additional work qubits to efficiently implement multi-controlled gates.

Taken together, the gate count and ancillary qubit count comparisons show that using LCU combined with the FT is superior to the naive LCU method for all values of $n_q$.

\begin{figure}
    \centering
    \includegraphics[width=0.42\textwidth]{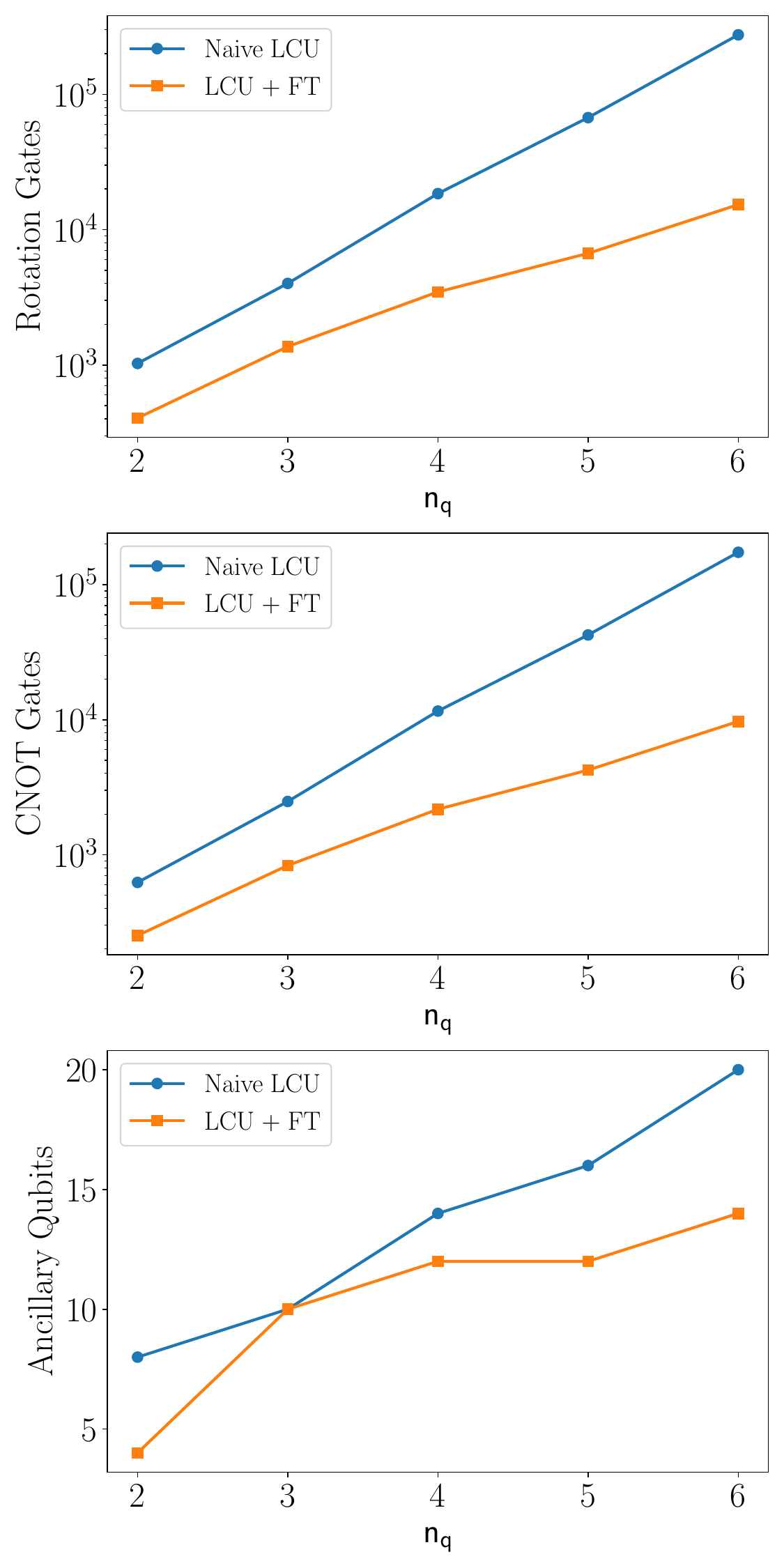}
    \caption{
    The top, middle, and bottom plots show the CNOT gate count, rotation gate count, and number of ancillary qubits needed to prepare a controlled call to the $W_H$ operator corresponding to the single site Hamiltonian in Eq.~\ref{eq:single_site_phi4} as a function of $\nq$.
    Blue circles correspond to using the naive LCU method described in Sec.~\ref{sssec:lcu_vanilla}.
    Orange squares correspond to using LCU combined with the Fourier Transform as described in Sec.~\ref{sssec:lcuft}.
    Using LCU combined with the FT results in an exponential improvement with $\nq$ in both the gate count and number of ancillary qubits required.
    }
    \label{fig:BEs}
\end{figure}

Now that we have identified the optimal method for preparing BEs, we turn to the cost of performing time evolution.
The first method we implement is a 4th order PF.
The second uses QSP, where the BE is prepared using the BE method that minimizes the number of rotation gates, \emph{i.e.}, using LCU combined with the FT.
The top (bottom) plot in Fig.~\ref{fig:cnot_3d} show the rotation (CNOT) gate count for both methods as a function of evolution time $t$ and error $\eps$ for $\nq = 3$.
The value of the scale factor $\alpha$ for the QSP data is $\alpha = 134.5$.
Due to QSP's large overall prefactor from implementing the BE, using the 4th order PF requires less rotation and CNOT gates for errors $\eps~\gtrsim~10^{-9}$.
For errors $\eps \lesssim 10^{-9}$, QSP wins due to its exponentially better scaling with respect to $\eps$.

It is interesting to note that the QSP gate count scaling is almost independent of the desired error $\eps$. 
This is due to the fact that the $\alpha t$ term in the asymptotic cost is significantly larger than the $\log(1/\eps)$ term (see e.g. Eq.~\eqref{eq:asymptotic_qsp}).
These findings therefore show that the additional cost of implementing time evolution to floating point precision using QSP is negligible compared to implementing time evolution approximately (not accounting for the increased cost from requiring higher precision rotation gates).
In \cref{sec:discussion} we discuss the implications of these results for multi-site systems.

\begin{figure*}
    \centering
    \includegraphics[width=0.9\textwidth]{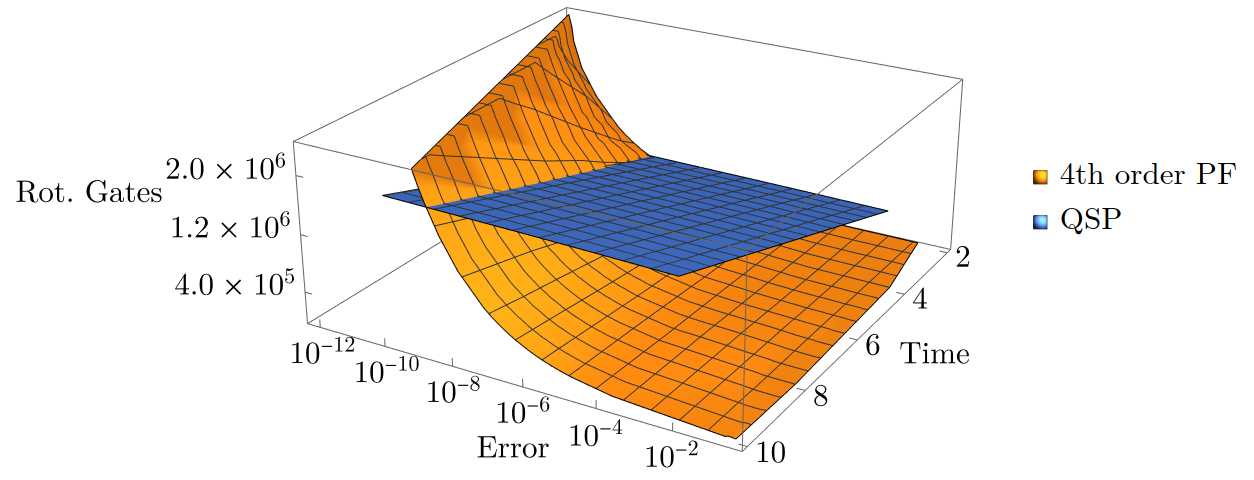}
    \includegraphics[width=0.9\textwidth]{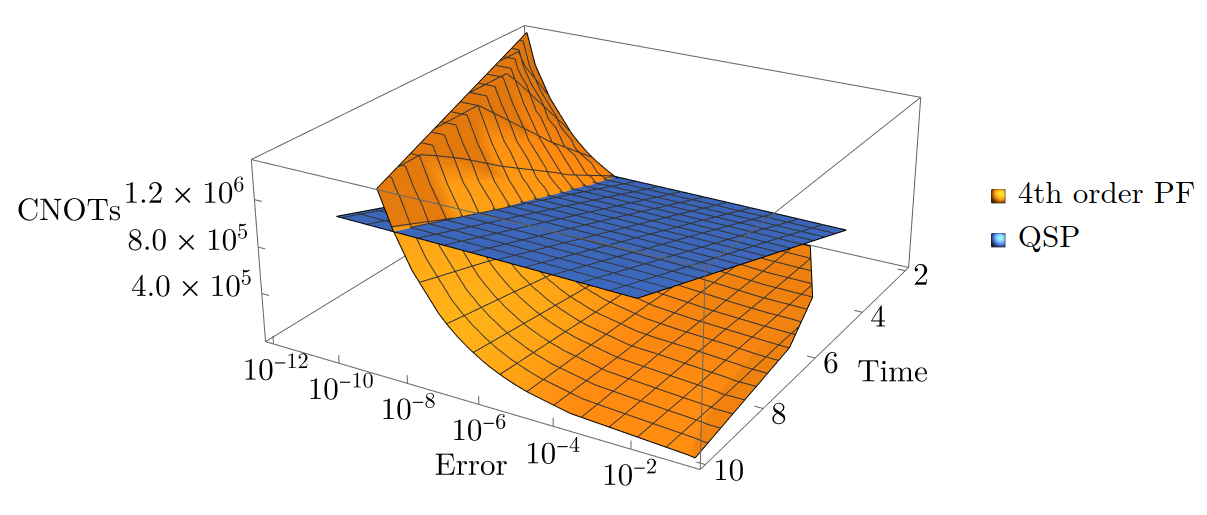}
    \caption{The top (bottom) plot shows the rotation (CNOT) gate count for implementing the time evolution operator for the single site Hamiltonian in Eq.~\eqref{eq:phi4} with $\nq=3$. Results are shown as a function of time $t$ and error $\eps$.
    The orange and blue colored surfaces show results using a 4th order PF and QSP, respectively. 
    The BE used in the QSP calculation was implemented using the method in Sec.~\ref{sssec:lcuft}.
    The PF requires less rotation and CNOT gates for errors $\eps \gtrsim 10^{-9}$.
    For error tolerances $\eps \lesssim 10^{-9}$, QSP wins due to its exponentially better scaling with $\eps$.
    Furthermore, the QSP gate count is almost independent of the error due to the optimal scaling with $\eps$.}
    \label{fig:cnot_3d}
\end{figure*}
\section{Discussion\label{sec:discussion}}

In this section we discuss how results of~\cref{sec:lattice_algorithms} provide guidance for choosing a simulation algorithm for the particular physical model and then examine the results of~\cref{sec:quartic}.

\begin{table}
    \centering
    \begin{tabular}{cl!{\vrule width 2\arrayrulewidth}c|c|c}
        \multicolumn{5}{c}{Non-geometrically-local} \\ \hlinewd{1pt}
        && $\eps$ & $t$ & $\NH$ \\
        \hline
        a)& PF & $\vphantom{\biggl(}\eps^{-\frac{1}{p}}$ & $t^{1+\frac{1}{p}}$ & $\bm{\NH^{1+\frac{1}{p}} \Nind}$ \\\hline
        b)& QSP+PF & $\eps^{-\frac{1}{p}}\log(1/\eps)$ & $\vphantom{\biggl(}t^{1+\frac{1}{p}}$ & $\begin{aligned}
        &\vphantom{\biggl(}\NH^{1+\frac{1}{p}}\Nind
        \\&\times\log(\NH\Nind)
        \end{aligned}$ \\\hline
        c)& QSP& $\vphantom{\biggl(}\bm{\log (1/\eps)}$ & $\bm{t}$ & $\bm{\NH^2\log(\NH)}$ \\
        \hlinewd{2pt}
        \multicolumn{5}{c}{Geometrically-local} \\ \hlinewd{1pt}
        && $\eps$ & $t$ & $\NLat$ \\
        \hline
        d)& PF& $\vphantom{\biggl(}\eps^{-\frac{1}{p}}$ & $t^{1+\frac{1}{p}}$ & $\NLat^{1+\frac{1}{p}}$ \\\hline
        e)& PF+HHKL & $\vphantom{\biggl(}\eps^{-\frac{1}{p}}$ & $t^{1+\frac{1}{p}}$ &  $\NLat^{1+\frac{1}{p}}$\\\hline
        f)& QSP+PF& $\vphantom{\biggl(}\eps^{-\frac{1}{p}}\log(1/\eps)$ & $t^{1+\frac{1}{p}}$ &  $\NLat^{1+\frac{1}{p}}\log(\NLat)$ \\\hline
        g)& QSP& $\vphantom{\biggl(}\bm{\log(1/\eps)}$ & $\bm{t}$ & $\NLat^2$\\\hline
        h)& QSP+HHKL& $\bm{\log(1/\eps)}$ & $\bm{t}$ & $\begin{aligned}
            &\vphantom{\biggl(}\bm{\NLat [\log(\NLat)]^\dims}
            \\
        &\bm{\times\log\log(\NLat)}
        \end{aligned}$
    \end{tabular}
    \caption{Asymptotic dependence ($\OO$ omitted) on major problem parameters for algorithms in~\cref{tab:site-local,tab:geometically-local}.
    Boldface indicates best asymptotic scaling among applicable methods.
    While $\Nind$ is constant for geometrically-local theories, it typically scales polynomially with $\NLat$ otherwise.
    For most non-geometrically-local theories, the order $p$ can be chosen large enough so that PFs demonstrate better scaling with $\NLat$ than QSP.
    }
    \label{tab:asympt_separate}
\end{table}

The major parameters of interest defining the asymptotic complexities of time evolution algorithms are the evolution time $t$, the error of the time-evolved state $\eps$, as well as some measure of Hamiltonian size/complexity.
For lattice systems, this can be the number of terms in the Hamiltonian $\NH$.
For a particular theory, $\NH$ can also be expressed in terms of lattice size $\NLat$.
The dependence of $\NH$ on $\NLat$ is linear for geometrically-site-local Hamiltonians and is, more generally, polynomial.
For algorithms based on PFs, one additionally needs to consider a parameter $\Nind$ related to the induced norm of the Hamiltonian.
The asymptotic dependence on individual parameters for algorithms discussed in~\cref{sec:review_alg,sec:lattice_algorithms} is summarized in~\cref{tab:asympt_separate}.

The parameter for which the difference in asymptotic dependence is most striking for PF- and QSP-based algorithms is the error $\eps$.
Nevertheless, using algorithms with $\log(1/\eps)$ dependence may not necessarily be an optimal choice for several reasons.
First, their asymptotic cost comes with large constant prefactors stemming from the implementation of the BE subroutine.
While progress is being made in this direction, the value of $\eps$ at which using such techniques clearly becomes advantageous is $\eps\sim10^{-9}$, which is much smaller than uncertainties coming from other sources (e.g., from digitizing a theory with bosonic degrees of freedom, finite lattice spacing and volume effects, to name just a few).
Second, as discussed below, in some scenarios PF-based algorithms have better dependence on problem size than those based on QSP.
Third, and least importantly (in the fault-tolerant regime), PF based algorithms do not require the usage of ancillary qubits, while QSP-based algorithms typically require a logarithmic (in problem parameters) number of those.

Recent research~\cite{rhodes2024exponential} indicates that the error tolerance at which near-optimal algorithms start outperforming the PF-based ones is highly problem-dependent.
In particular, when simulating lattice gauge theories, the complexity of PF-based simulation is likely to scale exponentially with the number of colors~\cite{Kan:2021xfc,Davoudi:2022xmb}, while the near-optimal algorithms can have polynomial scaling with the number of colors, which is achieved by leveraging the sparsity of gauge field operators.
As a result, the usage of near-optimal techniques may lead to tremendous savings starting at error rates as low as $\eps=10^{-1}$~\cite{rhodes2024exponential}.

In terms of the asymptotic dependence on problem size, a clear winner is the HHKL algorithm, assuming that for implementing the time evolution of local blocks, an algorithm with $\log(1/\eps)$ scaling is chosen, such as QSP.
Note, however, that for higher-order PFs~\cite{childs2019nearly} the improvement reached by the HHKL algorithm in terms of problem size may not be particularly pronounced, especially, unless a very high precision is required, as discussed above.

For non-geometrically-local Hamiltonians, the situation is largely determined by the Hamiltonian structure defining the dependence of $\Nind$ on $\NLat$.
In particular, in the second-quantized formulation, this dependence will be defined by the choice of single-particle basis states (plane waves~\cite{Kreshchuk:2020dla,Liu:2021otn}, wavelets~\cite{Brennen:2014iqu}, harmonic oscillator eigenstates~\cite{Vary:2009gt}, etc.).
Importantly, whether the Hamiltonian is geometrically-local or not, one can imagine scenarios in which, at a fixed value of $\eps$, switching from PF to QSP would make sense as one increases the problem size.
However, given the sharp difference in the dependence on $\eps$ for these approaches, such scenarios are not highly probable.

Lastly, we note that the dependence on $t$ is likely to not be a crucial parameter in choosing the simulation algorithm.
Nevertheless, a situation in which considering longer times can motivate one to switch from PF to QSP is possible as well.

In addition to factors considered in this work, other considerations should also be taken into account.
Most notably, the compatibility of methods based on BE with general sparse Hamiltonians and modern simulation techniques~\cite{Low:2018pte,PhysRevA.99.042314,Tong:2021rfv} and the potential for applying the PF-based methods on early-fault-tolerant quantum computers~\cite{Dong:2022mmq,Lin:2021rwb}.

Construction of efficient block encodings (BEs) is critical for improving the efficiency of QSP-based algorithms.
In~\cite{Jordan:2011ci,Klco:2018zqz} it was shown how digitizing a bosonic degree of freedom in the eigenbases of position/momentum operator is advantageous from the perspective of using PF algorithms.
There, the key factors are a) an economical qubit representation of arbitrary powers of position/momentum operators in their respective eigenbases and b) the efficiency of switching between the position and momentum bases with the aid of Fourier Transform implemented on the quantum computer.
In~\cref{sssec:lcuft} we showed how this approach can be equally well applied when constructing a BE with LCU.

\section{Conclusion\label{sec:conclusion}}

In this work we studied various approaches to simulating time evolution in application to Hamiltonian Lattice Field Theories (HLFTs).
We started off by providing a review of existing techniques based on Product Formula (PF) splitting~\cite{Childs:2019hts}, Quantum Signal Processing (QSP)~\cite{Low:2016sck} and Qubitization~\cite{Low:2016znh}, Linear Combination of Unitaries (LCU)~\cite{Childs:2012gwh},
as well as the HHKL algorithm~\cite{Haah:2018ekc}.
Following that, we derived asymptotic costs of these methods in terms of parameters describing site-local and geometrically-site-local Hamiltonians arising in HLFT, see~\cref{tab:site-local,tab:geometically-local}.
While in this work we chose to contrast the most accessible and widely used algorithms based on PF with the methods having the best asymptotic scaling (QSP, HHKL), we note that there exist intermediate approaches.
Most notably, this includes the direct usage of Taylor series and LCU without QSP~\cite{Berry:2014ivo}.

Next, we focused on a particular type of HLFT, the quartic scalar field theory defined on a hypercubic lattice.
Our specific choice was guided by the fact that all the fundamental interactions are mediated by bosons, and from the computational perspective the scalar field theory is largely reminiscent of lattice gauge theories~\cite{Bauer:2021gek,Haase:2020kaj,Ji:2020kjk,Bauer:2023jvw,Grabowska:2022uos,Kane:2022ejm,Ciavarella:2024fzw}.
Besides this, an extensive volume of literature exists focusing on simulations of fermionic systems~\cite{Bravyi:2000vfj,Verstraete:2005pn,Toloui:2013xab,tranter2015b,Babbush:2015utf,Aspuru-Guzik:2016mom,Setia:2018qmu,Babbush:2018ywg,Babbush:2019yrx,Kirby:2021vkt,sakamoto2023end}.
Developing simulation algorithms for systems combining particles of both statistics is an exciting direction which was pursued in~\cite{Macridin:2018gdw,Macridin:2018oli,Watson:2023oov} where authors assumed the usage of PF-based methods, and is currently investigated in the context of nearly-optimal techniques~\cite{rhodes2024exponential}.

For the scalar field theory, we worked out the query complexities of various simulation methods as well provided explicit gate counts (not the upper bounds) for the case of simulating the system on a single lattice site using PF and QSP. 
Next, we focused on block encoding a single site in scalar field theory and left to further work additional improvements, such as reducing the cost of \SELECT operator from $\OO(\NLat\log\NLat)$ to $\OO(\NLat)$~\cite{Babbush:2018ywg} or taking into account the structure of coefficients in the Hamiltonian operator while/in constructing \PREPARE operator.
We developed an improved version of the LCU block encoding for digitized bosons~\cite{Jordan:2011ci,Klco:2018zqz} which utilizes the Quantum Fourier Transform circuit for switching between the field and momentum eigenbases, see Fig.~\ref{fig:BEFT}.
We then provided explicit gate counts for simulating time evolution which showed that PF-based simulation works exceptionally well for simple systems, such as anharmonic oscillator, and outperforms the QSP-based algorithm for $\eps\gtrsim10^{-9}$.
We note, however, that considering more complex models (such as non-Abelian lattice gauge theories) and larger lattice system sizes may result in near-term algorithms outperforming PF for error tolerances as little as $\eps\lesssim 10^{-1}$~\cite{rhodes2024exponential}.

Another important distinction between PF and QSP methods is the ease of quoting accurate error bounds for the final result.
Starting with QSP methods, all that is needed to place an upper bound on the error is the scale factor $\alpha$, the time $t$, and the degree of the polynomial approximation.
While the error determined in this way is only an upper bound, because QSP methods have optimal scaling $\log(1/\eps)$ in the error, the cost of overestimating the error is low compared to the contribution stemming  from the $\alpha t$ term in~\cref{eq:nphi}.

For PF methods on the other hand, while rigorous error bounds also exist, they are known to generally significantly overestimate the actual error~\cite{Childs:2019hts}.
Choosing the value of the Trotter number $r$ based on these loose error bounds would therefore result in significant extra computational costs due to the sub-optimal $(1/\eps)^{1/p}$ scaling of PF methods.
Tightening error bounds in this case implies either performing laborious problem-specific commutator calculations~\cite{Childs:2019hts,Watson:2023oov} or performing numerical simulations at several values of the Trotter number in order to extrapolate the results.
As shown in Ref.~\cite{Carena:2021ltu}, since using PFs to approximate the time evolution operator is equivalent to introducing a temporal lattice with spacing $a_t$, such an extrapolation would necessitate a renormalization procedure with a costly scale setting procedure at each value of $a_t$.\footnote{As was also shown in Ref.~\cite{Carena:2021ltu}, to avoid a double extrapolation $a \to 0$ and $a_t \to 0$, one should work at fixed anisotropy $\xi \equiv a/a_t$ and take the limit $a \to 0$.}

\section*{Acknowledgements}

The authors are grateful to Jeongwan Haah, Lin Lin, and Ivan Mauricio Burbano Aldana for discussions and comments.
SH acknowledges support from the Berkeley Center for Theoretical Physics and the National Energy Research Scientific Computing Center (NERSC), a U.S. Department of Energy Office of Science User Facility located at Lawrence Berkeley National Laboratory, operated under Contract No. DE-AC02- 05CH11231. CWB and MK were supported by the DOE, Office of Science under contract DE-AC02-05CH11231, partially through Quantum Information Science Enabled Discovery (QuantISED) for High Energy Physics (KA2401032). 
This material is based upon work supported by the U.S. Department of
Energy, Office of Science, Office of Advanced Scientific Computing Research, Department of Energy Computational Science Graduate Fellowship under Award Number DE-SC0020347.

\bibliographystyle{apsrev4-1}
\bibliography{main}

\clearpage
\newpage
\onecolumngrid

\appendix

\section{General qubitization procedure}
\label{app:general_W}

In all the cases considered in this work the quantum walk operator $W_H$ can be constructed from the block encoding $U_H$ using the recipe given in~\cref{eq:Whdefas}.
For completeness, below we provide a general approach from~\cite{Low:2016znh} which allows one to assemble the circuit for the $W_H$ operator using an ancillary qubit and two controlled calls to $U_H$.
This is necessary in cases when one is unable to find an operator $S$ satisfying~\cref{eq:S_cond}. 

The general construction requires adding one qubit $\sket{q}$ to the ancillary Hilbert space, such that ${\cal H}_a \to {\cal  H}_{qa} = {\cal H}_q \otimes {\cal H}_a$.
In this new ancillary Hilbert space we define
\begin{align}
    \sket{g}_{qa} = \sket{+}_q \sket{g}_a
    \,,
\end{align}
and the operator $W_H$ is defined on the Hilbert space ${\cal H}_{qas} = {\cal H}_q \otimes {\cal H}_a \otimes {\cal H}_s$. 
\begin{figure}[h]
\begin{quantikz}
\lstick{$\sket{}_q$}    & \gate{\Hgate}          
& \octrl{1} \gategroup[3,steps=2,style={dashed, rounded corners,inner sep=6pt},label style={label position=below,anchor=north}]{$\vphantom{\biggl(}U'_{qas}$}        & \ctrl{1}  & \gate{\Xgate} & \gate[2]{R_{qa}} & \gate{\Hgate} & \qw \\
\lstick{$\sket{}_a$}  & \qw \qwbundle{}
& \gate[2]{U_H^{\phantom{\dagger}}} & \gate[2]{U_H^\dagger} & \qw & \qw & \qw & \qw\\
\lstick{$\sket{}_s$} &\qw\qwbundle{}  
&  &  & \qw     &\qw & \qw & \qw 
\end{quantikz}
\caption{
Circuit for the qubitized block encoding $W_H$, given an arbitrary block encoding $U_H$. The circuit for $R_{qa}$ is shown in~\cref{fig:R}.
}
\label{fig:BEW}
\end{figure}
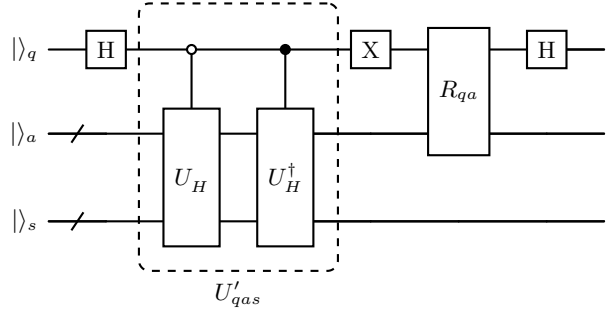
\begin{figure}[h]
\begin{quantikz}
\lstick{$\sket{}_q$} & \gate{\Hgate} & \qw & \octrl{1} & \qw & \gate{\Hgate} & \gate{-1} &\\
\lstick[4]{$\sket{}_a$} & \gate[4]{G^\dagger} & & \octrl{1} & & \gate[4]{G} & \qw & \\
\vdots \ \setwiretype{n} &  & & \ \vdots \ & & & \vdots & \\
& & & \octrl{1} \wire[u][1]{q} & & & \qw & \\
& & \gate{\Xgate} & \gate{\Zgate} & \gate{\Xgate} & & \qw &
\end{quantikz}
\caption{
Circuit for the reflection operator $R_{qa}$ defined in~\cref{eq:refl} and used in~\cref{eq:Whdefqas}, see~\cite{Lin:2022vrd} for alternative implementations. $G$ is the oracle for preparing the state $\sket{g}_a$, i.e. ${G\sket{0}_a=\sket{g}_a}$.
The $\boxed{-1}$ gate adds an overall phase to the circuit (negative sign), and can be implemented, e.g., as $\Xgate\Zgate\Xgate\Zgate$.
}
\label{fig:R}
\end{figure}
As was shown in Section 4 of~\cite{Low:2016znh}, it is given by 
\begin{align}
\label{eq:Whdefqas}
\begin{aligned}
W_H =&  (\Hgate_q\otimes\unit_{as})
(R_{qa}\otimes\unit_s)
(\Xgate_q\otimes\unit_{as})
\\
&
\times
% (\unit_a\otimes
U'_{qas}
(\Hgate_q\otimes\unit_{as}) \,,
\end{aligned}
\end{align}
where $\Hgate_q$ and $\Xgate_q$ are the Hadamard and X gate on the qubit $\sket{q}$, $R_{qa}$ is a standard reflection operator
\begin{align}
    \label{eq:refl}
    R_{qa} = 2 \sket{g}_a \sket{+}_q \sbra{+}_q \sbra{g}_a - \unit_{qa}
    \,,
\end{align}
and
\begin{align}
    U'_{qas} = \sket{0}_q \sbra{0}_q U_H + \sket{1}_q \sbra{1}_q U_H^\dagger
    \,,
\end{align}
is an operator that applies the $U_H$ or $U_H^\dagger$ operator, depending on the state of $\sket{q}$. 

One can show through explicit computation that
\begin{align}
    W_H \sket{g}_{qa} \sket{\lambda}_s &= \left( \lambda\, \sket{g}_{qa} - \sqrt{1-\lambda^2} \sket{\bot^\lambda}_{qa} \right)\sket{\lambda}_s
    \,,
\end{align}
with
\begin{align}
    \sket{\bot^\lambda}_{qa} = \sket{-}_q \sket{\bot^\lambda}_a
    \,,
\end{align}
verifying that \cref{eq:Whdef} gives indeed a qubitized BE of $H$. 
The circuits for operators $W_H$, $U'_{qas}$, and $R_{qa}$ are shown in~\cref{fig:BEW,fig:R}.

\end{document}